\documentclass[journal]{IEEEtran}

\IEEEoverridecommandlockouts


\ifCLASSINFOpdf
   \usepackage[pdftex]{graphicx}
  \graphicspath{{figures/}}
  \usepackage{epstopdf}
  \DeclareGraphicsRule{.eps}{pdf}{.pdf}{`epstopdf --gsopt=-dPDFSETTINGS=/prepress #1}
  \DeclareGraphicsExtensions{.pdf,.jpeg,.png}
\else
  \usepackage[dvips]{graphicx}
   \graphicspath{{figures/}}
   \DeclareGraphicsExtensions{.eps}
\fi

\usepackage{pgfplots}
\pgfplotsset{compat=newest}
\usetikzlibrary{plotmarks}
\usetikzlibrary{arrows.meta}
\usepgfplotslibrary{patchplots}
\usepackage{grffile}
\pgfplotsset{plot coordinates/math parser=false}
\newlength\figureheight
\newlength\figurewidth

\usepackage{pdfpages} 

\usepackage{stfloats}
\usepackage{nicefrac}
\usepackage{amsthm}
\usepackage{amssymb}
\usepackage{pifont}

\usepackage{latexsym}
\usepackage{times}
\usepackage{bm} 
\usepackage{array}
\usepackage[cmex10]{amsmath}
\usepackage{multicol}
\usepackage{urwchancal}

\usepackage{booktabs}

%
\usepackage{tikz}
\usepackage{textcomp}
\usepackage{hyperref}
\usepackage{lipsum}

\newcommand\copyrighttext{%
  \footnotesize \textcopyright 2020 IEEE. Personal use of this material is permitted.
  Permission from IEEE must be obtained for all other uses, in any current or future
  media, including reprinting/republishing this material for advertising or promotional
  purposes, creating new collective works, for resale or redistribution to servers or
  lists, or reuse of any copyrighted component of this work in other works.
  DOI: 10.1109/TWC.2020.3027345
  }
\newcommand\copyrightnotice{%
\begin{tikzpicture}[remember picture,overlay]
\node[anchor=south,yshift=10pt] at (current page.south) {\fbox{\parbox{\dimexpr\textwidth-\fboxsep-\fboxrule\relax}{\copyrighttext}}};
\end{tikzpicture}%
}

\usepackage{eso-pic}

\usepackage{cleveref}

\usepackage{support-caption}
\usepackage[subrefformat=parens,labelformat=parens]{subcaption}

\captionsetup{font=footnotesize,labelfont={bf}}
\captionsetup[sub]{font=small,labelfont={bf}}
\usepackage{multirow}
\usepackage{xcolor}

\hyphenation{op-tical net-works semi-conduc-tor sup-press sub-carriers discon-tinuities}
\usepackage[nolist,printonlyused]{acronym}      

\usepackage{optidef} 
\usepackage{algpseudocode, algorithm}

\makeatletter
\algrenewcommand\ALG@beginalgorithmic{\footnotesize}
\makeatother
\algrenewcommand\algorithmiccomment[2][\normalsize]{{#1\hfill\(\triangleright\) #2}}
\usepackage{comment}
\usepackage{cite}
\usepackage{titlecaps}
\usepackage{setspace}

\usepackage{thmtools}

\newtheorem{theorem}{Theorem}

\newtheorem{lemma}{Lemma}
\newtheorem{definition}{Definition}

\usepackage{xpatch}
\iftrue 
\makeatletter
\xpatchcmd{\proof}{\@addpunct{.}}{\normalfont\,\@addpunct{:}}{}{}
\makeatother
\fi

\DeclareMathAlphabet\mathbfcal{OMS}{cmsy}{b}{n}

\begin{acronym}[LTE-Advanced]
  \acro{2G}{Second Generation}
  \acro{3G}{3$^\text{rd}$~Generation}
  \acro{3GPP}{3$^\text{rd}$~Generation Partnership Project}
  \acro{4G}{4$^\text{th}$~Generation}
  \acro{5G}{5$^\text{th}$~Generation}
  \acro{5GPPP}{5G Infrastructure Public Private Partnership}
  \acro{ACLR}{adjacent channel leakage ratio}
  \acro{ADMM}{alternating direction method of multipliers}
  \acro{ARQ}{automatic repeat request}
  \acro{BER}{bit error rate}
  \acro{BLER}{block error rate}
  \acro{BPC}{binary power control}
  \acro{BPSK}{binary phase-shift keying}
  \acro{BRA}{balanced random allocation}
  \acro{BS}{base station}
  \acro{BW}{bandwidth}
  \acro{CADMM}{\ac{CSI}-aware \ac{ADMM}}
  \acro{CAP}{combinatorial allocation problem}
  \acro{CAPEX}{capital expenditure}
  \acro{CBF}{coordinated beamforming}
  \acro{CP}{cyclic prefix}
  \acro{CS}{coordinated scheduling}
  \acro{CPOCS}{\ac{CSI}-aware \ac{POCS}}
  \acro{CRUISE-DAYS}{Convex RedUctIon of unwanted Spectral Emission with Davis And Yin Splitting}
  \acro{cruise}{convex reduction of unwanted spectral emission}
  \acro{CRUISE-SSP}{\ac{CRUISE}-\ac{SSP}}
  \acro{CSI}{channel state information}
  \acro{CSIT}{channel state information at the transmitter}
  \acro{D2D}{device-to-device}
  \acro{DCA}{dynamic channel allocation}
  \acro{DCI}{downlink control information}
  \acro{DE}{differential evolution}
  \acro{DFT}{discrete Fourier transform}
  \acro{DIST}{Distance}
  \acro{DL}{downlink}
  \acro{DMA}{double moving average}
  \acro{DMRS}{demodulation reference signal}
  \acro{D2DM}{D2D mode}
  \acro{DMS}{D2D mode selection}
  \acro{DROPE}{Douglas-Rachford-based \ac{OOBE} power reduction with \ac{EVM} constraint}
  \acro{DRS}{Douglas-Rachford splitting}
  \acro{DYS}{Davis-Yin splitting}
  \acro{DPC}{dirty paper coding}
  \acro{DR}{Douglas-Rachford}
  \acro{DRA}{dynamic resource assignment}
  \acro{DSA}{dynamic spectrum access}
  \acro{eMBB}{enhanced mobile broadband}
  \acro{eV2X}{enhanced vehicle-to-everything}
  \acro{EADMM}{\ac{EVM}-constrained \ac{ADMM}}
  \acro{EIRP}{equivalent isotropically radiated power}
  \acro{EPOCS}{\ac{EVM}-constrained \ac{POCS}}
  \acro{ETSI}{European telecommunications standards institute}
  \acro{EVM}{error vector magnitude}
  \acro{EMSP}{EVM-constrained \ac{MSP}}
  \acro{ENSP}{EVM-constrained \ac{NSP}}
  \acro{ESP}{EVM-constrained spectral precoding}
  \acro{ESSP}{EVM-constrained \ac{SSP}}
  \acro{FD}{fully digital}
  \acro{FDD}{frequency division duplexing}
  \acro{GFBS}{generalized forward-backward splitting}
  \acro{HARQ}{hybrid automatic repeat request}
  \acro{HB}{hybrid beamforming}
  \acro{KKT}{Karush-Kuhn-Tucker}
  \acro{KPI}{key performance indicator}
  \acro{IFFT}{Inverse (Fast) Discrete Fourier Transform}
  \acro{InC}{in-coverage}
  \acro{IoT}{internet of things}
  \acro{ITS}{intelligent transportation systems}
  \acro{LDPC}{low-density parity-check code}
  \acro{LS}{large-scale}
  \acro{LS-MSP}{large-scale MSP}
  \acro{LTE}{long term evolution}
  \acro{MAC}{medium access control}
  \acro{MCS}{modulation and coding scheme}
  \acro{METIS}{Mobile Enablers for the Twenty-Twenty Information Society}
  \acro{MIMO}{multiple-input multiple-output}
  \acro{MIMO-OFDM}{\ac{MIMO}-\ac{OFDM}}
  \acro{MISO}{multiple-input single-output}
  \acro{MRC}{maximum ratio combining}
  \acro{MS}{mobile station}
  \acro{MSE}{mean squared error}
  \acro{MSP}{mask-compliant spectral precoder}
  \acro{MMSE}{minimum mean squared error}
  \acro{MTC}{machine type communications}
  \acro{MU-MIMO-OFDM}{multi-user \ac{MIMO}-\ac{OFDM}}
  \acro{mMTC}{massive machine type communications}
  \acro{cMTC}{critical machine type communications}
  \acro{NR}{New Radio}
  \acro{NSP}{notching spectral precoder}
  \acro{NSPS}{national security and public safety}
  \acro{NWC}{network coding}
  \acro{OFDM}{orthogonal frequency division multiplexing}
  \acro{OOB}{out-of-band}
  \acro{OOBE}{out-of-band emissions}
  \acro{OoC}{out-of-coverage}
  \acro{PDCCH}{physical downlink control channel}
  \acro{PDSCH}{physical downlink shared channel}
  \acro{POCS}{projection on convex sets}
  \acro{PPG}{proximal-proximal-gradient}
  \acro{PRB}{physical resource block}
  \acro{PSBCH}{physical sidelink broadcast channel}
  \acro{PSFCH}{physical sidelink feedback channel}
  \acro{PSCCH}{physical sidelink control channel}
  \acro{PSP}{proximal splitting based mask compliant spectral precoding}
  \acro{PSSCH}{physical sidelink shared channel}
  \acro{PHY}{physical}
  \acro{PLNC}{physical layer network coding}
  \acro{PRB}{physical resource block}
  \acro{PSD}{power spectral density}
  \acro{QAM}{quadrature amplitude modulation}
  \acro{QCQP}{quadratic constrained quadratic programming}
  \acro{QoS}{quality of service}
  \acro{QPSK}{quadrature-phase shift keying}
  \acro{PaC}{partial coverage}
  \acro{RAISES}{reallocation-based assignment for improved spectral efficiency and satisfaction}
  \acro{RAN}{radio access network}
  \acro{RA}{resource allocation}
  \acro{RAT}{radio access technology}
  \acro{RB}{resource block}
  \acro{RF}{radio frequency}
  \acro{RSRP}{reference signal received power}
  \acro{Rx}{receive}
  \acro{RX}{receive}
  \acro{RxEVM}{receive-EVM}
  \acro{Rx-TxEVM}{received \ac{TxEVM}}
  \acro{SC-FDM}{single carrier frequency division modulation}
  \acro{SEM}{spectral emission mask}
  \acro{SFBC}{space-frequency block coding}
  \acro{SCI}{sidelink control information}
  \acro{SIC}{successive interference cancellation}
  \acro{SINR}{signal-to-interference-plus-noise ratio}
  \acro{SISO}{single-input single-output}
  \acro{SL}{sidelink}
  \acro{SP}{Spectral precoding}
  \acro{SNR}{signal-to-noise ratio}
  \acro{SSP}{semi-analytical spectral precoding}
  \acro{STC}{space-time coding}
  \acro{SU}{single-user}
  \acro{SU-MIMO}{single-user \ac{MIMO}}
  \acro{SU-MIMO-OFDM}{single-user \ac{MIMO}-\ac{OFDM}}
  \acro{TDD}{time division duplexing}
  \acro{Tx}{transmit}
  \acro{TX}{transmit}
  \acro{TxEVM}{transmit-EVM}
  \acro{UE}{user equipment}
  \acro{UL}{uplink}
  \acro{UP}{uniform power}
  \acro{URLLC}{ultra-reliable and low latency communications}
  \acro{VUE}{vehicular user equipment}
  \acro{V2N}{vehicular-to-network}
  \acro{V2X}{vehicle-to-everything}
  \acro{V2V}{vehicle-to-vehicle}
  \acro{V2P}{vehicle-to-pedestrian}
  \acro{WMMSE}{weighted minimum mean squared error}
  \acro{wMMSE}{weighted minimum mean squared error}
  \acro{ZF}{zero forcing}
  \acro{ZMCSCG}{zero mean circularly symmetric complex Gaussian}
\end{acronym}

\newcommand{\squeezeup}{\vspace{-2.5mm}}
\usepackage{etoolbox}

\setlength{\belowdisplayskip}{6pt} \setlength{\belowdisplayshortskip}{6pt}
\setlength{\abovedisplayskip}{6pt} \setlength{\abovedisplayshortskip}{6pt}

\allowdisplaybreaks


\begin{document}

\title{EVM-Constrained and Mask-Compliant \\ MIMO-OFDM Spectral Precoding}
\author{Shashi~Kant,~\IEEEmembership{Student Member,~IEEE,} Mats~Bengtsson,~\IEEEmembership{Senior Member,~IEEE,}  Gabor~Fodor,~\IEEEmembership{Senior Member,~IEEE,} Bo~G\"oransson,~\IEEEmembership{Member,~IEEE,} and Carlo~Fischione,~\IEEEmembership{Senior~Member,~IEEE}
\thanks{S.\ Kant, B.\ G\"oransson, and G.\ Fodor are with Ericsson AB and KTH Royal Institute of Technology, Stockholm, Sweden (e-mail: \{shashi.v.kant, bo.goransson, gabor.fodor\}@ericsson.com)}
\thanks{M.\ Bengtsson and C.\ Fischione are with KTH Royal Institute of Technology, Stockholm, Sweden (e-mail: \{mats.bengtsson@ee.kth.se, carlofi@kth.se\})}
\thanks{The work of S.\ Kant was supported in part by the Swedish Foundation for Strategic Research under grant ID17-0114.}
\thanks{Part of this paper on EVM-unconstrained OFDM spectral precoding was presented at IEEE SPAWC 2019~\cite{KantSPAWC2019}.}
}

\maketitle
\copyrightnotice


\renewcommand\qedsymbol{$\blacksquare$}

\newcounter{subeqsave}
\newcommand{\savesubeqnumber}{\setcounter{subeqsave}{\value{equation}}%
\typeout{AAA\theequation.\theparentequation}}
\newcommand{\recallsubeqnumber}{%
  \setcounter{equation}{\value{subeqsave}}\stepcounter{equation}}

\iftrue
\renewcommand{\vec}[1]{\ensuremath{\boldsymbol{#1}}}
\newcommand{\mat}[1]{\ensuremath{\boldsymbol{#1}}}
\newcommand{\herm}{{\rm H}}
\newcommand{\tran}{{\rm T}}
\newcommand{\trans}{{\rm T}}
\newcommand{\trace}{{\rm Tr}}
\newcommand{\diag}{{\rm diag}}
\newcommand{\Diag}{{\rm Diag}}
\newcommand{\rank}{{\rm rank}}
\newcommand{\EVM}{{\rm EVM}}
\newcommand{\SNR}{{\rm SNR}}
\newcommand{\SINR}{{\rm SINR}}
\newcommand{\expect}{\mathbb{E}}
\newcommand{\Cm}{\mathbb{C}}
\newcommand{\Rm}{\mathbb{R}}
\newcommand{\CN}{\mathcal{CN}}
\newcommand{\be}{\begin{equation}}
\newcommand{\ee}{\end{equation}}
\newcommand{\pdf}{\mathcal{P}}
\newcommand{\prox}{\ensuremath{\boldsymbol{\rm{prox}}}}
\newcommand{\rprox}{\ensuremath{\boldsymbol{\operatorname{rprox}}}}
\newcommand{\proj}{\ensuremath{\boldsymbol{\rm{proj}}}}
\newcommand{\dom}{\rm{dom}}
\newcommand{\epi}{\rm{epi}}
\newcommand{\ie}{\textit{i.e.}}
\newcommand{\eg}{\textit{e.g.}}
\newcommand{\cf}{\textit{cf.}}
\newcommand{\avgTxEVM}{\ensuremath{\epsilon_{\rm avg}}}
\newcommand{\TxEVM}{\ensuremath{\bm{\epsilon}}}
\newcommand{\Ind}{\mathcal{X}} 

\newcommand{\vecOp}{\rm{vec}}
\newcommand{\unvecOp}{\rm{unvec}}

\newcommand{\NR}{\ensuremath{N_{\rm R}}}
\newcommand{\NT}{\ensuremath{N_{\rm T}}}
\newcommand{\NL}{\ensuremath{N_{\rm L}}}
\newcommand{\NCP}{\ensuremath{N_{\rm CP}}}
\newcommand{\Nsc}{\ensuremath{N_{\rm SC}}}
\newcommand{\RxEVM}{\ensuremath{\bm{\varsigma}}}

\newcommand{\A}{\ensuremath{\boldsymbol{A}}}
\renewcommand{\a}{\ensuremath{\boldsymbol{a}}}
\newcommand{\B}{\ensuremath{\boldsymbol{B}}}
\renewcommand{\b}{\ensuremath{\boldsymbol{b}}}
\newcommand{\C}{\ensuremath{\boldsymbol{C}}}
\renewcommand{\c}{\ensuremath{\boldsymbol{c}}}

\newcommand{\D}{\ensuremath{\boldsymbol{D}}}
\renewcommand{\d}{\ensuremath{\boldsymbol{d}}}
\newcommand{\E}{\ensuremath{\boldsymbol{E}}}

\newcommand{\F}{\ensuremath{\boldsymbol{F}}}

\newcommand{\K}{\ensuremath{\boldsymbol{K}}}

\newcommand{\M}{{\mathbf{M}}}
\newcommand{\I}{\ensuremath{\boldsymbol{I}}}

\newcommand{\R}{\ensuremath{\boldsymbol{R}}}

\renewcommand{\S}{\ensuremath{\boldsymbol{S}}}

\newcommand{\T}{\ensuremath{\boldsymbol{T}}}
\renewcommand{\t}{\ensuremath{\boldsymbol{t}}}

\newcommand{\U}{\ensuremath{\boldsymbol{U}}}
\renewcommand{\u}{\ensuremath{\boldsymbol{u}}}

\newcommand{\V}{\ensuremath{\boldsymbol{V}}}
\renewcommand{\v}{\ensuremath{\boldsymbol{v}}}

\newcommand{\X}{\ensuremath{\boldsymbol{X}}}

\newcommand{\Y}{\ensuremath{\boldsymbol{Y}}}

\newcommand{\0}{\ensuremath{\boldsymbol{0}}}
\fi

\def\baselinestretch{1}
\def\skred#1{\textcolor{red}{#1}}
\def\skblue#1{\textcolor{blue}{#1}}
\def\skteal#1{\textcolor{teal}{#1}}
\def\skblack#1{\textcolor{black}{#1}}
\def\gf#1{\textcolor{blue}{#1}}
\def\bg#1{\textcolor{green}{#1}}
\def\mb#1{\textcolor{magenta}{#1}}
\def\cf#1{\textcolor{teal}{#1}}
\newcommand{\mx}[1]{\mathbf{#1}}

\DeclareRobustCommand{\BigOh}{%
  \text{\usefont{OMS}{cmsy}{m}{n}O}%
}

\def \hfillx {\hspace*{-\textwidth} \hfill}

\algblock{ParFor}{EndParFor}
\algnewcommand\algorithmicparfor{\textbf{parfor}}
\algnewcommand\algorithmicpardo{\textbf{do}}
\algnewcommand\algorithmicendparfor{\textbf{end\ parfor}}
\algrenewtext{ParFor}[1]{\algorithmicparfor\ #1\ \algorithmicpardo}
\algrenewtext{EndParFor}{\algorithmicendparfor}

\begin{abstract} 
Spectral precoding is a promising technique to suppress out-of-band emissions and comply with {leakage constraints} over adjacent frequency channels and with mask requirements on the unwanted emissions.  However, spectral precoding may distort the original data vector, which is formally expressed as the error vector magnitude (EVM) between {the} precoded and original data vectors. Notably, EVM has a deleterious impact on the performance of  
multiple-input multiple-output orthogonal frequency division {multiplexing-based} systems. 
In this paper we propose a novel spectral precoding approach which constrains the EVM while complying with the mask requirements. 
We first formulate and solve the EVM-unconstrained mask-compliant spectral precoding problem,
which serves as a springboard to the design of two EVM-constrained spectral precoding schemes.
The first scheme takes into account a wideband EVM-constraint which limits the average in-band
distortion. 
The second scheme takes into account frequency-selective EVM-constraints, and consequently,
limits the signal distortion at the subcarrier level.
Numerical examples illustrate that both proposed schemes outperform previously developed schemes in 
terms of important performance indicators such as block error rate and system-wide throughput
while complying with spectral mask and EVM constraints.

\begin{IEEEkeywords}
Sidelobe suppression, spectral precoding, MIMO, OFDM, EVM, out-of-band emissions, ACLR, Consensus ADMM, Douglas-Rachford Splitting.
\end{IEEEkeywords}

\IEEEpeerreviewmaketitle

\end{abstract}

\squeezeup
\section{Introduction}

Modern wireless communication {systems}, including fifth-generation (5G) {\ac{NR}}, {adopt} orthogonal frequency division multiplexing \acused{OFDM}(\ac{OFDM}) {with cyclic prefix} \cite{Dahlman5GNR2018}. 
The reasons are that \ac{OFDM} has several attractive characteristics such as robustness to the negative effects of time dispersive channels and multi-path fading, simplicity in terms of equalization 
and flexibility in terms of supporting both low and high symbol rates and thereby supporting
a variety of quality of service requirements. Further, \ac{OFDM}-based systems  can facilitate dynamic spectrum~sharing~\cite{Setoodeh:17}. 

Unfortunately, \ac{OFDM} suffers from high \ac{OOBE} 
due to the
discontinuities at the boundaries of the rectangular window and the high sidelobes
associated with the sinc functions  of the \ac{OFDM} signal \cite{Prasad:04}. 
The \ac{OOBE} must be adequately suppressed since high \ac{OOBE} causes significant interference to the neighbouring adjacent channels. 
In practice, all \ac{OFDM} systems are designed to 
comply not only with \ac{OOBE} requirements, in terms of \ac{ACLR} and spectral emission mask, but also in-band requirements, in terms of \ac{EVM} and other signal demodulation/detection requirements~\cite{3GPPTS38.1042018NRReception}. {Simply, the \ac{EVM} describes the distortion/noise incurred to the useful (transmit) symbols and \ac{ACLR} represents the amount of undesired power that exists in the neighbouring carriers relative to the desired carrier power---the detailed description of these metrics are in the subsequent~sections.}

\iftrue
There are a plethora of techniques to suppress/reduce \ac{OOBE} for the cyclic-prefix based \ac{OFDM}---see, \eg, \cite{Huang_OOBE_unified:2015} and its references---which can be categorized into time and frequency domain methods. Amongst them, the methods are, namely, guard band inclusion, filtering \cite{Faulkner:2000}, windowing \cite{Bala_windowing:2013}, cancellation carriers \cite{Yamaguchi:2004, Brandes:2005, extended_AIC__Qu:2010}, and spectral precoding \cite{Cosovic:2006, DeBeek2009SculptingPrecoder,Chen_spectral_precoding:2011,Tom2013MaskShaping,Kumar2015WeightedRadio,Mohamad:18,KantSPAWC2019,Mohamad_PhDThesis2019}.

{Spectral precoding is one of the promising {bandwidth efficient} techniques for \ac{OOBE} reduction.} It spectrally precodes the data symbols before \ac{OFDM} modulation~\cite{DeBeek2009SculptingPrecoder,Mohamad:18,Mohamad_PhDThesis2019}, which reduces the \ac{OOBE}
without {using the extra spectral resources contrary to cancellation carriers and without} increasing the delay/time dispersion or penalizing the cyclic prefix of the transmitted signal {unlike filtering/windowing. In Section~\ref{sec:related_works_and_contributions} we review related works in addition to our contribution.}

In this paper, we develop spectral precoding schemes that comply with \ac{EVM}-constraints
and simultaneously meet \ac{OOBE} requirements in terms of mask and \ac{ACLR}, without affecting signal processing at the receiver. 
Our objective is to design low complexity spectral precoding algorithms that operate well 
in wide-band \ac{MIMO}-\ac{OFDM} systems. In the first part of the paper, we consider the problem of designing mask-compliant spectral precoding
without \ac{EVM} constraints. {In the second part of the paper, we propose to incorporate wideband and frequency-selective \ac{EVM} constraints in addition to mask compliance.} 
{Our goal is to improve key performance indicators, including
out-of-band and in-band performance metrics, because highly spectral efficient \ac{MIMO}-\ac{OFDM}-based systems require low \ac{EVM}.}
Accordingly, we introduce a wideband \ac{EVM} constraint, which 
restricts the wideband average in-band distortion due to spectral precoding.
Finally, the     
frequency-selective \ac{EVM} constraint offers more flexibility---{but at the expense of increased complexity}---than using wideband constraints, 
because the frequency-selective constraints limit the in-band distortion power at the subcarrier (or group of subcarriers) level.
As we will show and discuss, these design approaches represent {at least} three distinct ways of addressing the complexity-performance
trade-off in the design of \ac{MIMO}-\ac{OFDM} spectral precoders.

{The rest of the paper is structured as follows. The next section{, which can be skipped by the reader familiar with the subject,} discusses related work {including impact of \ac{EVM}} and states our contribution. 
Section~\ref{sec:prelim} discusses preliminaries and introduce the system model. Next, Section~\ref{sec:evm_unconstrained_fom_algos} formulates the \ac{EVM}-unconstrained problem and develops an \ac{ADMM}-based algorithm \cite{Combettes2011, Boyd2011} and a coordinate-descent-based algorithm to solve this unconstrained problem. Section~\ref{sec:tx_evm_constrained__csi_unaware__ls_msp__mimo_ofdm} formulates the \ac{EVM}-constrained problem and develops an \ac{ADMM}-based solution and yet an alternative approach based on the Douglas-Rachford algorithm \cite{Eckstein_DRS:92,Combettes2011}. Section \ref{sec:simulation_results} presents 
simulation results that compare the performance of the proposed schemes with that of benchmarking schemes, and finally Section~\ref{sec:conclusion_future_work} concludes the~paper.}   
\section{{Related Works} and Contributions} \label{sec:related_works_and_contributions}
{In this section, we first give an overview of the related frequency domain techniques, then the impact of \ac{EVM} on the system performance, and finally highlight our original contribution.  } 

{
{\subsection{Cancellation Carrier Techniques}}}
{
The cancellation carrier techniques, such as active interference cancellation \cite{Yamaguchi:2004}, cancellation carrier with power constraint \cite{Brandes:2005}, and extended active interference cancellation \cite{extended_AIC__Qu:2010}, utilize the non-data bearing subcarriers solely for \ac{OOBE} reduction. These methods generally offer good sidelobe suppression with relatively low complexity, at the expense of transmit power wastage and spectral resources. Besides \ac{SNR} degradation due to transmit power sharing with the non-data bearing cancellation carriers, the extended active interference cancellation \cite{extended_AIC__Qu:2010} induces intersymbol and intercarrier interferences to the desired/useful signal due to the placement of the non-orthogonal carriers. Additionally, there are several practical implementation challenges, \eg, many cancellation carriers are required to maintain appropriate \ac{PSD} of the composite signals to avoid the negative impact of intermodulation due to nonlinear components \cite{Pozar_rf_microwave:2011}, \eg, power amplifier, and consequently failing to meet other \ac{OOBE} requirements---in terms of mask and \ac{ACLR} defined in \ac{NR}-like standards~\cite{3GPPTS38.1042018NRReception}.   
}

{\subsection{Spectral Precoding Techniques}}
There are {many} variants of spectral precoding methods in the literature to suppress \ac{OOBE}. We briefly review some of these techniques subsequently. 

{\subsubsection{Unconstrained Spectral Precoding in \ac{SISO} Systems}}
In the notching spectral precoder \cite{DeBeek2009SculptingPrecoder}, the spectrally precoded signal essentially nulls/notches the \ac{OOBE} at given discrete frequencies in the out-of-band spectrum of the \ac{OFDM} signal. The precoder is obtained in the closed-form by solving equality constrained least-squares optimization problem. Notching the spectrum at well-selected discrete frequencies often suppresses the whole signal spectrum and consequently leads to \ac{OOBE} reduction~\cite{DeBeek2009SculptingPrecoder}. However, such a notching approach has a deleterious impact on the in-band performance, in terms of increased \ac{EVM} and   block  error  rate, which thereby penalizes the system-wide throughput. Notably, the edge subcarriers have very high \ac{EVM}---see Fig.~\ref{fig:fig1__txevm_unconstrained_aclrevm_vs_iter__psd_evm_final_iter}\subref{fig:fig4_evm_in_percent_vs_prb_sem2_sem1_final_iter} in the numerical results Section~\ref{subsec:simulation_results}---which decreases the throughput. Recognizing these problems, in~\cite{Kumar2015WeightedRadio} a weighted-notching spectral precoder method is proposed to reduce the \ac{EVM} at the edge subcarriers by spreading the total distortion over all the allocated subcarriers. Another extension of {the} notching spectral precoder, proposed in \cite{Mohamad:16}, performs precoding jointly over several consecutive \ac{OFDM} symbols which improves the in-band performance slightly over the single \ac{OFDM} symbol-based precoding but penalizes the latency since a large number of \ac{OFDM} symbols have to be buffered.

\subsubsection{Constrained Spectral Precoding}
On the contrary, \ac{MSP} improves the in-band performance over notching spectral precoder while only meeting a target mask instead of creating nulls at the selected discrete frequencies at the expense of increased complexity~\cite{Kumar2016,KantSPAWC2019}. Because \ac{MSP} is posed as an inequality constrained convex optimization formulation, it typically does not yield a closed-form solution. Furthermore, the authors of \cite{Tom2013MaskShaping} suggest utilizing a generic optimization solver, which generally employs interior-point methods \cite{Boyd2004ConvexOptimization}.
Therefore, the authors in~\cite{Kumar2016,KantSPAWC2019},~propose computationally efficient schemes to obtain mask-compliant spectrally precoded data symbols for \ac{SISO}-\ac{OFDM} systems. The authors in \cite{VanDeBeek2009EVM-constrainedEmission}  propose three \ac{EVM}-constrained precoders for \ac{SISO} systems, whereby two precoders are developed heuristically, and one of them is posed as a convex optimization problem. {The optimization-based precoder is obtained by minimizing the $\ell_2$-norm of the \ac{OOBE} at a chosen set of discrete frequency points subject to the \ac{EVM} constraint---yielding no closed-form solution. The former precoders are ad~hoc, but can be seen as a scaled-form of the notching spectral precoder such that they meet the \ac{EVM} constraint but penalize the \ac{OOBE} in an unsystematic way.} Furthermore, the authors show numerically that the ad~hoc precoders render superior performance compared with the optimization-based precoder in suppressing \ac{OOBE} under \ac{EVM} constraint. 

\subsubsection{Spectral Precoding in \ac{MIMO} Systems}
In \cite{Pitaval:17b} several linear receivers are investigated when notching spectral precoders is employed in a \ac{MIMO}-\ac{OFDM} system. In \cite[Paper~F]{Mohamad_PhDThesis2019}, the authors extend the \ac{SISO}-\ac{OFDM} spectral precoding to massive MIMO-OFDM for a joint spatial and notching spectral precoding, which exploits full downlink channel state information at the transmitter due to channel reciprocity in time division duplexing, to improve the in-band performance at the receiver. {Recent works analyzed the impact of the \ac{EVM}
at the receiver or over-the-air specifically in massive \ac{MIMO} context. 
These works reveal that in beamforming systems 
the \ac{EVM} is also beamformed along the same direction as the data, see \eg,~\cite{Moghadam_distortion_corr_mimo:12,Sienkiewicz_spatially_dependent:14,Larsson:18,Moghadam:18}.}   

\subsection{{Impact of \ac{EVM} on the System Performance}}
{
In practical transceivers, the hardware imperfections stem from both the digital and the analogue components. 
These may include clip noise, filter ripple or distortion, in-phase and quadrature (IQ) mismatch, non-linearity, local oscillator inducing phase noise, sampling clock offsets, timing and frequency error, see, \eg, \cite{Fettweis_dirty_rf:2005, Schenk:2008}. 
These imperfections have detrimental impact on the \ac{MIMO}-\ac{OFDM} performance, notably in high data rate achieving systems \cite{Goransson:2008,Studer_mimo_residual_txRF:2010, Studer_evm_system_level_mimo:2011,  Emil_maMIMO_book:2017, E_Bjornson__maMimo__non_ideal_hw:2014, Boulogeorgos:2019}. 
In practice, \ac{EVM} metric evaluation provides insightful and useful information on the link quality, in terms of \ac{SNR}, seen at the receiver due to the aggregated digital and analogue hardware imperfections~\cite{Hassun:1997, Mahmoud:2009, Tobias:2013}. {The \ac{NR}-like standards mandate (transmit) \ac{EVM} requirements, {see Table~\ref{table:nr_wbevm_requirements},} on the composite inband distortion emanating from the various sources but measured at the (ideal) receiver by using pilots or reference signals as described in, \eg,~\cite[Appendix~B]{3GPPTS38.1042018NRReception}.}
}

{
To minimize the impact of \ac{EVM} onto the system throughput performance, in general, there are at least two approaches: 
1) constraining the \ac{EVM} at the transmitter and 
2) mitigating or cancelling the (transmit) \ac{EVM} seen at the receiver.   
We pursue the former approach because \ac{NR}-like standards \cite{3GPPTS38.1042018NRReception, 3GPPTS38.101-4_NRUEdemod} stipulate the minimum \ac{EVM} requirements according to the considered modulation alphabet, which must be satisfied by the base station or user equipment devices.   
Hence, in this paper, we constrain the \ac{EVM} of spectral precoding in addition to mask compliance (which implicitly may ensure the fulfilment of minimum \ac{ACLR} requirements).
}

\subsection{Contribution of the Paper}
{In this paper, we design computationally efficient algorithms for the mask-compliant spectral precoding with (un)constrained \ac{EVM} for the \ac{MIMO}-\ac{OFDM}-based systems that scale linearly with the number of supported transmit antennas and do not require additional signal processing at the receiver. More specifically, }
\begin{itemize}
    \item {w}e propose and accomplish a wideband and frequency-selective \ac{EVM}-constrained and mask-compliant spectral precoding formulation. 
    \item {w}e develop two highly efficient algorithms by decomposing the large-scale spectral precoding optimization problems for both unconstrained and constrained \ac{EVM} into subproblems, where each subproblem yields closed-form or efficient solution, which has a very low computational complexity compared to the general purpose solver.  
    {We propose solutions for (un)constrained \ac{EVM}: 1) \ac{ADMM}-based algorithms, referred to as \ac{ADMM}/\ac{EADMM}  and 2) specialized algorithms, dubbed as \ac{SSP}/\ac{ESSP}. Note that part of \cite{KantSPAWC2019} is used for the \ac{EVM}-unconstrained part of the present paper}.
    \item {w}e finally present exhaustive simulations using a 5G NR (Release-15 compliant) {inhouse} link-level simulator \cite{KantSPAWC2019}. 
\end{itemize}

\vspace{-3.5mm}
\subsection{Notation}
\vspace{-1.5mm}
Let the set of complex  and real numbers be denoted by $\mathbb{C}$ and $\mathbb{R}$, respectively. $\Re\{x\}$ denotes the real part of a complex number $x$. The $i$-th element of a vector $\vec{a} \! \in \! \mathbb{C}^{m \times 1}$ is denoted by ${a}[i] \! \in \! \mathbb{C}$, and element in the $i$-th row and $j$-th column of the matrix $\mat{A} \! \in \! \mathbb{C}^{m \times n}$ is denoted by $\mat{A}\left[i,j\right] \! \in \! \mathbb{C}$. The $i$-th row and $j$-th column vector of a matrix $\mat{A} \in \mathbb{C}^{m \times n}$ are represented as $\mat{A}\left[i,:\right] \! \in \! \mathbb{C}^{1 \times n}$ and $\mat{A}\left[:, j\right] \! \in \! \mathbb{C}^{m \times 1}$, respectively. An $i$-th higher order vector and matrix are denoted  as $\vec{x}[i] \! \in \! \mathbb{C}^{m \times 1}$ or $\vec{x}_i \! \in \! \mathbb{C}^{m \times 1}$ and $\mat{X}[i] \! \in \! \mathbb{C}^{m \times n}$. We form a matrix by stacking the set of higher order vectors $\left\{ \vec{a}[n]\! \in \! \mathbb{C}^{M \times 1} \right\}_{n=1}^N$ and $\left\{ \vec{b}[m] \! \in \! \mathbb{C}^{1 \times N} \right\}_{m=1}^M$ column-wise and row-wise as {$\mat{A} \!= \! \left[ \vec{a}[1],\ldots,\vec{a}[N] \right] \! \in \! \mathbb{C}^{M \times N}$ and $\mat{B} \! = \! \left[ \vec{b}[1];\ldots;\vec{b}[M] \right] \! \in \! \mathbb{C}^{M \times N} $}, respectively. The dimensions of the vectors/matrices are equally applicable for both $\mathbb{C}$ and $\mathbb{R}$. The transpose and conjugate transpose of a vector or matrix are denoted by $\left(\cdot\right)^{\rm T}$ and $\left(\cdot\right)^{\herm}$, respectively. The complex conjugate is represented by $\left(\cdot\right)^*$. The $K \! \times \! K$ identity matrix is written as $\vec{I}_K$. The expectation operator is denoted by $\expect\{\cdot\}$. An $i$-th iterative update is denoted by $(\cdot)^{(i)}$.

\section{Preliminaries} \label{sec:prelim}
In this section, we introduce the downlink \ac{MIMO}-\ac{OFDM} system model followed by performance metrics useful for the spectral precoding design.

\subsection{System Model and Out-of-Band Emissions} \label{sec:system_model}

We consider the \ac{OFDM}-based single-user \ac{MIMO} downlink, where the base station is equipped with $\NT$ \ac{Tx} antennas, and the \ac{UE} is equipped with $\NR$ \ac{Rx} antennas as depicted in Fig. \ref{fig:block_diag_mimo_ofdm_tx_rx}. Additionally, we reckon spatial multiplexing transmission scheme with $\NL \leq \min\left\{\NT,\NR\right\}$ spatial layers. 
\begin{figure*}[tp!]
    \centering
    \scalebox{0.625}{\includegraphics[trim=0mm 0mm 0mm 0.1mm,clip]{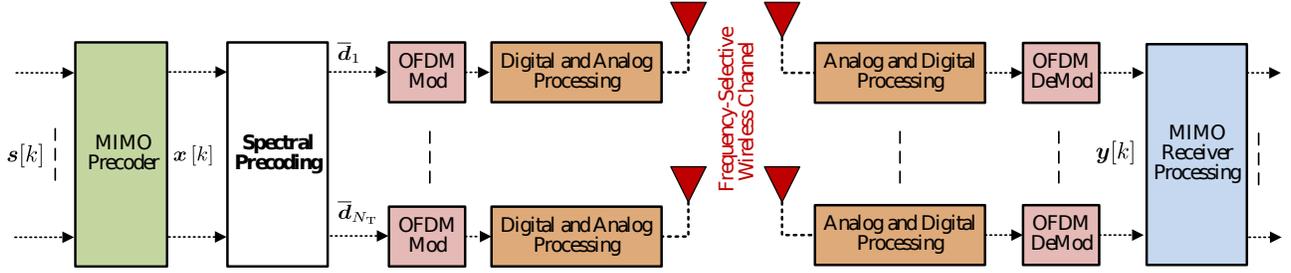}}
    \caption{\footnotesize Simplified block diagram of a single-user MIMO-OFDM transceiver with spectral precoding.}
    \label{fig:block_diag_mimo_ofdm_tx_rx}
    \vspace{-4mm}
\end{figure*}

A spatially precoded symbol vector ${\vec{x}}[k] $ at $k$-th subcarrier for a given \ac{OFDM} symbol can be formed by
\vspace{-1mm}
\be \label{eqn:general_tx_signal_model_per_re_without_distortion}
  {  {\vec{x}}[k]\, = \vec{W}[k] \vec{s}[k] \in \Cm^{\NT \times 1} }, 
  \vspace{-1mm}
\ee

\noindent where, the vector $\vec{s}[k] \in \mathcal{S}^{\NL\times 1}$ belongs to a complex-valued finite-alphabet set $\mathcal{S}$, \eg, corresponding to a $2^Q$-\ac{QAM} constellation with $Q\in\{2,4,6,8\}$. The given MIMO precoder $\vec{W}[k] \in \Cm^{\NT \times \NL}$ is chosen, \eg, from a codebook in 3GPP NR~\cite{3GPPTS38.214NR}. 

Towards this end, we introduce a frequency-domain (spatially-precoded) data matrix $\mat{X}$ by stacking each (column) vector $\vec{x}\left[k\right]$ (cf. Eq. \eqref{eqn:general_tx_signal_model_per_re_without_distortion}) column-wise for all the subcarriers within a given $N$-point inverse discrete Fourier transform such that 
\vspace{-1.5mm}
\be \label{eqn:tx_signal_freq_domain_matrix_X__without_distortion}
  \mat{X} \coloneqq \left[ \vec{x}\left[1\right], \ldots, \vec{x}\left[N\right] \right] \coloneqq \left[ \vec{d}_1^\trans; \ldots; \vec{d}_{\NT}^\trans \right] \in \Cm^{\NT \times N},  
  \vspace{-2mm}
\ee
\noindent where we define a spatially precoded vector $\vec{d}_j \!\in\! \Cm^{N \times 1}$ for $j$-th transmit antenna as
\vspace{-2.5mm}
\be \label{eqn:defnition_of_d_jth_ofdm_signal_in_freq_domain}
    \vec{d}_j \coloneqq \left(\mat{X}\left[j,: \right]\right)^\trans \in \Cm^{N \times 1} .
    \vspace{-2mm}
\ee
Similarly, we define a \textit{spectrally precoded} frequency-domain data matrix $\overline{\mat{X}}$  as
\vspace{-1.5mm}
\be \label{eqn:tx_signal_freq_domain_matrix_Xbar__without_distortion}
  \overline{\mat{X}} \coloneqq \left[  \overline{\vec{x}}\left[1\right], \ldots,  \overline{\vec{x}}\left[N\right] \right] \coloneqq \left[ \overline{\vec{d}}_1^\trans; \ldots; \overline{\vec{d}}_{\NT}^\trans \right]  \in \Cm^{\NT \times N},
  \vspace{-1.5mm}
\ee
where the spectrally (and spatially) precoded data symbol at $k$-th subcarrier is represented as $\overline{\vec{x}}\left[k\right]$. Further, we define a column vector $\overline{\vec{d}}_j \in \Cm^{N \times 1}$ of a spectrally precoded vector corresponding to $j$-th transmit antenna branch as
\vspace{-1.5mm}
\begin{align} \label{eqn:defnition_of_dbar_jth_ofdm_signal_in_freq_domain}
    \overline{\vec{d}}_j \coloneqq \left(\overline{\mat{X}}\left[j,: \right]\right)^\trans \in \Cm^{N \times 1} .
\end{align}

\subsection{Performance Metrics} \label{sec:performance_metrics}
{We utilize two figure-of-merits, \ie, \ac{OOBE} and inband distortions, for the spectral precoding design and its performance. 
The inband distortions are evaluated in terms of \ac{EVM} and also \ac{BLER} (or equivalently throughput). More specifically, we have employed \ac{EVM} metric for the spectral precoding design. Nonetheless, there exists a nonlinear mapping between \ac{EVM} and \ac{BLER} or equivalently received \ac{SNR}, see, \eg, \cite{Mahmoud:2009, Shafik_evm_ber_and_snr:2006}. Therefore, \ac{EVM} metric is sufficient for the spectral precoding design and performance evaluation.  
The \ac{OOBE} requirements are typically characterized by the spectral mask and \ac{ACLR}, among others \cite{3GPPTS38.1042018NRReception, 3GPPTS38.101-4_NRUEdemod}. Based on our practical experience, if the base station fulfils the spectral mask requirement with a suitable implementation margin then the \ac{ACLR} requirement is also achieved; and the converse is also true. Therefore, we have solely considered an appropriately discretized mask as a spectral precoding design parameter. However, for the numerical performance evaluation, we also evaluate \ac{ACLR} for completeness. }

\subsubsection{Out-of-Band Emissions} \label{sec:performance_metrics_oobe}
The \ac{OOBE} is typically quantified in terms of the operating band unwanted emissions and (conducted) \ac{ACLR}, whose definitions are given below. 

\begin{definition} [Operating band unwanted emissions {\cite[Section~6.6.4]{3GPPTS38.1042018NRReception}}, referred to as mask]
Unwanted emissions that are immediately outside the base station channel bandwidth resulting from the modulation process and non-linearity in the transmitter but excluding spurious emissions.
\end{definition}
The unwanted \ac{OOBE} due to the \ac{OFDM} frequency-domain signal $\vec{d}_j$ at the $M$ considered discrete frequency points $\vec{\nu}=\left[\nu_1,\ldots,\nu_{M}\right]$ can be described by $\vec{p}(\vec{\nu})\, = \mat{A}\vec{d}_j.$ 

We now define $\mat{A}[m,:] \coloneqq \vec{a}\left(\nu_m\right)^\trans \in \Cm^{1 \times N }$, 
where $\mat{A}\left[m,k\right] \coloneqq a(\nu_m,k)$ 
can be derived in discrete form as \cite{DeBeek2009SculptingPrecoder}: 
\vspace{-1.5mm}
\begin{align} 
\label{eqn:definition_of_a_nu_k_simplified_without_diric_function}
a(\nu_m,k) =&\left(\frac{1}{\sqrt{N}} \right) \exp\left(j\pi\frac{\left(\nu_m-k\right)}{N} \left(N_{\rm CP}-N+1\right)\right) \cdot \nonumber \\
		 &\cdot \frac{\sin{\left(\pi\frac{\left(\nu_m-k\right)}{N}\left(N+N_{\rm CP}\right)\right)}}{\sin{\left(\pi\frac{\left(\nu_m-k\right)}{N}\right)}} ,
\vspace{-1.5mm}
\end{align}
{where $N_{\rm CP}$ corresponds to cyclic prefix length in samples.}

\begin{definition} [Adjacent channel leakage ratio (ACLR) {\cite[Section~6.6.3]{3GPPTS38.1042018NRReception}}]
\ac{ACLR} is the ratio of the filtered mean power centred on the assigned channel frequency to the filtered mean power centred on an adjacent channel frequency{. The (worst-case) \ac{ACLR}  can mathematically be expressed as
\squeezeup
\begin{align*}
    {\rm \! ACLR} \!\coloneqq\! \frac{\!\int_{  -{\rm BW}/2}^{ {\rm BW}/2}  S_{\rm desired channel}\!\left( f \right) \: df \!}{\! \max\!\left\{\! \int\limits_{\frac{-3{\rm BW}}{2}}^{ \frac{-{\rm BW}}{2}} \! S_{\rm left channel}\!\left(\! f \right)  df \!, \!  \int\limits_{ \frac{{\rm BW}}{2}}^{ \frac{3{\rm BW}}{2} } \! S_{\rm right channel}\!\left(\! f \right) df \!\right\} \!}\!,
\end{align*}
where $S_{\rm desired channel}$ is the \ac{PSD} in the desired carrier having ${\rm BW}$ bandwidth including the guard band; and similarly, $S_{\rm left channel}$  and $S_{\rm right channel}$ correspond to the \ac{PSD} on the left and the right side of the desired carrier having same bandwidth ${\rm BW}$ as the desired carrier, respectively.}
\end{definition}

We would like to accentuate that we do not directly use \ac{ACLR} for the spectral precoding design, but rather we use the \ac{OOBE} power at the considered discrete frequency points.

\subsubsection{In-Band Distortion}
The considered in-band distortion for the spectral precoding design is \ac{EVM}, which can be quantified as a loss in the demodulated signal quality. It can be described mathematically per $j$-th transmit antenna as\footnote{\ac{EVM} measurements in NR standard are stipulated with a zero-forcing receiver, see, \eg, \cite[Annex~B]{3GPPTS38.1042018NRReception}, which can be seen as an equalized \ac{EVM} measure. On contrast, the \ac{EVM} for the spectral precoding design can be defined as an unequalized \ac{EVM} since the unequalized \ac{EVM} value would be conservative, \ie, it can not be less than the equalized \ac{EVM}, see, \eg, \cite{Karl_PhDThesis2019}.\label{footnote:evm_equalized_vs_unequalized}}
\begin{align}
{\rm EVM}_j \coloneqq \TxEVM_j = \frac{\expect{\left\{ \left\| \vec{d}_j - \overline{\vec{d}}_j \right\|_2^2  \right\}}}{\expect{\left\{ \left\| \vec{d}_j \right\|_2^2 \right\}}} .    
\end{align}

For the spectral precoding design, we assume that the average transmit signal power per transmit antenna branch $\expect{\left\{ \left\| \vec{d}_j \right\|_2^2 \right\}}$ is fixed.

\section{\ac{EVM}-Unconstrained \ac{LS-MSP} in MIMO-OFDM} \label{sec:evm_unconstrained_fom_algos}

In this section, we first consider a previously proposed mask-compliant spectral precoding without \ac{EVM} constraint that utilizes a generic convex optimization solver. It is known that it suffers from high computational complexity, notably in large-scale systems. Afterwards, to mitigate the complexity of computing the \ac{LS-MSP}, we propose a divide-and-conquer approach that breaks the original problem into smaller rank~1 quadratic-constraint problems, where each small problem yields a closed-form solution. Lastly, based on these solutions, we develop two specialized first-order low-complexity algorithms. In particular, the first one is based on the consensus \ac{ADMM} while the second one is derived by employing the coordinate descent scheme  of a dual variable and capitalizing on {closed-form} of the rank~1 constraint.

\squeezeup
\subsection{{EVM}-Unconstrained Proposed Problem Formulations}

The work in \cite{DeBeek2009SculptingPrecoder} was extended in \cite{Tom2013MaskShaping} for single-antenna \ac{OFDM}, referred to as \ac{MSP}, such that only the mask constraint needs to be fulfilled, \ie,
\vspace{-1mm}
\begin{align} \label{eqn:original_msp_optimization_problem}
\underset{\overline{\vec{d}}_j  \in \Cm^{N \times 1} }{\text{minimize}} 
\ \left\| \vec{d}_j - \overline{\vec{d}}_j \right\|_2^2 
\quad \text{subject to}
\ \left|\mat{A} \: \overline{\vec{d}}_j \right|^2 \preceq \vec{\gamma} , &
\end{align}
where the inequality constraint is element-wise and the target mask $\vec{\gamma} \in \mathbb{R}^{M \times 1} $ is given. The solution could not be expressed in (semi) closed-form and thereby the authors proposed to solve this \ac{MSP} problem via a generic quadratic programming solver~\cite{Tom2013MaskShaping}, \eg, CVX~\cite{Grant2014CVX:Beta}. 

Based on our key observation, we rewrite the problem \eqref{eqn:original_msp_optimization_problem} as described below, without any loss of convexity. More specifically, the constraint 
in \eqref{eqn:original_msp_optimization_problem} can be decomposed into 
$M$ rank~1 constraints such that it becomes
\vspace{-1mm}
\begin{alignat}{3}  \label{eqn:msp_problem_with_rank1_constraints}
&\underset{\overline{\vec{d}}_j}{{\text{minimize}}} 
\quad \left\| \vec{d}_j - \overline{\vec{d}}_j \right\|_2^2 \nonumber \\
& \text{{subject to}}
\quad \overline{\vec{d}}_j^\herm \; \overline{\mat{A}}_m \; \overline{\vec{d}}_j \leq \gamma_m,  \ \forall m \!=\!1\ldots,M,
\end{alignat}
\noindent where 
\begin{align} \label{eqn:definition_rank1_Am_matrix}
    { \overline{\mat{A}}_m = \vec{a}\left(\nu_m\right)^* \vec{a}\left(\nu_m\right)^\trans \in \Cm^{N \times N}  }
\end{align}
and $\rank\{\overline{\mat{A}}_m \} \!=\! 1$. Problem \eqref{eqn:msp_problem_with_rank1_constraints} is referred to as \ac{LS-MSP} that facilitates large-scale optimization.

\subsection{Efficient Algorithms for \ac{EVM}-Unconstrained \ac{LS-MSP}}

The proximal operator, $\prox$, is used for the algorithm design, {whose definition is given~below.}
\begin{definition}[proximal mapping~\cite{Parikh2013,Beck2017}] \label{definition:prox_operator}
Given a proper closed convex function $f: \dom_f \mapsto \left(-\infty\right., \left. +\infty \right]$, then the proximal mapping of $f$ is the operator given by 
\iftrue
\be
    \prox_{\lambda f}\left( \vec{x}\right) = \arg\min_{\vec{z} \in  \dom_f} \left\{ f(\vec{z}) + \frac{1}{2 \lambda} \left\|\vec{x} - \vec{z} \right\|_2^2\right\} 
    \vspace{-1mm}
\ee
\else
$\prox_{\lambda f}\left( \vec{x}\right) = \arg\min_{\vec{z} \in  \dom_f} \left\{ f(\vec{z}) + \frac{1}{2 \lambda} \left\|\vec{x} - \vec{z} \right\|_2^2\right\}$
\fi
for any $\vec{x} \in  \dom_f$, where $\dom_f$ corresponds to the {domain of function} $f$ and $\lambda > 0$.
\end{definition}

\begin{definition}[proximal mapping of the indicator function~\cite{Parikh2013,Beck2017}]
\label{definition:prox_operator_on_indicator_function}
Let $f: \dom_f \mapsto \left(-\infty\right., \left. +\infty \right]$ be an indicator function,  
$f(\vec{x}) \!\coloneqq\! \Ind_{\mathcal{C}}\left(\vec{x}\right)$ where $\mathcal{C}$ is a nonempty set $\Ind_{\mathcal{C}}\left(\vec{x}\right) \!=\! 0$ if $\vec{x} \! \in \! \mathcal{C}$ otherwise $\Ind_{\mathcal{C}}\left(\vec{x}\right) \!=\!  +\infty$,
then the proximal mapping of a given set $\mathcal{C}$ is an orthogonal projection operator {$\proj_{\mathcal{C}}$} onto the same set, \ie, 
\begin{align*}
    \prox_{\lambda \Ind_{\mathcal{C}}}\left( \vec{x}\right) 
    &= \arg\min_{\vec{z} \in  \dom_f} \left\{ \Ind_{\mathcal{C}}\left(\vec{x}\right) + \frac{1}{2 \lambda} \left\|\vec{x} - \vec{z} \right\|_2^2\right\} \\
    &= \arg\min_{\vec{z} \in  \mathcal{C}} \left\{\frac{1}{2} \left\|\vec{x} - \vec{z} \right\|_2^2\right\} = \proj_{\mathcal{C}}\left( \vec{x}\right).
\end{align*}
\end{definition}

\setlength{\textfloatsep}{3mm}
\begin{algorithm} [t]
\caption{\ac{ADMM}} \label{alg:consensus_admm_msp} 
 \begin{algorithmic}[1] 
 	\algrenewcommand\algorithmicrequire{\textbf{Inputs:}}
    \algrenewcommand\algorithmicensure{\textbf{Output(s):}}
 	\Require $\left\{ \vec{d}_j \in \mathbb{C}^{N \times 1} \right\}_{j=1}^{\NT}$, $\left\{ \gamma_m; \ \vec{a}\left(\nu_m\right) \right\}_{m=1}^M$,  {$\rho \in \mathbb{R}_{>0} $}
 	\Ensure  $\overline{\vec{d}}_j^{\left(I\right)} \in \mathbb{C}^{N \times 1} \quad \forall j=1,\ldots,\NT$ 
 \\ \textit{Initialization}: $\vec{y}_m^{(0)} = \vec{0}_{N \times 1}$ and $\vec{z}_m^{(0)} = \vec{0}_{N \times 1}$ 
  \For {$i = 1,2,\ldots,I$ } 
  \Statex
  \vspace{-4mm}
  \begin{subequations} \label{eqn:admm_algo_ver1}
  \begin{flalign} \label{eqn:psp_admm_algo_update_delta_d_ver1}
  	\overline{\vec{d}}_j^{\left(i\right)} \!&=\! \frac{1}{\left(1 \! + \! \rho M\right)} \left[ \vec{d}_j + \rho \sum_{m=1}^M \left(\vec{y}_m^{\left(i-1\right)} + \vec{z}_m^{\left(i-1\right)} \right)  \right] 
  	\end{flalign}
  	\Statex 
  	\vspace{-2mm}
    \ParFor {$m = 1,\ldots,M$ } \Comment{\% {run parallel}} 
  	\begin{flalign}
   \label{eqn:psp_admm_algo_update_delta_ym_ver1}
  	\vec{y}_m^{\left(i\right)} 
  	   &=  {\proj}_{\mathcal{C}_m}\left(\overline{\vec{d}}_j^{\left(i\right)} - \vec{z}_m^{\left(i-1\right)} \right) \\ 
  	\label{eqn:psp_admm_algo_update_delta_zm_ver1}   
  \vec{z}_m^{\left(i\right)}  &= \vec{z}_m^{\left(i-1\right)} + \vec{y}_m^{\left(i\right)} - \overline{\vec{d}}_j^{\left(i\right)}  
  \end{flalign} 
  \EndParFor
  \end{subequations}
  \EndFor
 \State \textbf{return}  $\overline{\vec{d}}_j^{(I)}$
\end{algorithmic} 
\end{algorithm}

If $M=1$, the orthogonal projection onto the rank~1 quadratic constraint is obtained in closed-form as described in the following theorem:

\begin{theorem}[projection onto the $\rank$ 1 quadratic constraint] \label{theorem:proj_rank1_quadratic_constraint}
Let $\mathcal{C} \subseteq \Cm^{N \times 1}$ and $\mathcal{C} \neq \emptyset$ given by $\mathcal{C} = \left\{\vec{x} \in  \Cm^{N \times 1}  \bm{:} \vec{x}^\herm {\widetilde{\mat{A}}} \vec{x} - b \leq  0 \right\}$, where {$ {\widetilde{\mat{A}}} \! = \! \vec{u} \vec{u}^\herm \in \Cm^{N \times N} $} is $\rank$ 1 matrix and $b \in \Rm_{\geq 0}$, then
the proximal operator 
\begin{align} 
\label{eqn:projection_operator_rank1_quadratic_constraint}
    &{\prox}_{\mathcal{X}_\mathcal{C}}\left( \vec{x} \right) = \proj_{\mathcal{C}} \left( \vec{x} \right)   \nonumber \\ 
    &\!=\! 
    \left\{ 
		 \begin{matrix}
		  \vec{x} \!+\! \left( \frac{\sqrt{b} \!-\! \left| \vec{u}^\herm \vec{x} \right|}{\left\| \vec{u} \right\|_2^2 \left| \vec{u}^\herm \vec{x} \right|} \right) \vec{u} \left( \vec{u}^\herm \vec{x} \right), \!       &  \text{if} \  \vec{x}^\herm {\widetilde{\mat{A}}} \vec{x} > b    \\
		  \vec{x},    \!    &  \text{if} \  \vec{x}^\herm {\widetilde{\mat{A}}} \vec{x} \leq b.    
		\end{matrix} 
	\right. 
\end{align}
\end{theorem}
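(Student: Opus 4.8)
The plan is to recognize that the rank~1 quadratic set collapses to a simple magnitude constraint and then to solve the resulting projection by a Lagrangian/KKT analysis, exploiting the Sherman--Morrison identity to invert the rank~1-perturbed system in closed form. First I would simplify the constraint: since ${\widetilde{\mat{A}}} = \vec{u}\vec{u}^\herm$, we have $\vec{x}^\herm {\widetilde{\mat{A}}} \vec{x} = |\vec{u}^\herm \vec{x}|^2$, so $\mathcal{C} = \{\vec{z} \in \Cm^{N\times 1} : |\vec{u}^\herm \vec{z}|^2 \le b\}$ is the preimage of a disk of radius $\sqrt{b}$ under the linear functional $\vec{z}\mapsto \vec{u}^\herm\vec{z}$; hence $\mathcal{C}$ is nonempty, closed and convex, and by Definition~\ref{definition:prox_operator_on_indicator_function} the projection $\proj_{\mathcal{C}}(\vec{x}) = \arg\min_{\vec{z}}\{\tfrac12\|\vec{z}-\vec{x}\|_2^2 : |\vec{u}^\herm\vec{z}|^2\le b\}$ exists and is unique. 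The inactive case is immediate: if $\vec{x}^\herm{\widetilde{\mat{A}}}\vec{x}\le b$ then $\vec{x}\in\mathcal{C}$ is trivially its own projection, yielding the second branch of \eqref{eqn:projection_operator_rank1_quadratic_constraint}.

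Next, for the active case $|\vec{u}^\herm\vec{x}|^2 > b$, I would write the Lagrangian $L(\vec{z},\mu)=\tfrac12\|\vec{z}-\vec{x}\|_2^2 + \mu(|\vec{u}^\herm\vec{z}|^2 - b)$ with multiplier $\mu\ge 0$. Because the problem is convex with a strictly feasible interior (Slater holds), the KKT conditions are necessary and sufficient. Setting the Wirtinger gradient in $\vec{z}^*$ to zero gives the stationarity equation $(\I + \mu\,\vec{u}\vec{u}^\herm)\vec{z}=\vec{x}$. The key step is to invert the rank~1 perturbation by Sherman--Morrison, $(\I+\mu\vec{u}\vec{u}^\herm)^{-1} = \I - \tfrac{\mu\,\vec{u}\vec{u}^\herm}{1+\mu\|\vec{u}\|_2^2}$, which delivers $\vec{z}=\vec{x}-\tfrac{\mu}{1+\mu\|\vec{u}\|_2^2}(\vec{u}^\herm\vec{x})\vec{u}$.

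Finally I would pin down $\mu$ via complementary slackness. Applying $\vec{u}^\herm$ to the expression for $\vec{z}$ gives $\vec{u}^\herm\vec{z} = (\vec{u}^\herm\vec{x})/(1+\mu\|\vec{u}\|_2^2)$, and since the constraint is active, $|\vec{u}^\herm\vec{z}|^2=b$ forces $1+\mu\|\vec{u}\|_2^2 = |\vec{u}^\herm\vec{x}|/\sqrt{b}$ (the positive root, which is consistent with $\mu\ge 0$ precisely when $|\vec{u}^\herm\vec{x}|>\sqrt{b}$). Substituting this value simplifies the scalar coefficient to $\tfrac{\mu}{1+\mu\|\vec{u}\|_2^2} = \tfrac{|\vec{u}^\herm\vec{x}|-\sqrt{b}}{\|\vec{u}\|_2^2\,|\vec{u}^\herm\vec{x}|}$, and hence $\vec{z}=\vec{x}+\big(\tfrac{\sqrt{b}-|\vec{u}^\herm\vec{x}|}{\|\vec{u}\|_2^2\,|\vec{u}^\herm\vec{x}|}\big)\vec{u}(\vec{u}^\herm\vec{x})$, which is exactly the first branch of \eqref{eqn:projection_operator_rank1_quadratic_constraint}.

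I expect the main obstacle to be the correct handling of the complex-valued optimization, namely using Wirtinger calculus so that the stationarity equation comes out cleanly as $(\I+\mu\vec{u}\vec{u}^\herm)\vec{z}=\vec{x}$ rather than a real-decomposed variant, together with the bookkeeping of selecting the positive root for $1+\mu\|\vec{u}\|_2^2$ and verifying that it returns $\mu\ge0$ exactly in the active regime. Everything else, that is Sherman--Morrison and the final algebraic simplification, is routine.
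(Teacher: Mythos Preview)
Your argument is correct and, like the paper, hinges on the KKT conditions with the feasible/infeasible split. The execution in the active case, however, differs. The paper (Appendix~\ref{sec:proof_of_proj_onto_rank1_matrix}) rewrites the tight constraint $|\vec{u}^\herm\vec{z}|^2=b$ as the \emph{linear} constraint $\vec{u}^\herm\vec{z}=\sqrt{b}\,\exp(\iota\theta)$ with an unknown phase $\theta$; the resulting stationarity is then trivial, $\vec{z}=\vec{x}-\mu'\vec{u}$, and $\theta$ is determined afterwards by minimizing $\|\vec{x}-\vec{z}\|_2^2$ over $\theta$ (following \cite{Huang2016Consensus-ADMMProgramming}), giving $\exp(\iota\theta)=(\vec{u}^\herm\vec{x})/|\vec{u}^\herm\vec{x}|$. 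You instead keep the quadratic constraint, obtain $(\I+\mu\,\vec{u}\vec{u}^\herm)\vec{z}=\vec{x}$ directly via Wirtinger stationarity, and close out with Sherman--Morrison plus complementary slackness. Your route is more self-contained---no auxiliary phase variable and no separate one-dimensional optimization---whereas the paper's linearization avoids any matrix inversion altogether at the price of that extra phase step. Both land on the same closed form.
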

\begin{proof}
See Appendix \ref{sec:proof_of_proj_onto_rank1_matrix}.
\end{proof}

If $M > 1$, then no closed-form is known yet. Hence, in the sequel, we propose low-complexity algorithms that essentially {break} down the \ac{LS-MSP} problem into smaller subproblems, where each subproblem admits closed-form solution capitalizing on Theorem \ref{theorem:proj_rank1_quadratic_constraint}. 

\subsubsection{\ac{EVM}-Unconstrained \ac{ADMM} 
\ac{LS-MSP} (referred to as \ac{ADMM})} \label{sec:admm_algorithm}

In our first proposal, we utilize \ac{ADMM} with consensus optimization to solve the \ac{LS-MSP} problem by rewriting Problem~\eqref{eqn:msp_problem_with_rank1_constraints}  
\vspace{-2.5mm}
\begin{alignat*}{3}
&\underset{\overline{\vec{d}}_j, \vec{y}_m \in \mathbb{C}^{N \times 1}}{\text{minimize}} 
\quad f\left(\overline{\vec{d}}_j\right) + \sum_{m=1}^M \Ind_{\mathcal{C}_m}\left(\vec{y}_m\right) \nonumber \\ 
&{\text{subject to}}
\quad \quad \vec{y}_m = \overline{\vec{d}}_j  \quad \forall m\! =\! 1,\ldots,M \! ,
\end{alignat*}
where the (non-equalized) squared \ac{EVM} $f\left(\overline{\vec{d}}_j\right) \coloneqq \left\| \vec{d}_j - \overline{\vec{d}}_j \right\|_2^2$ is a convex and differentiable function.  
The non-differentiable indicator function $\mathcal{X}_{\mathcal{C}_m}\left(\vec{y}_m\right)$  with the rank~1 constraint set is given by $\mathcal{C}_m = \left\{\vec{y}_m : \vec{y}_m^\herm \; \overline{\mat{A}}_m \; \vec{y}_m - \gamma_m \leq  0 \right\}$. 

Algorithm~\ref{alg:consensus_admm_msp} summarizes the proposed recipe for the \ac{ADMM}-based spectral precoding, cf.~\cite{KantSPAWC2019} for details, where $I$ denotes the total number of iterations. {The convergence analysis of the \ac{ADMM} algorithm is given in Appendix~\ref{sec:proof_of_consensus_admm}.}

\subsubsection{\ac{EVM}-Unconstrained \ac{SSP} \ac{LS-MSP} (referred to as \ac{SSP})}
In our second proposal, we derive an optimal semi-analytical algorithm, dubbed as \ac{SSP}, based on the \ac{KKT} conditions \cite{Boyd2004ConvexOptimization} for the constrained optimization \eqref{eqn:msp_problem_with_rank1_constraints}. 

We form {the} Lagrangian of \eqref{eqn:msp_problem_with_rank1_constraints} by introducing the Lagrange multipliers $\{\mu_m\}$ as follows:
\begin{equation} \label{eqn:lagrangian_for_ssp_algo}
L\left(\overline{\vec{d}}_j, \left\{\mu_m\right\}\right) 
\!=\! \left\| \vec{d}_j - \overline{\vec{d}}_j \right\|_2^2 \! + \! \sum \limits_{m=1}^{M} \mu_m \! \left(  \overline{\vec{d}}_j^{\herm} \overline{\mat{A}}_m  \overline{\vec{d}}_j - \gamma_m \right).
\end{equation}

\noindent Utilizing the \ac{KKT} conditions, the stationarity condition yields \eqref{eqn:primal_solution_ssp__d_bar} and the Lagrange multipliers $\{ \mu_m \}$ are obtained iteratively in a coordinate descent fashion \cite{Luo1992} as outlined in Algorithm~\ref{alg:solution_ssp_ver1}---see Appendix \ref{sec:derivation_of_ssp_algorithm} for a detailed derivation. 

\begin{algorithm}[t] 
\caption{\ac{SSP}} \label{alg:solution_ssp_ver1} 
 \begin{algorithmic}[1] 
 	\algrenewcommand\algorithmicrequire{\textbf{Inputs:}}
    \algrenewcommand\algorithmicensure{\textbf{Output(s):}}
 	\Require $\left\{ \vec{d}_j \in \mathbb{C}^{N \times 1} \right\}_{j=1}^{\NT}$, $\left\{\gamma_m \! \in \! \Rm; \ \lambda_1^m = \|\vec{a}\left(\nu_m\right)\|_2^2\right\}_{m=1}^{M}$.
 	\Ensure  $\overline{\vec{d}}_j^{\left(I\right)} \in \mathbb{C}^{N \times 1} \quad \forall j=1,\ldots,\NT$ 
 \\ \textit{Initialization}:\\  $\left\{\mu_{m} = \frac{1}{\lambda_1^m} \left(\left| \vec{a}\left(\nu_m\right)^{\tran} \vec{d}_j \right| \sqrt{\left( \frac{\lambda_1^m}{\gamma_m} \right)} - 1 \right)\right\}$ $\forall m \!=\! 1,\ldots,M$ \texttt{(cf. Lemma~\ref{lemma:mu_value_for_rank1_closed_form})}; $\phi = 0$.
  \For {${i = 1,\ldots,I}$} 
  \For {$m = 1,\ldots,M$}
  \begin{subequations} \label{eqn:ssp_algo_ver1}
  \Statex 
  \vspace{-2.5mm}
  \begin{flalign} 
    \label{eqn:ssp_Gm_inverse_sum_o_rank1}
    \mat{G}_{\backslash m}^{-1}  &= \left( \mat{I}_N + \sum \limits_{n=1; n \neq m}^{M} \mu_n \overline{\mat{A}}_n  \right)^{-1} \quad \texttt{(using Lemma \ref{lemma:matrix_inversion_sum_of_rank_one_matrices})} \\ \nonumber
    \alpha_1 &= \vec{a}\left(\nu_m\right)^{\trans} \mat{G}_{\backslash m}^{-1} \ \vec{d}_j; \   \alpha_2 = \vec{a}\left(\nu_m\right)^{\trans}  \mat{G}_{\backslash m}^{-1} \ \vec{a}\left(\nu_m\right)^{*} 
  \end{flalign} 
  \Statex 
  \vspace{-2.5mm}
  \begin{flalign}\label{eqn:ssp_mu_computation_step}   
   \mu_m &\leftarrow \Re \left\{ \frac{\alpha_1 \exp\left(-\iota \phi\right) - \sqrt{\gamma_m}}{\sqrt{\gamma_m} \ \alpha_2}  \right\}
   \savesubeqnumber
  \end{flalign}    
  \end{subequations}
  \EndFor
  \EndFor
 \State \textbf{return}   
 \vspace{-4mm}
  \addtocounter{equation}{-1}
 \begin{subequations}
 \recallsubeqnumber
\begin{flalign} \label{eqn:primal_solution_ssp__d_bar} \overline{\vec{d}}_j 
= \left( \mat{I}_N + \sum \limits_{m=1}^{M} \mu_m \  \overline{\mat{A}}_m \right)^{-1} \vec{d}_j \quad \texttt{(using Lemma \ref{lemma:matrix_inversion_sum_of_rank_one_matrices})} \end{flalign} 
\end{subequations}
\end{algorithmic} 
\vspace{-2mm}
\end{algorithm}

\begin{table*}[!htbp] 
\begin{center}
\caption{Comparison of online complexity of various \ac{EVM}-unconstrained and mask-compliant algorithms}\label{table:complexity_comparison}
\vspace*{-2.5mm}
{
\scalebox{0.95}{
\resizebox{\textwidth}{!}{%
	\begin{tabular}{|l||c|c|} \hline
	\textbf{Method}   & \textbf{Complexity: Real Multiplications} & \textbf{Complexity: Real Additions}\\ \hline
    \ac{MSP} \cite{Tom2013MaskShaping}   & $\BigOh\left(N^{4.5}\NT\right)$ \cite{Kumar2016} & --  \\ \hline
    \ac{ADMM}                            & {$I\left(9MN \!+\! M \!+\!N\!+\! 1\right)\NT $}   & {$I\left(14MN \!+\! 2N \!+\!M\!+\!1\right)\NT$} \\ \hline
    \ac{SSP}                            & $ \left( \left[5MN \!+\! I \!\left( 4M^2N^2 \!+\! 12MN^2 \!+\! 5M^2N \!+\! 3MN \!+\!M \right) \right]   \right) \NT  $  & $ \left( \left[2MN \!+\! I \!\left( 2M^2N^2 \!+\! 6MN^2 \!+\! 6M^2N \!-\! 2MN \!+\!M^2 \right) \right]   \right) \NT  $ \\ \hline
	\end{tabular}
}
}
}
\end{center}
\vspace*{-2mm}
\end{table*}

\vspace*{-3mm}
\subsection{Complexity Analysis} \label{sec:complexity_analysis_of_evm_unconstrained}
\vspace*{-1.5mm}
We analyze the run-time complexity, in terms of required {real-valued} multiplications {and real-valued additions} but ignore the offline complexity.  
Notice, {we convert all the complex multiplications and additions into equivalent real-valued multiplications and additions, \ie, $1$ complex multiplication is equivalent to $4$ real multiplications and $2$ real additions, and $1$ complex addition corresponds to $2$ real additions. Furthermore,} we assume that all the subcarriers are allocated, which will give the worst-case complexity analysis. Table~\ref{table:complexity_comparison} summarizes the complexity of the mask-compliant precoding schemes.

{
\textit{\ac{ADMM}}: The initialization step requires no multiplications/additions. In step \eqref{eqn:psp_admm_algo_update_delta_d_ver1}, there are {$\left(M\!+\!1\right)N$ complex additions, $N\!+\!1$ real multiplications, and $1$ real addition per iteration and transmit antenna}. The dominating online algebraic complexity is in the computation of the $\prox$ operator in step \eqref{eqn:psp_admm_algo_update_delta_ym_ver1}, which is in the order of {{$2MN$} complex multiplications, $2MN$ complex additions, $M\left(N\!+\!1\right)$ real multiplications, and $M$ real additions per iteration and transmit antenna}. 
However, due to distributed nature of consensus \ac{ADMM}, the $M$ subiterations can run in parallel per iteration cycle at the expense of increased memory requirements. {In step~\eqref{eqn:psp_admm_algo_update_delta_zm_ver1}, we need $2MN$ complex additions.}  {Thus}, ignoring parallelization, the total run-time complexities for $I$ iterations and $\NT$ transmit antennas 
{are $I\left(9MN \!+\! M \!+\!N\!+\! 1\right)\NT $ and $I\left(14MN \!+\! 2N \!+\!M\!+\!1\right)\NT$ in terms of real multiplications and real additions, respectively.}
}

{
\textit{\ac{SSP}}: The initialization step needs {$N$} complex {and real multiplications, respectively} for each $m$-th frequency point and antenna, where $\{\lambda_1^m\}$ can be computed offline.  
The main computational complexity of step \eqref{eqn:ssp_Gm_inverse_sum_o_rank1} is due to the matrix inversion, but no online inversion is necessary due to the sum of rank~1 matrices---see Lemma \ref{lemma:matrix_inversion_sum_of_rank_one_matrices}, which {are $\left(M\!-\!1\right)N^2 \!+\! NM$ complex multiplications, $2\left(M\!-\!1\right)N$ complex additions, $\left(M\!-\!1\right)N$ real multiplications, and $\left(M\!-\!1\right)$ real additions} for each frequency point, iteration, and transmit antenna. The complexity of the computation of both $\alpha_1$ and $\alpha_2$ are $\left(2N^2\right)$ for each frequency point, iteration, and transmit antenna. {The step~\eqref{eqn:ssp_mu_computation_step} need $N$ complex multiplications, $1$ real multiplication, and $1$ real addition.} Hence, the total online {real multiplications and real additions for $I$ iterations and $\NT$ transmit antennas are $\left( \left(\! \left[ 5M\!N \!+\! I \!\left( 4M^2\!N^2 \!+\! 12M\!N^2 \!+\! 5M^2\!N \!+\! 3M\!N \!+\!M \right) \! \right] \!  \right)\! \NT  \right)$ and $\!\left( \! \left( \! \left[2M\!N \!+\! I \!\left( 2\!M^2\!N^2 \!+\! 6\!M\!N^2 \!+\! 6\!M^2\!N \!-\! 2\!M\!N \!+\!M^2 \right) \!\right]  \! \right)\! \NT  \right)$, respectively.} 
}

\section{\ac{EVM}-Constrained \ac{LS-MSP} in MIMO-OFDM} \label{sec:tx_evm_constrained__csi_unaware__ls_msp__mimo_ofdm}
In this section, we firstly extend the \ac{LS-MSP} Problem \eqref{eqn:msp_problem_with_rank1_constraints} such that the spectrally precoded symbol yearns to keep the \ac{EVM} below the desired level by sacrificing \ac{OOBE} performance in terms of \ac{ACLR}, referred to as \ac{EMSP}. {Unfortunately, adding \ac{EVM} constraint in addition to the mask constraint poses challenges to develop a computationally efficient algorithm. Thanks to \ac{ADMM}, which offers a divide-and-conquer approach, incorporating \ac{EVM} constraint becomes easy and the algorithm is referred to as \ac{EADMM}. However, \ac{SSP} becomes prohibitively complex to support \ac{EVM} constraint. Therefore, we have proposed an alternative operator splitting framework based on the Douglas-Rachford algorithm~\cite{Combettes2011,Eckstein_DRS:92}, referred to as \ac{EVM}-constrained \ac{SSP} (\ac{ESSP}), that employs the iterative \ac{SSP} algorithm internally for the mask constraint and the outer loop of Douglas-Rachford supports the~\ac{EVM}~constraint}.

\subsection{{EVM}-Constrained Proposed Problem Formulations}
We pose wideband and frequency-selective \ac{EVM} constrained mask-compliant spectral precoding optimization problems and then develop low-complexity algorithms.
We firstly perform the epigraph transformation \cite{Chi_convex_sigcom_book:2017} of Problem \eqref{eqn:msp_problem_with_rank1_constraints}, without losing convexity, such that the proposed wideband \ac{EVM}-constrained optimization problem reads as:
\iftrue
\begin{subequations}  \label{eqn:wbevm_constrained_msp_problem_with_rank1_constraints__cvx} 
\begin{alignat}{3} 
 \label{eqn:evm_constrained_msp__common_delta_t__cost_function__cvx}
& \underset{\overline{\mat{X}} \in \Cm^{\NT \times N}, \Delta t \in \Rm }{\text{minimize}} 
& & \quad \Delta t  \\  \label{eqn:wbevm_constrained_msp__wbavgtxevm_constraint__cvx}
& \text{subject to}
& &  \left\|\overline{\mat{X}} -  {\mat{X}} \right\|_F \leq \avgTxEVM  \\
\label{eqn:wbevm_constrained_msp__msp_constraint__cvx}
& & & \left| \mat{A}  \overline{\mat{X}}^\trans \right|^{2} \preceq \bm{\Gamma} \odot \Delta t \bm{1}_{M \times \NT},
\end{alignat}
\end{subequations}
\fi
\noindent {where $\odot$ represents element-wise multiplication, and} $\avgTxEVM$ denotes the desired wideband averaged \ac{EVM} over all the allocated subcarriers (frequency-domain) and also over all the transmit antennas\footnote{It is straightforward to modify the problem to support wideband \ac{EVM}-constraint per transmit antenna branch.}. The target mask is {$\bm{\Gamma} \in \Rm^{M \times \NT}$}, \eg, $\bm{\Gamma}\left[:,j\right] = \left[\gamma_1;\ldots;\gamma_m \right]$ {and $\bm{1}_{M \times \NT}$ is an all-ones $M \times \NT$ matrix. Note that the mask constraints can be the same for all the transmit antenna branches.} {Furthermore, the constraint \eqref{eqn:wbevm_constrained_msp__msp_constraint__cvx} can be decomposed for each $j$-th transmit antenna and each $m$-th discrete frequency point such that the constraint can be read as $\overline{\vec{d}}_j^\herm \; \overline{\mat{A}}_m \; \overline{\vec{d}}_j \! \leq \! \Delta t \gamma_m$.}

To support more flexibility in terms of defining different \ac{EVM} constraints to different subcarriers, we extend the wideband \ac{EVM}-constrained Problem \eqref{eqn:wbevm_constrained_msp_problem_with_rank1_constraints__cvx} that offers different \ac{EVM} constraint for each subcarrier. A frequency-selective \ac{EVM}-constrained optimization   
problem is 
\vspace{-1.5mm}
\iftrue
\begin{subequations}  \label{eqn:fsevm_constrained_msp_problem_with_rank1_constraints__cvx} 
\begin{align} 
& \underset{\overline{\mat{X}} \in \Cm^{\NT \times N}, \Delta t \in \Rm }{\text{minimize}} 
& &  \Delta t \nonumber \\ 
\label{eqn:fsevm_constrained_msp_fstxevm_constraint__cvx}
& \text{subject to}
& & \left\|  \overline{\mat{X}}  \! \left[:, k\right] \! - \!   {\mat{X}}  \! \left[:, k\right] \right\|_2 \leq \TxEVM\left[ k\right]   \  \forall k \! \in \! \mathcal{T} \\
& & & \left| \mat{A}  \overline{\mat{X}}^\trans \right|^{2} \preceq \bm{\Gamma} \odot \Delta t \bm{1}_{M \times \NT}, \nonumber 
\end{align}
\end{subequations}
\fi
where $\TxEVM \left[ k\right]$ is the desired \ac{EVM} at $k$-th subcarrier and the set $\mathcal{T}$ denotes all activated subcarriers.

The main benefit of Problem \eqref{eqn:fsevm_constrained_msp_problem_with_rank1_constraints__cvx} is that the \ac{EVM} constraint per subcarrier or the group of subcarriers can be defined by the upper layers depending on the channel link quality and/or the allocation of the data/pilots appropriately. In other words, such frequency-selective \ac{EVM} constraint may have a high threshold (or high allowable \ac{EVM}) for the lower supported modulation alphabet, \eg, for {QPSK} modulation{---cf. Table~\ref{table:nr_wbevm_requirements}}. In contrast, the \ac{EVM} constraint may have a low threshold for the higher modulation alphabet, \eg, for 64\ac{QAM}. 

An optimal solution to both problems \eqref{eqn:wbevm_constrained_msp_problem_with_rank1_constraints__cvx} and \eqref{eqn:fsevm_constrained_msp_problem_with_rank1_constraints__cvx} can be obtained via a general purpose optimization solver, \eg, CVX \cite{Grant2014CVX:Beta}. However, as motivated in the previous section, such general purpose algorithms employ interior-point-based methods whose complexity is {prohibitively high}. Hence, in order to develop efficient algorithms for the \ac{EVM}-constrained and mask-compliant spectral precoding problems, we now instead transform the problems \eqref{eqn:wbevm_constrained_msp_problem_with_rank1_constraints__cvx} and \eqref{eqn:fsevm_constrained_msp_problem_with_rank1_constraints__cvx} into the feasibility problems by omitting the $\Delta t$ variable, \ie, problems \eqref{eqn:wbevm_constrained_msp_feasibility_problem_with_rank1_constraints} and \eqref{eqn:persubcarrier_evm_constrained_msp_feasibility_problem_with_rank1_constraints}, respectively. {If $\Delta t \leq 1$ in problems \eqref{eqn:wbevm_constrained_msp_problem_with_rank1_constraints__cvx} and \eqref{eqn:fsevm_constrained_msp_problem_with_rank1_constraints__cvx}, then the respective problems~\eqref{eqn:wbevm_constrained_msp_feasibility_problem_with_rank1_constraints} and \eqref{eqn:persubcarrier_evm_constrained_msp_feasibility_problem_with_rank1_constraints} are feasible.} {In other words, we make an assumption that the problem is feasible or has at least one solution, which implies $\Delta t \leq 1$, then the  
mask constraint \eqref{eqn:wbevm_constrained_msp__msp_constraint__cvx} can be expressed as following:}
\vspace{-1mm}
{
\begin{equation*}
     \left| \mat{A}  \overline{\mat{X}}^\trans \right|^{2} \preceq \bm{\Gamma} \odot \Delta t \bm{1}_{M \times \NT} \overset{\Delta t \leq 1}{\preceq}  \bm{\Gamma}.
     \vspace{-1mm}
\end{equation*}
Consequently, we omit $\Delta t$ from the the mask constraint in \eqref{eqn:wbevm_constrained_msp_problem_with_rank1_constraints__cvx}/\eqref{eqn:fsevm_constrained_msp_problem_with_rank1_constraints__cvx} for the feasibility problem.
}
A wideband \ac{EVM} constraint 
Problem \eqref{eqn:wbevm_constrained_msp_problem_with_rank1_constraints__cvx} can be posed as the following feasibility problem:
\begin{subequations}  \label{eqn:wbevm_constrained_msp_feasibility_problem_with_rank1_constraints} 
\begin{align} 
& {\text{find}} 
& & \hspace{-1cm} \overline{\mat{X}}  \in \Cm^{\NT \times N }  \nonumber \\ \label{eqn:wbevm_constrained_msp__wbavgtxevm_constraint}
& \text{subject to} 
& & \hspace{-1cm} \left\|\overline{\mat{X}} -  {\mat{X}} \right\|_F \leq \avgTxEVM  \\
\label{eqn:wbevm_constrained_msp__msp_constraint}
& & & \hspace{-1cm} \left| \mat{A}  \overline{\mat{X}}^\trans \right|^{2} \preceq \bm{\Gamma}.
\end{align}
\end{subequations}

Similarly, Problem \eqref{eqn:fsevm_constrained_msp_problem_with_rank1_constraints__cvx}, \ie, a frequency-selective \ac{EVM} constraint with mask-compliant problem, can be posed as the following feasibility problem:
\begin{subequations}  \label{eqn:persubcarrier_evm_constrained_msp_feasibility_problem_with_rank1_constraints} 
\begin{align} 
& {\text{find}} 
& & \overline{\mat{X}}  \in \Cm^{\NT \times N }  \nonumber \\ \label{eqn:persubcarrier_evm_constrained_msp__wbavgtxevm_constraint}
& \text{subject to}
& &  \left\|  \overline{\mat{X}}  \! \left[:, k\right] \! - \!   {\mat{X}}  \! \left[:, k\right] \right\|_2 \leq \TxEVM\left[ k\right]  \ \  \forall k \in \mathcal{T} \\
\nonumber 
& & & \left| \mat{A}  \overline{\mat{X}}^\trans \right|^{2} \preceq \bm{\Gamma}.
\end{align}
\end{subequations}

\noindent Now, the respective mask and {wideband} and frequency-selective \ac{EVM} constraint sets are defined as following. The $m$-th mask constraint set corresponds to the rank~1 quadratic inequality, \ie,
\be \label{eqn:ls_msp_constraint_set__mimo_X}
    \mathbfcal{C}_m \coloneqq \left\{\overline{\mat{X}} \bm{:} \overline{\vec{d}}_j^\herm {\overline{\mat{A}}_m}\overline{\vec{d}}_j - \gamma_m \leq  0; \forall j=1,\ldots,\NT \right\},
\ee
and the wideband \ac{EVM} constraint set can be described by
\be \label{eqn:wb_evm_constraint_set}
    \mathbfcal{E}_{\rm wb} \! \coloneqq \! \left\{\overline{\mat{X}} \bm{:} \left\|\overline{\mat{X}} -  {\mat{X}} \right\|_F - \avgTxEVM \leq 0 \right\},
\ee
whereas
the frequency-selective \ac{EVM} set can be expressed as
\be \label{eqn:fs_evm_constraint_set}
    \mathbfcal{E}_{\rm fs} \! \coloneqq \! \left\{\overline{\mat{X}} \! \bm{:} \! \left\|  \overline{\mat{X}}  \! \left[:, k\right] \! - \!   {\mat{X}}  \! \left[:, k\right] \right\|_2 - \TxEVM\left[ k\right] \! \leq \! 0; \ \forall k \in \mathcal{T}  \right\}.
\ee
We will denote the \ac{EVM} constraint set as $\mathbfcal{E}$, which can be wideband $\mathbfcal{E}_{\rm wb}$ and/or frequency-selective $\mathbfcal{E}_{\rm fs}$, unless stated otherwise.

Now, we rewrite the feasibility problems \eqref{eqn:wbevm_constrained_msp_feasibility_problem_with_rank1_constraints} and \eqref{eqn:persubcarrier_evm_constrained_msp_feasibility_problem_with_rank1_constraints} as the following unconstrained problem amenable to the latter proposed efficient algorithms 
\begin{align}  \label{eqn:generic_min_problem_with_sum_of_indicator_functions_evm_and_sum_of_rank1_qc}
 \underset{ \overline{\mat{X}}  \in \Cm^{\NT \times N }}{\text{minimize}} 
\quad \mathcal{F}\left( \overline{\mat{X}}  \right) \coloneqq \left\{ \mathcal{X}_{\mathbfcal{E}}\left( \overline{\mat{X}} \right) + \sum_{m=1}^M \mathcal{X}_{\mathbfcal{C}_m}\left( \overline{\mat{X}} \right) \right\},
\end{align}
where the composite function $\mathcal{F}\left( \overline{\mat{X}}  \right)$ is a sum of  non-differentiable indicator functions, \ie,  $\Ind_{\mathbfcal{E}}\left(\cdot\right)$ and $\Ind_{\mathbfcal{C}_m}\left(\cdot\right)$,  of constraint sets corresponding to the wideband or frequency-selective \ac{EVM}, and mask defined in \eqref{eqn:wb_evm_constraint_set} or \eqref{eqn:fs_evm_constraint_set} and \eqref{eqn:ls_msp_constraint_set__mimo_X}, respectively. Hence, we seek efficient methods to solve such a problem.

Prior to developing the algorithms for the constrained \ac{EVM}, we present the following theorems which are utilized for the {subsequent} algorithm development.
\begin{theorem} \label{theorem:separable_prox_operators}
    If a function $f\left( \mat{X} \right) \!=\! \sum_{k=1}^n f_i\left( \vec{x}_k \right)$ is separable across the variables column-wise $\mat{X} \!=\! \left[\vec{x}_1, \ldots, \vec{x}_n \right]$ or row-wise $\mat{X} \! =\! \left[\vec{x}_1; \ldots; \vec{x}_{\NT} \right]$, then the respective $\prox$ operators can be shown as
    $\prox_{f}\left( \mat{X} \right) \!=\! \left[ \prox_{f_1} \left( \vec{x}_1 \right), \ldots, \prox_{f_n} \left( \vec{x}_n \right)  \right]$ or $\prox_{f}\left( \mat{X} \right) \!=\! \big[ \prox_{f_1} \left( \vec{x}_1 \right); \ldots; \allowbreak \prox_{f_{\NT}} \left( \vec{x}_{\NT} \right)  \big]$.
\end{theorem}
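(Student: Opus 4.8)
The plan is to verify the claim directly from the definition of the proximal mapping (Definition~\ref{definition:prox_operator}), exploiting the fact that the quadratic proximity term is separable in exactly the same way as $f$. I would treat the column-wise case in detail, since the row-wise case follows by transposition. First I would write out the proximal mapping of $f$ evaluated at the matrix argument $\mat{X} = \left[\vec{x}_1,\ldots,\vec{x}_n\right]$, using the Frobenius norm as the proximity measure, i.e., for any $\lambda > 0$,
\begin{equation*}
\prox_{\lambda f}\left(\mat{X}\right) = \arg\min_{\mat{Z} \in \Cm^{N \times n}} \left\{ f\left(\mat{Z}\right) + \frac{1}{2\lambda} \left\| \mat{X} - \mat{Z} \right\|_F^2 \right\}.
\end{equation*}

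The next step is to invoke two separability facts. By hypothesis $f\left(\mat{Z}\right) = \sum_{k=1}^n f_k\left(\vec{z}_k\right)$, where $\vec{z}_k \coloneqq \mat{Z}\left[:,k\right]$ denotes the $k$-th column; and the squared Frobenius norm splits column-wise as $\left\| \mat{X} - \mat{Z} \right\|_F^2 = \sum_{k=1}^n \left\| \vec{x}_k - \vec{z}_k \right\|_2^2$. Substituting both into the objective turns it into
\begin{equation*}
\sum_{k=1}^n \left[ f_k\left(\vec{z}_k\right) + \frac{1}{2\lambda} \left\| \vec{x}_k - \vec{z}_k \right\|_2^2 \right],
\end{equation*}
a sum of $n$ terms in which the $k$-th summand depends only on the $k$-th column $\vec{z}_k$ and on no other column.

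The crux is then that minimizing such a block-separable sum over the product space, with no constraint coupling the blocks $\vec{z}_1,\ldots,\vec{z}_n$, factorizes into $n$ independent scalar-block minimizations. Consequently the $k$-th column of the minimizer is exactly $\arg\min_{\vec{z}_k}\left\{ f_k\left(\vec{z}_k\right) + \frac{1}{2\lambda} \left\| \vec{x}_k - \vec{z}_k \right\|_2^2 \right\} = \prox_{\lambda f_k}\left(\vec{x}_k\right)$, which yields the claimed concatenation; the row-wise statement follows verbatim after applying the argument to $\mat{X}^\trans$ (equivalently, decomposing $\left\|\cdot\right\|_F^2$ row-wise into $\NT$ terms). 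I do not anticipate a genuine obstacle, as the result is a routine consequence of separability; the only point meriting care is the justification of the factorization itself, namely that because nothing links the blocks, the joint arg-min is attained precisely when each block is minimized individually—and, each $f_k$ being proper closed convex, every inner minimizer is the unique prox, so the decoupling is unambiguous.
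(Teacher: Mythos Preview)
Your proposal is correct and takes essentially the same approach as the paper: both invoke Definition~\ref{definition:prox_operator} and observe that the separability of $f$ (together with the column-/row-wise separability of the Frobenius norm) reduces the joint minimization to independent minimizations over each block. The paper merely states this in one sentence with references, whereas you have spelled out the details.
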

\begin{proof}
Following the proximal operator Definition \ref{definition:prox_operator}, the minimization of the separable function is equivalent to minimization of respective functions $\{f_k\}$ independently~\cite{Beck2017,Parikh2013}.
\end{proof}

\begin{theorem}[projection onto Frobenius norm ball {\cite[Lemma~6.26]{Beck2017}}] \label{theorem:l2_ball_proj_operator}
    Let $E \subseteq \Cm^{p \times q}$ and $E \neq \emptyset$ be given by 
    $E \!\coloneqq \! \mathcal{B}_{\|\cdot\|_2}\left[\mat{C},  r \right] \!=\! \left\{ \mat{X} \!\in\!  \Cm^{p \times q}: \left\| \mat{X} \! -\! \mat{C} \right\|_F \!\leq\! r \right\}$,
    then the proximal or orthogonal projection operator for the Frobenius (or $\ell_2$) norm ball, \ie, $\mathcal{B}_{\|\cdot\|_F}\left[\mat{C},  r \right]$ with a given center $\mat{C}$ and the radius $r$, is
    \vspace{-1.5mm}
    \begin{align*}
        \prox_{ \lambda \Ind_{E} } \left( \mat{X} \right) 
        &= \!  \proj_{ \mathcal{B}_{\|\cdot\|_F}\left[\mat{C},  r \right]}  \left( \mat{X} \right) \nonumber \\ 
        &= \! \mat{C} \! + \!  \left( \! \frac{r}{\max\left\{  \left\| \mat{X} \!  - \!  \mat{C} \right\|_F, r \right\}} \! \right) \!  \left(  \mat{X} \!  - \!  \mat{C}  \right).
    \end{align*}
\end{theorem}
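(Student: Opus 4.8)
The plan is to use Definition~\ref{definition:prox_operator_on_indicator_function}, which identifies $\prox_{\lambda\Ind_E}$ with the orthogonal projection $\proj_E$, so that it suffices to solve
\[
\proj_E(\mat{X}) = \arg\min_{\mat{Z}\in\Cm^{p\times q}}\left\{\tfrac{1}{2}\left\|\mat{Z}-\mat{X}\right\|_F^2 \ \bm{:}\ \left\|\mat{Z}-\mat{C}\right\|_F\leq r\right\}.
\]
First I would remove the center by substituting $\mat{Y}=\mat{Z}-\mat{C}$ and $\mat{V}=\mat{X}-\mat{C}$, which leaves the objective $\tfrac{1}{2}\|\mat{Y}-\mat{V}\|_F^2$ and the constraint $\|\mat{Y}\|_F\leq r$; this reduces the claim to projecting $\mat{V}$ onto the centered ball $\mathcal{B}_{\|\cdot\|_F}[\mat{0},r]$, after which undoing the translation recovers the stated formula.

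Because the Frobenius norm is induced by the real inner product $\langle\mat{A},\mat{B}\rangle=\Re\{\trace(\mat{A}^\herm\mat{B})\}$, the feasible set is closed and convex while the objective is strictly convex and coercive, so a unique minimizer exists. I would characterize it through the \ac{KKT} conditions of the equivalent smooth program with the squared constraint $\|\mat{Y}\|_F^2-r^2\leq 0$, which defines the same feasible set. With Lagrangian $L(\mat{Y},\mu)=\tfrac{1}{2}\|\mat{Y}-\mat{V}\|_F^2+\mu\left(\|\mat{Y}\|_F^2-r^2\right)$, stationarity gives $(1+2\mu)\mat{Y}=\mat{V}$, hence $\mat{Y}=\mat{V}/(1+2\mu)$ with dual feasibility $\mu\geq 0$.

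The two branches then follow from complementary slackness. If $\|\mat{V}\|_F\leq r$, then $\mu=0$ and $\mat{Y}=\mat{V}$ is already feasible and satisfies every \ac{KKT} condition, so the projection equals $\mat{V}$. Otherwise the constraint is active, $\|\mat{Y}\|_F=r$, and inserting $\mat{Y}=\mat{V}/(1+2\mu)$ gives $1+2\mu=\|\mat{V}\|_F/r\geq 1$, so $\mu\geq 0$ is admissible and $\mat{Y}=(r/\|\mat{V}\|_F)\mat{V}$. Writing both cases through $\max\{\|\mat{V}\|_F,r\}$ and back-substituting $\mat{V}=\mat{X}-\mat{C}$ yields the claim. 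I do not expect a genuine obstacle; the only points needing care are justifying differentiation over complex matrices---handled by identifying $\Cm^{p\times q}$ with $\Rm^{2pq}$ under the inner product above---and legitimizing the squared constraint, which is valid since the norm is nonnegative and the origin is strictly feasible, so that Slater's condition holds and the \ac{KKT} conditions are necessary and sufficient.
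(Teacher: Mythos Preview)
Your argument is correct. The translation to the centered ball, the \ac{KKT} analysis with the squared constraint, the two-branch case split via complementary slackness, and the recombination through $\max\{\|\mat{V}\|_F,r\}$ are all valid, and your remarks on Slater's condition and on treating $\Cm^{p\times q}$ as a real inner-product space dispose of the only subtleties.

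However, there is nothing to compare against: the paper does not supply its own proof of this theorem. The result is quoted directly from \cite[Lemma~6.26]{Beck2017}, as indicated in the theorem heading, and no proof environment follows the statement. So your proposal is not an alternative to the paper's argument but rather a self-contained justification that the paper omits by deferring to the cited reference. If anything, your \ac{KKT}-based derivation mirrors the style the paper uses elsewhere (e.g., Appendix~\ref{sec:proof_of_proj_onto_rank1_matrix} for Theorem~\ref{theorem:proj_rank1_quadratic_constraint}), which makes it a natural fit.
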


\subsection{Efficient Algorithms for \ac{EVM}-Constrained \ac{LS-MSP}}
In this section, we develop two computationally efficient algorithms to solve the aforementioned \ac{EVM}-constrained problem {\eqref{eqn:generic_min_problem_with_sum_of_indicator_functions_evm_and_sum_of_rank1_qc}}. 

\begin{table*} 
\begin{center}
\caption{Complexity comparison of various \ac{EVM}-constrained and mask-compliant algorithms}\label{table:complexity_comparison_evm_constrained}
\vspace*{-2.5mm}
{
\scalebox{1}{
\resizebox{\textwidth}{!}{%
	\begin{tabular}{|l||c|c|} \hline
	\textbf{Method}   & \textbf{Complexity: Real Multiplications} & \textbf{Complexity: Real Additions}\\ \hline
    \ac{EMSP}    & $\BigOh\left(N^{4.5}\NT\right)$  & --  \\ \hline
    \ac{EADMM}                            & {$\left(I\left(9MN \!+\! M \!+\!6N\!+\! 2\right)\NT \right)$}   & {$\left(I\left(14MN \!+\! 10N \!+\!M\!-\!1\right)\!\NT\right)$} \\ \hline
    \ac{ESSP}                            & $\left( I \left[\left(5M\!+\!7\right)N\!+\!1 \!+\! I^\prime \!\left( 4M^2N^2 \!+\! 12MN^2 \!+\! 5M^2N \!+\! 3MN \!+\!M \right) \right]   \NT  \right)$  & $\left( I \left[2MN\!+\!14N\!-\!2\!+\! I^\prime \!\left( 2M^2N^2 \!+\! 6MN^2 \!+\! 6M^2N \!-\! 2MN \!+\!M^2 \right) \right]   \NT  \right)$ \\ \hline
	\end{tabular}
}
}
}
\vspace*{-4mm}
\end{center}
\end{table*}

\subsubsection{\ac{EVM}-Constrained \ac{ADMM} \ac{LS-MSP} solution (referred to as {EADMM})}
We firstly express~\eqref{eqn:generic_min_problem_with_sum_of_indicator_functions_evm_and_sum_of_rank1_qc} amenable to \ac{ADMM}: 
\begin{align*}
&\underset{\overline{\mat{X}}, \overline{\mat{Y}}_m \in \mathbb{C}^{\NT \times N}}{\text{minimize}} 
\ \Ind_{\mathbfcal{E}}\left(\overline{\mat{X}}\right) \! + \! \sum_{m=1}^M \Ind_{\mathbfcal{C}_m}\left(\overline{\mat{Y}}_m\right) \nonumber \\
&{\text{subject to}}
\qquad \overline{\mat{Y}}_m \! = \! \overline{\mat{X}} \quad  \forall m \!=\! 1,\ldots,M \! ,
\end{align*}

\noindent The scaled-form consensus \ac{ADMM} for the above problem can be expressed as \cite{Combettes2011, Boyd2011}
\begin{subequations}
\begin{align}
    \label{eqn:scaled_consensus_eadmmm_step1}
    \overline{\mat{X}} &\leftarrow \! \arg  \underset{\overline{\mat{X}} }{ \min } \   \Ind_{\mathbfcal{E}}\left(\overline{\mat{X}}\right) \!+\! \rho \sum_{m=1}^M \left\| \overline{\mat{Y}}_m \!-\! \overline{\mat{X}} \!+\! \overline{\mat{Z}}_m   \right\|_F^2 \\
    \label{eqn:scaled_consensus_eadmmm_step2}
    \overline{\mat{Y}}_m &\leftarrow \!\arg \underset{ \overline{\mat{Y}}_m }{ \min } \ \mathcal{X}_{\mathbfcal{C}_m}\left( \overline{\mat{Y}}_m\right) \!+\! \rho \left\| \overline{\mat{Y}}_m \! - \! \overline{\mat{X}} \!+\! \overline{\mat{Z}}_m   \right\|_F^2 \ \forall m \\ 
    \label{eqn:scaled_consensus_eadmmm_step3}
    \overline{\mat{Z}}_m &\leftarrow \!  \overline{\mat{Z}}_m + \overline{\mat{Y}}_m - \overline{\mat{X}}  \quad  \forall m\!=\!1,\ldots,M \ .
\end{align}
\end{subequations}

In the first step of {our proposed \ac{EADMM} \ac{LS-MSP} algorithm}, taking the derivative with respect to $\overline{\mat{X}}$ and setting to zero {yields} 
$\overline{\mat{X}} \!=\! \prox_{\left(\nicefrac{1}{M}\right) \Ind_{\mathbfcal{E}} } \left( \frac{1}{M} \sum_{m=1}^M \left(  \overline{\mat{Y}}_m + \overline{\mat{Z}}_m \right) \right)$, \ie, the orthogonal projection onto the  wideband or frequency-selective \ac{EVM} constraint~\eqref{eqn:eadmm_algo_update_Xbar_ver1__proj_on_evm}
{---}$\ell_2$ norm ball (cf. Theorem~\ref{theorem:l2_ball_proj_operator}). The second step is an orthogonal projection onto the rank~1 quadratic constraint (cf. Theorem~\ref{theorem:proj_rank1_quadratic_constraint}) yielding \eqref{eqn:eadmm_algo_update_Ybarm_ver1}. Algorithm~\ref{alg:consensus_eadmm_msp} summarizes the proposed recipe for the \ac{EADMM} based mask-compliant spectral precoding. {The convergence analysis of the \ac{EADMM} algorithm is given in Appendix~\ref{sec:proof_of_consensus_admm}.}

\begin{algorithm}[t] 
\caption{\ac{EADMM}}\label{alg:consensus_eadmm_msp} 
 \begin{algorithmic}[1] 
 	\algrenewcommand\algorithmicrequire{\textbf{Inputs:}}
    \algrenewcommand\algorithmicensure{\textbf{Output(s):}}
 	\Require $\mat{X}$, $\left\{ \gamma_m; \ \vec{a}\left(\nu_m\right) \right\}_{m=1}^M$, and $\avgTxEVM \! \in \! \Rm$; $\left\{ \TxEVM[k] \! \in \! \Rm \right\}_{k=1}^N$
 	\Ensure  $\overline{\mat{X}}^{\left(I\right)} \in \mathbb{C}^{\NT \times N}$ 
 \\ \textit{Initialization}: $\overline{\mat{Y}}_m^{(0)} = \mat{0}_{\NT \times N}$ and $\overline{\mat{Z}}_m^{(0)} = \mat{0}_{\NT \times N}$ 
  \For {$i = 1,2,\ldots,I$ } 
  \vspace{-2mm}
  \begin{subequations} \label{eqn:eadmm_algo_ver1}
  \begin{flalign} 
    \mat{U} &=  \frac{1}{M}  \sum_{m=1}^M \left(\overline{\mat{Y}}_m^{\left(i-1\right)} + \overline{\mat{Z}}_m^{\left(i-1\right)} \right) \nonumber \\ %
    \label{eqn:eadmm_algo_update_Xbar_ver1__proj_on_evm}
  	\overline{\mat{X}}^{\left(i\right)} \!&=\! \proj_{\mathbfcal{E}} \left( \mat{U} \right) \equiv \texttt{{select}}
	\left\{ 
		 \begin{matrix}
		  \proj_{ \mathcal{B}_{\|\cdot\|_F}\left[\mat{X},  \avgTxEVM \right]}  \left( \mat{U} \right)  \\
		  \hspace{-4mm} \proj_{ \mathcal{B}_{\|\cdot\|_2}\left[\mat{X},  \TxEVM \right]}  \left( \mat{U} \right)
		\end{matrix} 
	\right.
  	\end{flalign}
  	\Statex 
    \ParFor {$m = 1,\ldots,M$ } \Comment{\% {run parallel}} 
  	\begin{flalign}
   \label{eqn:eadmm_algo_update_Ybarm_ver1}
  	\overline{\mat{Y}}_m^{\left(i\right)} 
  	   &=  {\proj}_{\mathcal{C}_m}\left(\overline{\mat{X}}^{\left(i\right)} - \overline{\mat{Z}}_m^{\left(i-1\right)} \right) \\ 
  \overline{\mat{Z}}_m^{\left(i\right)}  &= \overline{\mat{Z}}_m^{\left(i-1\right)} + \overline{\mat{Y}}_m^{\left(i\right)} - \overline{\mat{X}}^{\left(i\right)}  
  \end{flalign} 
  \EndParFor
  \end{subequations}
  \EndFor
 \State \textbf{return}  $\overline{\mat{X}}^{(I)}$
\end{algorithmic} 
\end{algorithm}

\setlength{\textfloatsep}{3mm}
\begin{algorithm}[t] 
\caption{{ESSP}}\label{alg:cruise_evm_ls_msp__sp_first} 
 \begin{algorithmic}[1] 
 	\algrenewcommand\algorithmicrequire{\textbf{Inputs:}}
    \algrenewcommand\algorithmicensure{\textbf{Output(s):}}
 	\Require $\mat{X}$, $\left\{ \gamma_m; \ \vec{a}\left(\nu_m\right) \right\}_{m=1}^M$, and $\left\{ \TxEVM \in \Rm^{\Nsc} \right\}$ or $\avgTxEVM \in \Rm$
 	\Ensure  $\overline{\mat{X}}^{\left(I\right)} \in \mathbb{C}^{\NT \times N}$ 
 \\ \textit{Initialization}: $\overline{\mat{X}}_m^{(0)} = \mat{X}$ and $\overline{\mat{Z}}_m^{(0)} = \mat{0}_{\NT \times N}$ 
  \For {$i = 1,2,\ldots,I$ } 
  \begin{subequations} \label{eqn:cruise_algo_ver2}
  \begin{flalign}
   \label{eqn:cruise_algo_update_Ybar_ver2}
  	\overline{\mat{Y}}^{\left(i\right)} 
  	   &= \! {\prox}_{\mathcal{X}_{\mathbfcal{C}}}\!\left( 2\overline{\mat{X}}^{\left(i-1\right)} \!-\! \overline{\mat{Z}}^{\left(i-1\right)} \right) \ \texttt{(solve \ac{SSP} Alg.~\ref{alg:solution_ssp_ver1})} \\
  	   \label{eqn:cruise_algo_update_Zbar_ver2}
  \overline{\mat{Z}}^{\left(i\right)}  &= \overline{\mat{Z}}^{\left(i-1\right)} +  \lambda_i \left( \overline{\mat{Y}}^{\left(i\right)} - \overline{\mat{X}}^{\left(i-1\right)}  \right)\\
   \label{eqn:cruise_algo_update_Xbar_ver2}
  	\overline{\mat{X}}^{\left(i\right)} \!&=\! \proj_{\mathbfcal{E}} \left( \overline{\mat{Z}}^{\left(i\right)} \right) \equiv \texttt{{select}}
	\left\{ 
		 \begin{matrix}
		  \proj_{ \mathcal{B}_{\|\cdot\|_F}\left[\mat{X},  \avgTxEVM \right]}  \left( \overline{\mat{Z}}^{\left(i\right)} \right)  \\
		  \hspace{-4mm} \proj_{ \mathcal{B}_{\|\cdot\|_2}\left[\mat{X},  \TxEVM \right]}  \left( \overline{\mat{Z}}^{\left(i\right)} \right)
		\end{matrix} 
	\right.
  	\end{flalign}
  \end{subequations}
  \EndFor
 \State \textbf{return}  $\overline{\mat{X}}^{(I)}$
\end{algorithmic} 
\end{algorithm}

\subsubsection{\ac{EVM}-Constrained \ac{SSP} \ac{LS-MSP} solution (referred to as \ac{ESSP})}

We layout the definition of Douglas-Rachford algorithm and subsequently propose the modifications to incorporate \ac{SSP} algorithm for the mask constraint.
\begin{theorem}[Douglas-Rachford algorithm] \label{thm:definition_of_dr_algorith__feasible_problem}
Consider the following problem 
\begin{equation}  \label{eqn:generic_min_problem_with_sum_of_two_functions}
\underset{ \overline{\mat{X}}  \in \Cm^{\NT \times N }}{\text{minimize}} 
\quad \mathcal{G}\left( \overline{\mat{X}}  \right) + \mathcal{H}\left( \overline{\mat{X}}  \right),
\end{equation}
where $\mathcal{G}$ and $\mathcal{H}$ are proper closed convex functions, and which has at least one solution. Consider $\tau \in \left(0, \infty\right)$ and a sequence of relaxation parameters $\lambda_i \in \left(0, 2 \right) \ \forall i \geq 0$ {and satisfy $\sum_i \lambda_i \left( 2 \!-\! \lambda_i\right) \!=\! + \infty $} with some arbitrary initial $ \overline{\mat{Z}}$, then the following iterative scheme 
\begin{subequations} \label{eqn:dr_generic_algo__two_operator}
\begin{align}
    \overline{\mat{X}} &\leftarrow \prox_{\tau \mathcal{G}}\left( \overline{\mat{Z}} \right)  \\
    \overline{\mat{Z}} &\leftarrow \overline{\mat{Z}} + \lambda_i \left( \prox_{\tau \mathcal{H}}\left( 2 \overline{\mat{X}} -   \overline{\mat{Z}} \right)   -  \overline{\mat{X}} \right)
\end{align} 
\end{subequations}
converges weakly to a solution to \eqref{eqn:generic_min_problem_with_sum_of_two_functions}.
\begin{proof}
See, \eg, \cite{Eckstein_DRS:92}{\cite[Corollary~5.2]{Combettes2004}}.
\end{proof}
\vspace{-3.5mm}
\end{theorem}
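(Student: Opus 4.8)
The plan is to recast the stated two-step recursion as a relaxed (Krasnosel'ski\u{i}--Mann) fixed-point iteration of the \emph{Douglas--Rachford operator} and then to invoke the classical convergence theory for averaged nonexpansive operators. First I would introduce the reflected proximal operators $\rprox_{\tau\mathcal{G}} := 2\,\prox_{\tau\mathcal{G}} - \mathrm{Id}$ and $\rprox_{\tau\mathcal{H}} := 2\,\prox_{\tau\mathcal{H}} - \mathrm{Id}$. Since $\overline{\mat{X}} = \prox_{\tau\mathcal{G}}(\overline{\mat{Z}})$, the inner argument satisfies $2\overline{\mat{X}} - \overline{\mat{Z}} = \rprox_{\tau\mathcal{G}}(\overline{\mat{Z}})$, so that $\prox_{\tau\mathcal{H}}(2\overline{\mat{X}} - \overline{\mat{Z}}) - \overline{\mat{X}} = \tfrac{1}{2}\!\left(\rprox_{\tau\mathcal{H}}\,\rprox_{\tau\mathcal{G}}(\overline{\mat{Z}}) - \overline{\mat{Z}}\right)$. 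Substituting this into the $\overline{\mat{Z}}$-update collapses the scheme \eqref{eqn:dr_generic_algo__two_operator} into the single relaxed iteration
\begin{equation*}
\overline{\mat{Z}} \leftarrow (1-\lambda_i)\,\overline{\mat{Z}} + \lambda_i\, T\!\left(\overline{\mat{Z}}\right), \qquad T := \tfrac{1}{2}\!\left(\mathrm{Id} + \rprox_{\tau\mathcal{H}}\,\rprox_{\tau\mathcal{G}}\right).
\end{equation*}
This reduction is the organizing step: everything afterwards concerns the single operator $T$.

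Next I would establish the operator-theoretic properties. Because $\mathcal{G}$ and $\mathcal{H}$ are proper closed convex, the resolvent identity $\prox_{\tau f} = (\mathrm{Id} + \tau\,\partial f)^{-1}$ shows that $\prox_{\tau\mathcal{G}}$ and $\prox_{\tau\mathcal{H}}$ are firmly nonexpansive, equivalently $\tfrac12$-averaged. A standard consequence is that each reflection $\rprox_{\tau\mathcal{G}}$, $\rprox_{\tau\mathcal{H}}$ is nonexpansive; since the composition of nonexpansive maps is nonexpansive, $\rprox_{\tau\mathcal{H}}\,\rprox_{\tau\mathcal{G}}$ is nonexpansive, and therefore $T$ is firmly nonexpansive (averaged with constant $\tfrac12$). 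These are the structural facts that make the Krasnosel'ski\u{i}--Mann machinery applicable.

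I would then characterize $\mathrm{Fix}\,T$ and tie it to the minimizers. The key lemma is that $\overline{\mat{Z}}^\star \in \mathrm{Fix}\,T$ if and only if $\overline{\mat{X}}^\star := \prox_{\tau\mathcal{G}}(\overline{\mat{Z}}^\star)$ is a zero of $\partial\mathcal{G} + \partial\mathcal{H}$, i.e.\ a solution of \eqref{eqn:generic_min_problem_with_sum_of_two_functions}; this is obtained by expanding $\rprox_{\tau\mathcal{H}}\,\rprox_{\tau\mathcal{G}}(\overline{\mat{Z}}^\star) = \overline{\mat{Z}}^\star$ through the resolvent identities and the subdifferential sum rule. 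The standing assumption that a solution exists then guarantees $\mathrm{Fix}\,T \neq \emptyset$. With $T$ averaged, $\mathrm{Fix}\,T$ nonempty, and $\{\lambda_i\}\subset(0,2)$ obeying $\sum_i \lambda_i(2-\lambda_i) = +\infty$, the iterates $\{\overline{\mat{Z}}^{(i)}\}$ are Fej\'er-monotone with respect to $\mathrm{Fix}\,T$; the Krasnosel'ski\u{i}--Mann theorem (via the demiclosedness of $\mathrm{Id}-T$ and Opial's lemma) yields weak convergence of $\{\overline{\mat{Z}}^{(i)}\}$ to some $\overline{\mat{Z}}^\star\in\mathrm{Fix}\,T$, and continuity of $\prox_{\tau\mathcal{G}}$ transfers this to $\overline{\mat{X}}^{(i)}\rightharpoonup\overline{\mat{X}}^\star$, a solution.

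I expect the main obstacle to be the fixed-point/minimizer equivalence in the third step: it is the only place where convex-analytic content beyond mere nonexpansiveness enters, and it requires careful use of the resolvent identity and the sum rule for subdifferentials to show both that every fixed point yields a zero of $\partial\mathcal{G}+\partial\mathcal{H}$ and, conversely, that solvability forces $\mathrm{Fix}\,T\neq\emptyset$. (In our setting the variable $\overline{\mat{X}}\in\Cm^{\NT\times N}$ is finite-dimensional, so weak convergence coincides with norm convergence and the demiclosedness argument simplifies; nonetheless the general statement as cited is cleanest to prove through the Fej\'er-monotonicity framework of \cite{Combettes2004}.)
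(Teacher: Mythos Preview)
Your sketch is correct and is essentially the standard argument behind the references the paper cites: the paper does not give its own proof here but simply defers to \cite{Eckstein_DRS:92} and \cite[Corollary~5.2]{Combettes2004}, whose content is precisely the Krasnosel'ski\u{i}--Mann/Fej\'er-monotonicity route you outline (rewrite \eqref{eqn:dr_generic_algo__two_operator} as a relaxed iteration of the firmly nonexpansive Douglas--Rachford operator $T=\tfrac12(\mathrm{Id}+\rprox_{\tau\mathcal{H}}\rprox_{\tau\mathcal{G}})$, identify $\mathrm{Fix}\,T$ with the zeros of $\partial\mathcal{G}+\partial\mathcal{H}$, and apply the averaged-operator convergence theorem under $\sum_i\lambda_i(2-\lambda_i)=+\infty$). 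One small caveat: in infinite dimensions the passage from weak convergence of $\{\overline{\mat{Z}}^{(i)}\}$ to weak convergence of the shadow sequence $\{\overline{\mat{X}}^{(i)}\}$ is \emph{not} a consequence of mere continuity of $\prox_{\tau\mathcal{G}}$ and in fact requires a separate argument; your parenthetical remark that here $\overline{\mat{X}}\in\Cm^{\NT\times N}$ is finite-dimensional is exactly the right way to sidestep this, so the argument goes through as written.
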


\noindent Strikingly, if the problem \eqref{eqn:generic_min_problem_with_sum_of_two_functions} is infeasible, then for some cases one could still find an approximate solution through Douglas-Rachford method---see, \eg, \cite{Liu2017,Ryu2019}.

The Douglas-Rachford algorithm is described for two functions. However, the problem \eqref{eqn:generic_min_problem_with_sum_of_indicator_functions_evm_and_sum_of_rank1_qc} at hand has more than two functions\footnote{One could reformulate Douglas-Rachford as consensus \ac{ADMM}. However, we would like to employ \ac{SSP} algorithm that offers a solution to the \ac{EVM}-constrained \ac{LS-MSP}.}. Thus, we {reformulate} \eqref{eqn:generic_min_problem_with_sum_of_indicator_functions_evm_and_sum_of_rank1_qc} as
\begin{align}  \label{eqn:cruise_problem_formulation_for_dr}
\underset{ \overline{\mat{X}}  \in \Cm^{\NT \times N }}{\text{minimize}} 
\quad \mathcal{X}_{\mathbfcal{E}}\left( \overline{\mat{X}} \right) + \mathcal{X}_{\mathbfcal{C}} \left( \overline{\mat{X}} \right) ,
\end{align}
where $\mathcal{X}_{\mathbfcal{C}}\left( \overline{\mat{X}} \right) \coloneqq \sum_{m=1}^M \mathcal{X}_{\mathbfcal{C}_m}\left( \overline{\mat{X}} \right)$ such that the two-operator Douglas-Rachford splitting can be employed. Since the proximal operator corresponding to the sum of $M$ indicator functions $\mathcal{X}_{\mathbfcal{C}} \left( \overline{\mat{X}} \right) = \sum_{m=1}^M \mathcal{X}_{\mathbfcal{C}_m}\left( \overline{\mat{X}} \right)$, \ie, $ {\prox}_{\mathcal{X}_{\mathbfcal{C}}}$ doesn't yield a closed-form, we approximate it by employing \ac{SSP}. We will show numerically that \ac{ESSP} framework also requires relatively less number of iterations compared to \ac{EADMM} to reach desired level of performance in terms of \ac{EVM} and \ac{ACLR} metrics at the cost of extra computational complexity compared to \ac{EADMM}, yet offering lower cost compared to the generic interior-point based solvers.

The proximal operator corresponding to the indicator function for \ac{EVM} constraint, either wideband $\mathcal{X}_{\mathbfcal{E}_{\rm wb}}\left( \overline{\mat{X}} \right)$ or frequency-selective constraint $\mathcal{X}_{\mathbfcal{E}_{\rm fs}}\left( \overline{\mat{X}} \right)$, is an orthogonal projection onto Euclidean norm ball (cf. Theorem \ref{theorem:l2_ball_proj_operator}). 

The \ac{ESSP} framework is summarized in Algorithm \ref{alg:cruise_evm_ls_msp__sp_first}, where we have performed a cyclic rotation of the Douglas-Rachford algorithm steps such that the proximal operator corresponding to the mask constraint occurs first in the given iteration cycle.

\squeezeup
\subsection{Complexity Analysis}
In this section, we analyze the worst-case run-time complexity, in terms of required {real-valued multiplications and real-valued additions but ignore the offline complexity as described in Section~\ref{sec:complexity_analysis_of_evm_unconstrained}---see the summary in Table \ref{table:complexity_comparison_evm_constrained}.
}

{
{\it \ac{EMSP}}: The computational complexity of solving the optimization problems \eqref{eqn:wbevm_constrained_msp_problem_with_rank1_constraints__cvx} and \eqref{eqn:fsevm_constrained_msp_problem_with_rank1_constraints__cvx}---almost similar to \ac{MSP} and using results in \cite{Kumar2016}---is  $\BigOh\left(N^{4.5}\NT \right)$.
}

{\it \ac{EADMM}}: The step \eqref{eqn:eadmm_algo_update_Xbar_ver1__proj_on_evm} is additional compared to \ac{ADMM} Algorithm \ref{alg:consensus_admm_msp}. So, the $\prox$ operator in the step \eqref{eqn:eadmm_algo_update_Xbar_ver1__proj_on_evm} corresponding to the \ac{EVM} constraint requires total complex multiplications $\left(IN\NT\right)$, {$\left(I\left(3N\!-\!1\right)\NT\right)$, and $N\!+\!1$ real multiplications}. Hence, ignoring parallelization, the total run-time {complexities are $\left(I\left(9MN \!+\! M \!+\!6N\!+\! 2\right)\NT \right)$ and $\left(I\left(14MN \!+\! 10N \!+\!M\!-\!1\right)\!\NT\right)$ in terms of real multiplications and real additions, respectively}.

{
{\it \ac{ESSP}}: {In step~\eqref{eqn:cruise_algo_update_Ybar_ver2}, there are $N$ complex multiplications and $N$ real multiplications besides the computational complexity of (iterative) $\prox$ operator \eqref{eqn:cruise_algo_update_Ybar_ver2} approximated by the \ac{SSP} algorithm. We need $2N$ complex additions and $N$ real multiplications in the step~\eqref{eqn:cruise_algo_update_Zbar_ver2}. Finally, in the $\prox$ operator \eqref{eqn:cruise_algo_update_Xbar_ver2} corresponding to the \ac{EVM} constraint same as in \ac{EADMM} need to be considered. Therefore, the total online complexities in terms of real multiplications and additions are given in Table \ref{table:complexity_comparison_evm_constrained}, where $I^\prime$ corresponds to inner iterations using \ac{SSP} Algorithm \ref{alg:solution_ssp_ver1}.} 
}

\begin{table} 
\begin{center}
\vspace{-2mm}
\skblack{
\caption{Simulation Parameters for FDD NR (Rel-15) PDSCH Type-A}\label{tasim}
\scalebox{0.5}{
\resizebox{\textwidth}{!}{%
	\begin{tabular}{|l||c|c|c|} \hline
	\textbf{Parameters}                                       & \textbf{Test $1$}                        & \textbf{Test $2$}        & \textbf{Test $3$}        \\ \hline
	\multicolumn{1}{|l||}{Subcarrier Spacing}                 & \multicolumn{3}{c|}{15 kHz}                                                                    \\ \hline
	\multicolumn{1}{|l||}{Carrier Bandwidth (PRB alloc.)}            & \multicolumn{3}{c|}{$5$ MHz ($25$ PRBs)}                                                       \\ \hline
	\multicolumn{1}{|l||}{{Carrier Spacing for ACLR}}            & \multicolumn{3}{c|}{{$5$ MHz  upper and lower adjacent channels \cite{3GPPTS38.1042018NRReception}}}                                                 \\ \hline
	\multicolumn{1}{|l||}{DL {SU-MIMO} $\NT, \NR$}             & $2{\rm Tx}, 2 {\rm Rx}$                  & $8{\rm Tx}, 2 {\rm Rx}$  & $2/8{\rm Tx}, 2 {\rm Rx}$ \\ \hline
	\multicolumn{1}{|l||}{ Spatial Layers ($\rank$)  }        & \multicolumn{2}{c|}{Fixed $\rank$ 1}     & \multicolumn{1}{c|}{adaptive (10\% BLER)}            \\ \hline
	\multicolumn{1}{|l||}{Spatial Precoding} & \multicolumn{3}{c|}{(codebook-based) adaptive (10\% BLER)}                                                       \\ \hline
	\multicolumn{1}{|l||}{Modulation}                         & \multicolumn{2}{c|}{64QAM}               & {adaptive (10\% BLER)}                               \\ \hline 
	\multicolumn{1}{|l||}{Code-rate}                          & \multicolumn{2}{c|}{$\nicefrac{1}{2}$  $\nicefrac{5}{6}$} &  {adaptive (10\% BLER)}             \\ \hline
	\multicolumn{1}{|l||}{Channel Model}              & \multicolumn{3}{c|}{TDL-A ($300$ns, $10$Hz) \& spatial correlation Low \cite{3GPPTS38.101-4_NRUEdemod} }   \\ \hline
	\multicolumn{1}{|l||}{Channel \& Noise power}     & \multicolumn{3}{c|}{Practical LMMSE based} \\ \hline
	\multicolumn{1}{|l||}{HARQ max transmissions}     & \multicolumn{3}{c|}{4 (3 max retransmissions with rv $\{0,2,3,1\}$) \cite{3GPPTS38.212NR}}  \\ \hline
    \multicolumn{1}{|l||}{Other Information}          & \multicolumn{3}{c|}{LDPC; LMMSE-IRC receiver; no other impairments}   \\ \hline	
	\end{tabular}
	}
}}
\end{center}
\end{table}

\section{Performance Evaluation} \label{sec:simulation_results}

In this section, we evaluate the performance of the proposed algorithms for mask-compliant spectral precoding that are both \ac{EVM}-unconstrained and \ac{EVM}-constrained utilizing a 5G NR (Rel-15) compliant {inhouse} link-level simulator. Moreover, we compare the performance of the proposed algorithms with the conventional spectral precoding algorithms, accordingly. 

\subsection{Performance Measures}
We analyze the spectral precoding performance in terms of two figure-of-merits, namely \ac{OOBE} and in-band distortions, in particular assuming base station supporting sub-6 GHz, \eg, frequency ranges between 410 MHz and 7.125 GHz---referred to as FR1 in 5G NR \cite[Section~5.1]{3GPPTS38.1042018NRReception}. 

\subsubsection{Out-of-Band Distortion}
As mentioned in Section \ref{sec:performance_metrics_oobe}, mask and (conducted) \ac{ACLR} are typically the performance metrics to quantify the operating band unwanted emissions.

In practical systems, the (digital) spectrum shaping is followed by other (non-linear) digital and analog processing as illustrated in Fig. \ref{fig:block_diag_mimo_ofdm_tx_rx}. Consequently, there is some spectral regrowth phenomenon due to such (non-linear) components in the transmitter after spectrum shaping. Thus, an implementation margin in terms of \ac{ACLR} and mask requirements are necessary to cope with spectral regrowth. Hence, \ac{OOBE} performance must render better performance than the (overall) stipulated mask in the standard due to spectrum shaping to account for the  margin.

{{We have considered \ac{ACLR} corresponding to the 1st adjacent carrier in both upper and lower frequencies, where the minimum requirement is 45 dB---worst-case of measured \ac{ACLR} in the upper and lower channels  \cite[Section~6.6.3]{3GPPTS38.1042018NRReception}}. It is worth highlighting that these \ac{ACLR} requirements are for the complete radio chain, \ie, measurements need to be performed at the antenna connector. Thus, spectrum shaping may have some aggressive mask and \ac{ACLR} requirements to meet the minimum requirements at the antenna connector.

\subsubsection{In-Band Distortion}
In these simulations, the in-band distortion is not only quantified in terms of  \ac{EVM} but also in terms of \ac{BLER} \cite{3GPPTS34.121_rel6} and {throughput}~\cite{3GPPTS38.306_rel15}. 

For our link simulations, we present normalized throughput, \ie, normalizing the throughput results by the maximum achievable throughput without any spectral precoding or \ac{OOBE} reduction and other hardware impairments or imperfections. 

\begin{table} 
\begin{center}
\vspace{-2mm}
\skblack{
\caption{\ac{EVM} Requirements \cite{3GPPTS38.1042018NRReception}}\label{table:nr_wbevm_requirements}
\scalebox{0.275}{
\resizebox{\textwidth}{!}{%
	\begin{tabular}{|l||c|} \hline
	\textbf{Modulation Scheme}     & \textbf{EVM Threshold}          \\ \hline
	\multicolumn{1}{|l||}{QPSK}    & 17.5 \%                        \\ \hline
	\multicolumn{1}{|l||}{16QAM}   & 12.5 \%                        \\ \hline
    \multicolumn{1}{|l||}{64QAM}   & 8.0 \%                         \\ \hline
	\end{tabular}
	}
} }
\end{center}
\vspace{-3mm}
\end{table}

\subsection{Simulation Parameters and Assumptions}
The key simulation parameters for the \ac{PDSCH} with \mbox{type-A}\footnote{These data types refer to different \ac{PDSCH} demodulation reference signals allocation~\cite{3GPPTS38.214NR}.} and the three investigated test scenarios are summarized in Table \ref{tasim}, see, \eg,~\cite{3GPPTS38.2112018NRModulation,3GPPTS38.1042018NRReception}, for the detailed NR physical layer and performance requirements. We have considered 15 kHz subcarrier spacing for the NR numerology unless otherwise mentioned. Furthermore, no supporting signals are transmitted besides \ac{PDSCH} along with the \ac{DMRS} for the practical channel and noise variance estimation at the \ac{UE} side. Note that for simulations purpose, we have considered relatively narrow 5 MHz channel bandwidth with 25 \ac{PRB}s allocation, even though the proposed methods can be employed for arbitrary bandwidths. Furthermore, we have employed a low spatial correlation model, but the complexities of the proposed algorithms are oblivious of spatial correlation---cf. Table~\ref{table:complexity_comparison} and~\ref{table:complexity_comparison_evm_constrained}. However, the inband performance may degrade with increasing correlation, notably for high spatial rank setup. {In Table~\ref{table:nr_wbevm_requirements}, we have provided 3GPP \ac{NR} wideband average \ac{EVM}\footnote{{The NR standard puts a requirement on the equalized \ac{EVM}, cf.~\cite[Section~6.5.2.2]{3GPPTS38.1042018NRReception}, \eg, for 64QAM, the transmitter is allowed to induce \ac{EVM} up to 8\%. As mentioned in \cref{footnote:evm_equalized_vs_unequalized} (\Cpageref{footnote:evm_equalized_vs_unequalized}), the unequalized \ac{EVM} is an upper bound for the equalized \ac{EVM}, \ie, unequalized \ac{EVM} is tougher requirement than equalized, see, \eg, \cite{Karl_PhDThesis2019}.}} requirements \cite{3GPPTS38.1042018NRReception} as a reference according to the considered modulation alphabet.}

\iftrue
\begin{figure*}[!htbp]
  \centering
  \begin{minipage}{.3125\linewidth}
    \centering
    \subcaptionbox{\footnotesize \ac{ACLR}-vs.-Iterations. \label{fig:fig1_aclr1_vs_iter_sem2_sem1}}
      {\includegraphics[width=\linewidth,trim=0mm 0mm 0mm 0mm,clip]{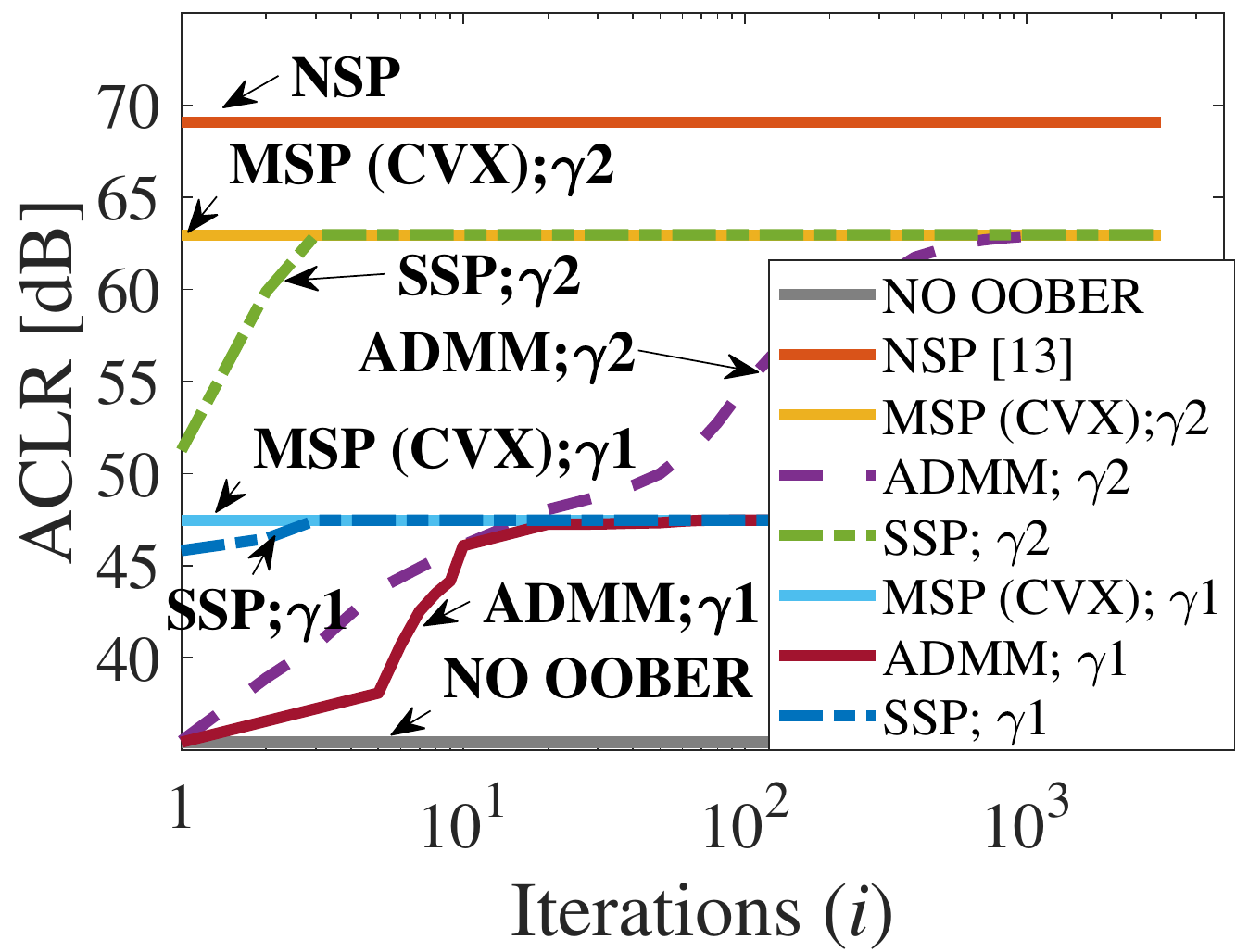}}
  \end{minipage}\quad
  \begin{minipage}{.3125\linewidth}
    \centering
    \subcaptionbox{\footnotesize {PSD} at the final iteration. \label{fig:fig3_psd_sem2_sem1_final_iter}}
      {\includegraphics[width=\linewidth,trim=0mm 0mm 0mm 0mm,clip]{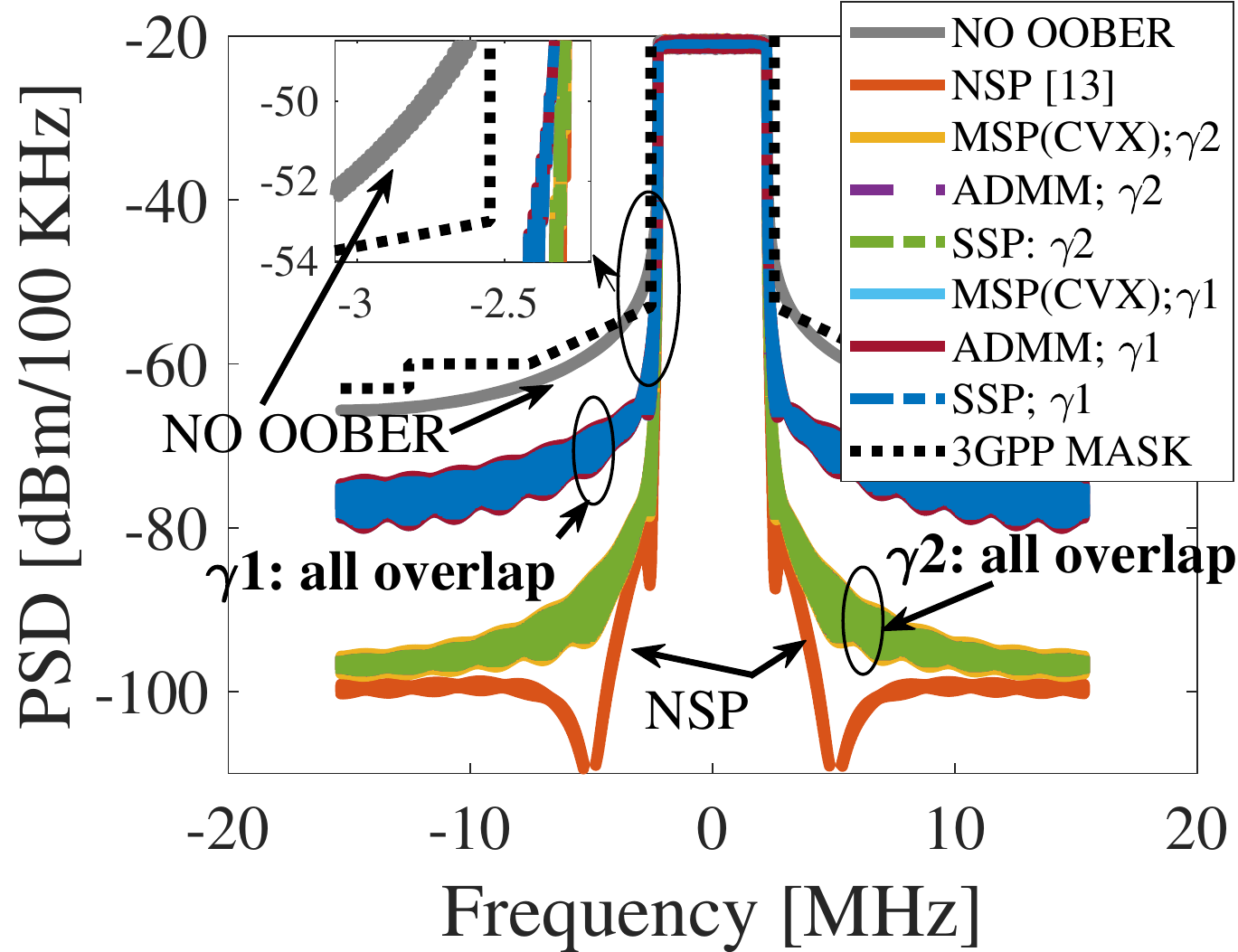}}
  \end{minipage}\quad
  \begin{minipage}{.3125\linewidth}
    \centering
    \subcaptionbox{\footnotesize \ac{EVM} [\%]~at~the~final~iteration. \label{fig:fig4_evm_in_percent_vs_prb_sem2_sem1_final_iter}}
      {\includegraphics[width=\linewidth,trim=0mm 0mm 0mm 0mm,clip]{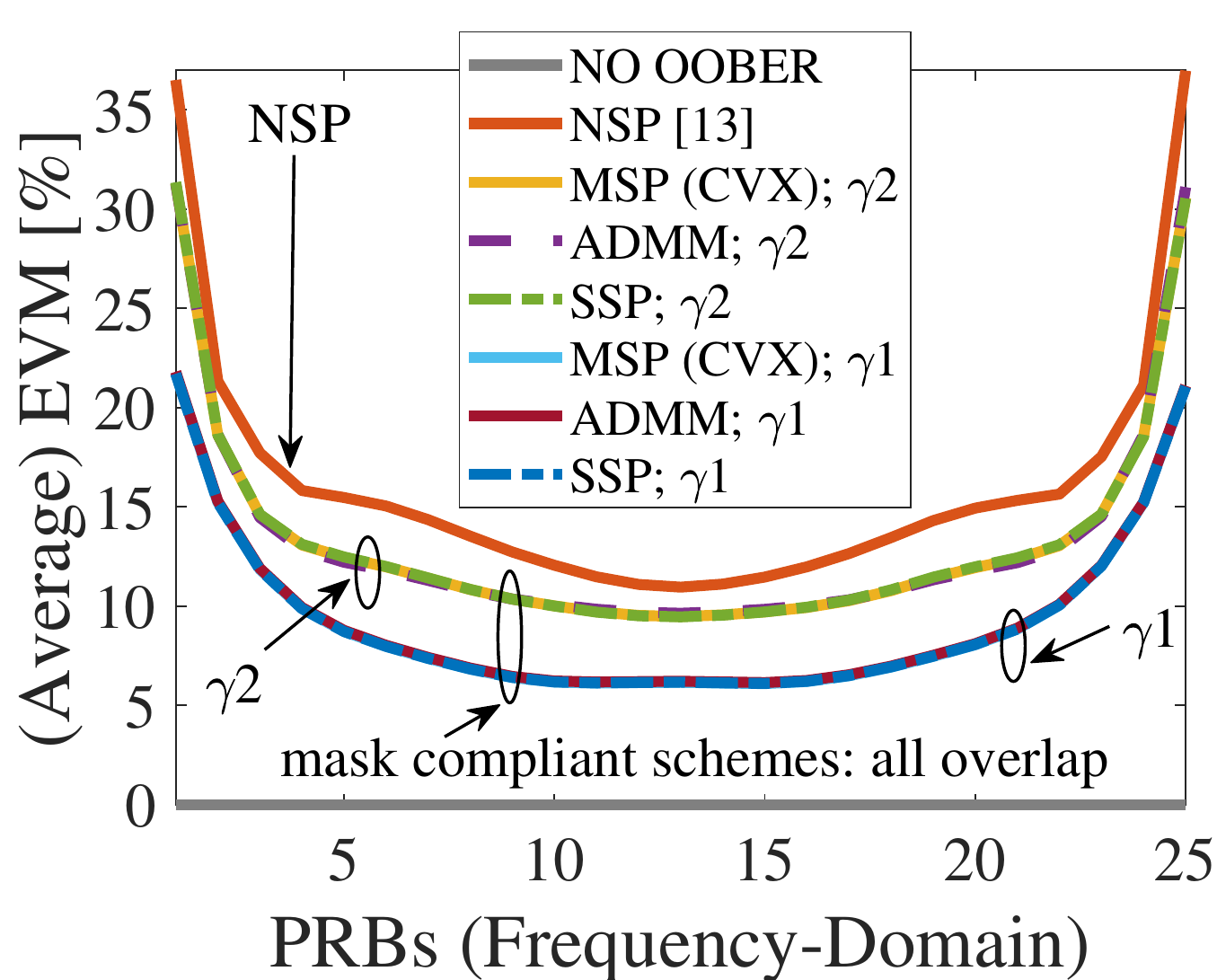}}
  \end{minipage}
  \vspace{-2mm}
  \caption{\footnotesize (a) Convergence behaviour of \ac{EVM}-unconstrained algorithms considering two different target masks, while (b) and (c) are achieved \ac{PSD} and \ac{EVM}  at the final iteration.}
  \label{fig:fig1__txevm_unconstrained_aclrevm_vs_iter__psd_evm_final_iter}
  \vspace{-5mm}
\end{figure*}
\fi

In addition to the parameters given in Table~\ref{tasim}, the discrete frequencies for mask compliant precoders are selected as $\nu \!\in\! \left\{ \mp 5010, \mp 4995, \mp 2565, \mp 2550 \right\}$ kHz, where the negative and positive frequencies correspond to the left and right side of the \ac{OOB} of the occupied signal spectrum, respectively. 
Notice that these discrete frequencies can be asymmetrically selected for the \ac{OOBE} suppression. The considered mask, referred to as \mbox{mask-1} is $\vec{\gamma}_{{\rm mask-1}} \!\coloneqq\! {\vec{\gamma}{1}} \!=\! \left[ -75, -75, -65, -65 \right]$ $\nicefrac{\rm{dBm}}{100 \ \rm{kHz}}$, corresponding to left/right side of the signal spectrum. Furthermore, we have considered another aggressive target mask-2, \ie, $\vec{\gamma}_{{\rm mask-2}}  \!\coloneqq\! {\vec{\gamma}{2}} \!=\! \left[ -85, -85, -75, -75 \right]$ $\nicefrac{\rm{dBm}}{100 \ \rm{kHz}}$ for \ac{ACLR} and \ac{EVM} performance, particularly. For this work, we have not optimized the selection of the discrete frequencies' set for the precoding. 

Based on our numerical grid-search utilizing Test 3 with 2 transmit antennas (cf. Table \ref{tasim}), we found a suitable $\rho = 10$ for consensus \ac{ADMM} algorithm (cf. Algorithm \ref{alg:consensus_admm_msp}). For the \ac{MSP}~\eqref{eqn:original_msp_optimization_problem} solution, we have employed CVX wrapper with SDPT3 solver \cite{Grant2014CVX:Beta}. Note that we do not have any additional radio hardware impairments at the transmitter and/or receiver besides the distortion generated at the transmitter by the considered spectral precoders, if enabled. 

Furthermore, for benchmarking purpose, we have also evaluated \ac{NSP} \cite{DeBeek2009SculptingPrecoder} and \ac{ENSP} \cite{VanDeBeek2009EVM-constrainedEmission}, where for each transmit antenna branch \ac{NSP} and \ac{ENSP} are performed independently.  More specifically, for each $j$-th transmit antenna, we perform $\overline{\vec{d}}_j \! = \! \mat{G} \vec{d}_j$, where $\vec{G} \! = \! \vec{I}_N \! - \! \alpha \mat{A}^{\rm H}\!\left(\!\mat{A}\!  \mat{A}^{\rm H}\right)^{-1}\!\mat{A}$ and $\mat{A}\!\left[m,k\right]$ in \eqref{eqn:definition_of_a_nu_k_simplified_without_diric_function}. For \ac{NSP} $\alpha\!=\!1$, and \ac{ENSP} $\alpha \! \in \! \Rm$ is a tunable parameter to meet the desired \ac{EVM}. 

\subsection{Simulation Results} \label{subsec:simulation_results}
In this section, we present simulation results for both unconstrained and constrained \ac{EVM} spectral precoding methods.

In Fig. \ref{fig:fig1__txevm_unconstrained_aclrevm_vs_iter__psd_evm_final_iter}, we present the convergence behaviour of the proposed \ac{EVM}-unconstrained algorithms in terms of \ac{OOB} and in-band performance considering Test $3$ (cf. Table \ref{tasim}) and two different target masks, \ie, mask-1 and mask-2.  

Figure \ref{fig:fig1__txevm_unconstrained_aclrevm_vs_iter__psd_evm_final_iter}\subref{fig:fig1_aclr1_vs_iter_sem2_sem1}  shows the \ac{ACLR} performance against iterations. Evidently, \ac{SSP} converges in $2$-$3$ iterations for both target masks to achieve the same \ac{ACLR} results as rendered by \ac{MSP}(CVX). Under relaxed mask-1 constraint, \ac{ADMM} requires nearly $80$ iterations, while for aggressive mask-2 it requires approximately $800$ iterations. On the other hand, we have observed that CVX achieves the \ac{MSP} solution in nearly $25$ iterations, but we could not manage to extract the result for every single iteration{; hence, we show the final result rendered by the CVX solver corresponding to the last iteration in the figures}. We have also observed similar behaviour for \ac{EVM} against iterations. For the subsequent \ac{EVM}-unconstrained results, we consider mask-1 and fix the iterations of \ac{ADMM} and \ac{SSP} as $80$, and $2$, respectively. Furthermore, we have numerically observed that these spectral precoders have a negligible impact on~the~peak~to~average~power~ratio. 

Figure \ref{fig:fig1__txevm_unconstrained_aclrevm_vs_iter__psd_evm_final_iter}\subref{fig:fig3_psd_sem2_sem1_final_iter} exhibits the average \ac{PSD} versus frequency for the proposed algorithms and both masks considering final iterations of the respective algorithms. The NR mask corresponding to a medium range BS with the maximum output of $38$ $\rm{dBm}$ \cite[Section~6.6.4]{3GPPTS38.1042018NRReception} is also shown for the completeness, but the mask is normalized according to the normalized transmit signal power of $0$ $\rm{dBm}$, \ie, approximately $-21.5$ $\nicefrac{\rm{dBm}}{100 \ \rm{kHz}}$, in the link simulations. All the proposed methods in addition to the prior art fulfils the 3GPP \ac{NR} mask. 

Figure \ref{fig:fig1__txevm_unconstrained_aclrevm_vs_iter__psd_evm_final_iter}\subref{fig:fig4_evm_in_percent_vs_prb_sem2_sem1_final_iter} depicts the average \ac{EVM} distribution per \ac{PRB}s \cite{3GPPTS38.2112018NRModulation}. The edge \ac{PRB}s have relatively high distortion power compared to the central \ac{PRB}s. Notice that the NR bandwidth is slightly asymmetric with respect to the direct-current carrier. Moreover, the discrete selected frequency points (\vec{\nu}) are not symmetric with respect to the direct-current carrier. Therefore, one could consequently observe that the \ac{EVM} distribution in the frequency-domain is asymmetric.

Subsequently, we present the performance of \ac{EVM}-constrained while considering mask-1 constraint. In Fig. \ref{fig:fig10__convergence_behaviour__of_wideband_evm_constrained}, we exhibit the convergence behaviour of the proposed wideband \ac{EVM}-constrained algorithms in terms of the in-band, namely, \ac{EVM}, and \ac{OOB} performance, namely, \ac{ACLR} and \ac{PSD}. For these simulations, we have considered $8\%$ wideband unequalized \ac{EVM}.

\begin{figure*}[!htbp]
  \centering
  \begin{minipage}{.3125\linewidth}
    \centering
    \subcaptionbox{\footnotesize Ratio of \ac{OOBE} power to mask-1. \label{fig:fig9_ratio_of_oob_power_to_target_SEM_cruise_ssp} }
      {\includegraphics[width=\textwidth,trim=0mm 0mm 0mm 0mm,clip]{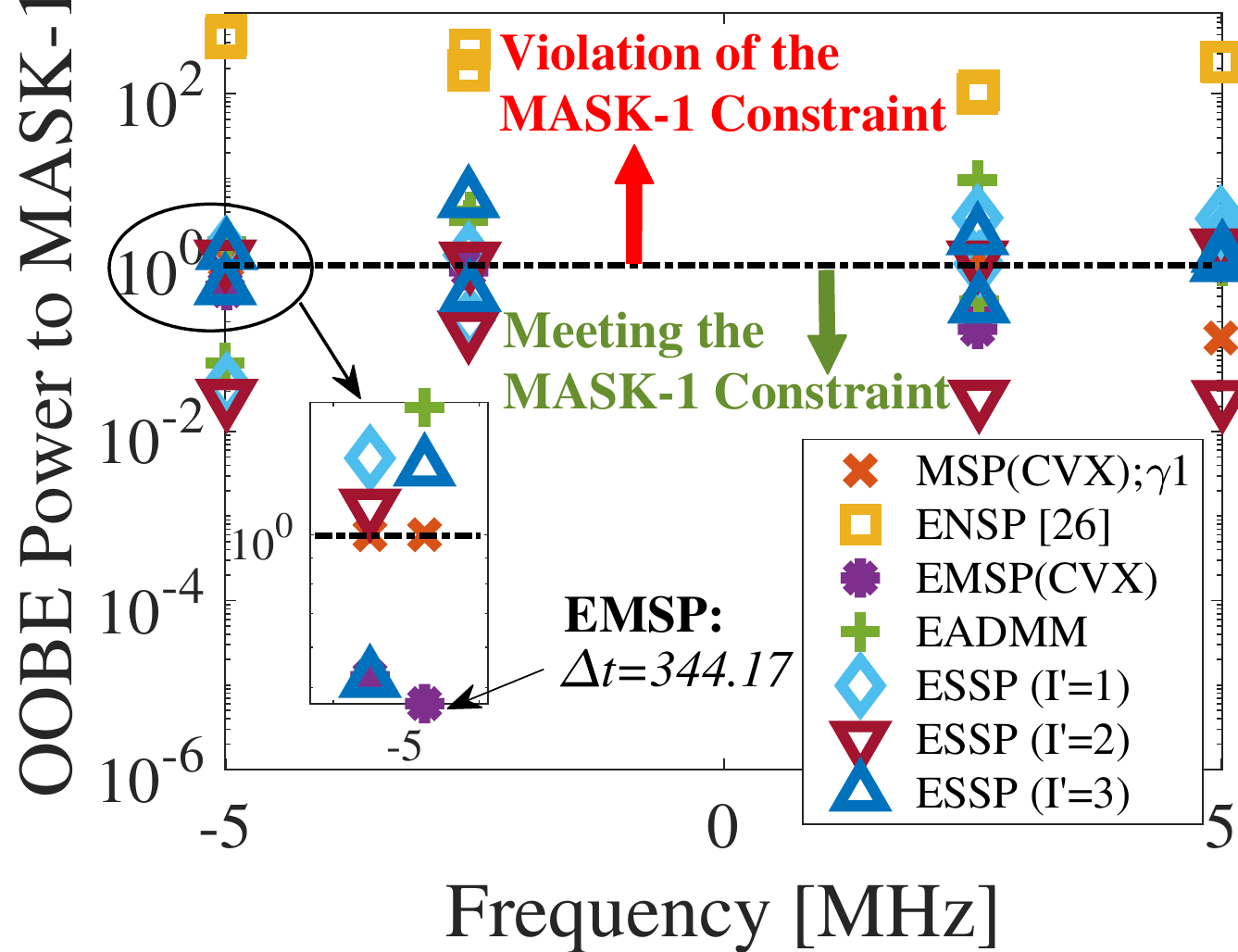}}
  \end{minipage}\quad
   \begin{minipage}{.3125\linewidth}
    \centering
    \subcaptionbox{\footnotesize \ac{ACLR}-vs.-Iterations of \ac{ESSP}. \label{fig:fig9_aclr1_vs_iter_sem1_cruise_ssp_wbevm_shows_divergence_final} }
    {\includegraphics[width=\textwidth,trim=0mm 0mm 0mm 0mm,clip]{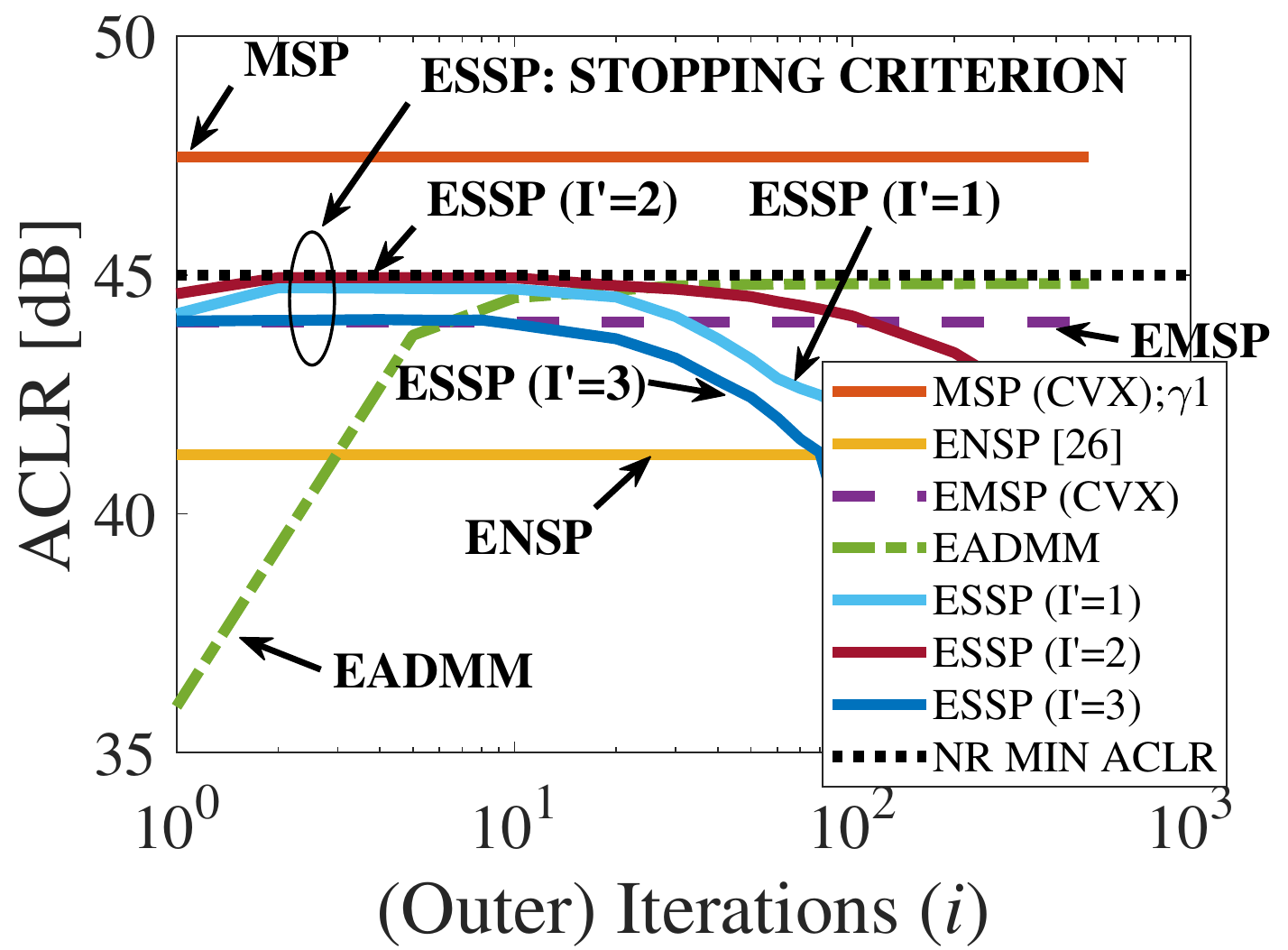}}
  \end{minipage}\quad
  \begin{minipage}{.3125\linewidth}
    \centering
    \subcaptionbox{\footnotesize \ac{EVM}-vs.-Iterations.  \label{fig:fig9_evm_vs_iter_sem1_cruise_shows_divergence_final}}
      {\includegraphics[width=\textwidth,trim=0mm 0mm 0mm 0mm,clip]{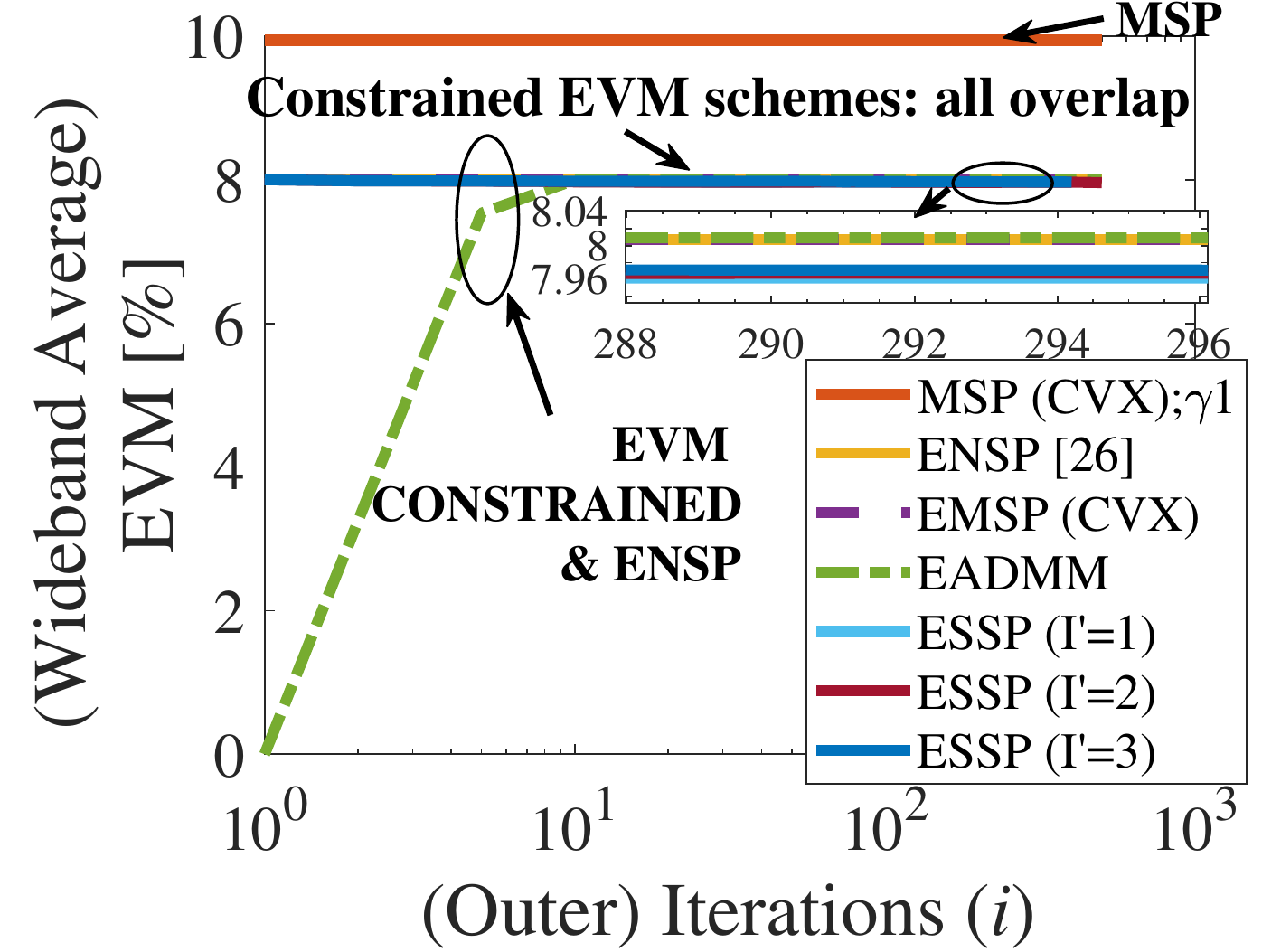}}
  \end{minipage}
  \vspace{-2mm}
  \caption{\footnotesize Convergence behaviour of wideband $8\%$ \ac{EVM}-constrained spectral precoding. }
  \label{fig:fig10__convergence_behaviour__of_wideband_evm_constrained} 
  \vspace{-5.5mm}
\end{figure*}

Figure \ref{fig:fig10__convergence_behaviour__of_wideband_evm_constrained}\subref{fig:fig9_ratio_of_oob_power_to_target_SEM_cruise_ssp} illustrates that the problem \eqref{eqn:generic_min_problem_with_sum_of_indicator_functions_evm_and_sum_of_rank1_qc} is infeasible, \ie, the intersection of the two convex sets corresponding to mask and \ac{EVM} constraints are empty, by considering a single \ac{OFDM} symbol and by randomly selecting any of the transmit antenna branch corresponding to Test $3$. More specifically, we show \ac{OOBE} power to the target mask, \ie, $\left({\left| \mat{A} \overline{\vec{d}}_j \right|^2}\right) \oslash \left({ \Delta t \bm{\gamma} } \right)$, where $\oslash$ denotes elementwise division. If this ratio is larger than $1$, then it implies that the target mask constraint is violated at the considered $\nu$-th discrete frequency point. On the contrast, if this ratio is less than and equal to $1$, the mask constraint is met by the spectrally precoded \ac{OFDM} symbol. {Notice that, in the case of \ac{EMSP}, $\Delta t = 344.17$ was obtained from CVX \cite{Grant2014CVX:Beta} for the given \ac{OFDM} symbol and transmit antenna branch, which implies that the original problem is infeasible. As it can be observed from the figure, the solution rendered by the \ac{MSP} naturally meets the mask constraint since it is an inequality constrained minimization problem, \ie, this ratio will always be less than {or} equal to $1$ as it is a feasible problem}. The spectrally precoded symbol generated by \ac{ENSP} algorithm does not meet any of the $8$ mask constraints while the \ac{EVM} constraint is kept to $8\%$. The symbol generated by \ac{EMSP} meets the target mask constraints but after considering $\Delta t = 344.17$, which evidently penalizes the achieved \ac{ACLR} compared to the \ac{ACLR} achieved through \ac{MSP} algorithm in order to meet the wideband $8\%$ \ac{EVM} constraint. Furthermore, the precoded symbol rendered by \ac{EADMM} and \ac{ESSP} apparently show that for some discrete frequency points mask constraint is met while for some points it fails. Therefore, long-term average of \ac{PSD}s or \ac{ACLR}s over several slots of \ac{OFDM} signals generated by the proposed algorithms outperform significantly heuristic \ac{ENSP} method in terms of out-of-band performance.  In other words, we can construe from the figure that the \ac{EVM} constraint penalizes the achievable \ac{ACLR} compared to the \ac{ACLR} achieved by unconstrained \ac{EVM} problem.

Figure \ref{fig:fig10__convergence_behaviour__of_wideband_evm_constrained}\subref{fig:fig9_aclr1_vs_iter_sem1_cruise_ssp_wbevm_shows_divergence_final} demonstrate the \ac{ACLR} performance against iterations for all the proposed \ac{EVM}-constrained spectral precoding algorithms. In the case of \ac{ESSP}, we illustrate the performance with three different internal iterations corresponding to the $\prox$ operator for the sum of the indicator functions to the mask constraints, \ie, calling \ac{SSP} algorithm. Strikingly, the \ac{ACLR} performance accomplished by \ac{ESSP} without stopping criterion diverges after typically $2$ iterations, although the wideband \ac{EVM} performance convergence behaviour is consistent over the iterations, \ie, meeting the wideband \ac{EVM} constraint---see Fig. \ref{fig:fig10__convergence_behaviour__of_wideband_evm_constrained}\subref{fig:fig9_evm_vs_iter_sem1_cruise_shows_divergence_final}. Remarkably, the result rendered by \ac{EADMM} do not digress or diverge over iterations, unlike observed in Douglas-Rachford-based \ac{ESSP} algorithm without early stopping criterion. Thus, \ac{ESSP} requires a stopping criterion to render an approximate solution. We choose to terminate the outer iterations within \ac{ESSP} when the \ac{ACLR} performance degrades with the increasing iterations. Hence, we employ an early stopping criterion for the results presented in the sequel, which typically terminates the algorithm at around $2$ outer iterations (for all the considered test cases). 

\iftrue
\begin{figure*}[tp!]
  \centering
  \begin{minipage}{.3125\linewidth}
    \centering
    \subcaptionbox{\footnotesize \ac{EVM} [\%] over subcarriers. \label{fig:fig10_evm_in_percent_vs_sc_sem1_final} } 
      {\includegraphics[width=\textwidth,trim=0mm 0mm 0mm 0mm,clip]{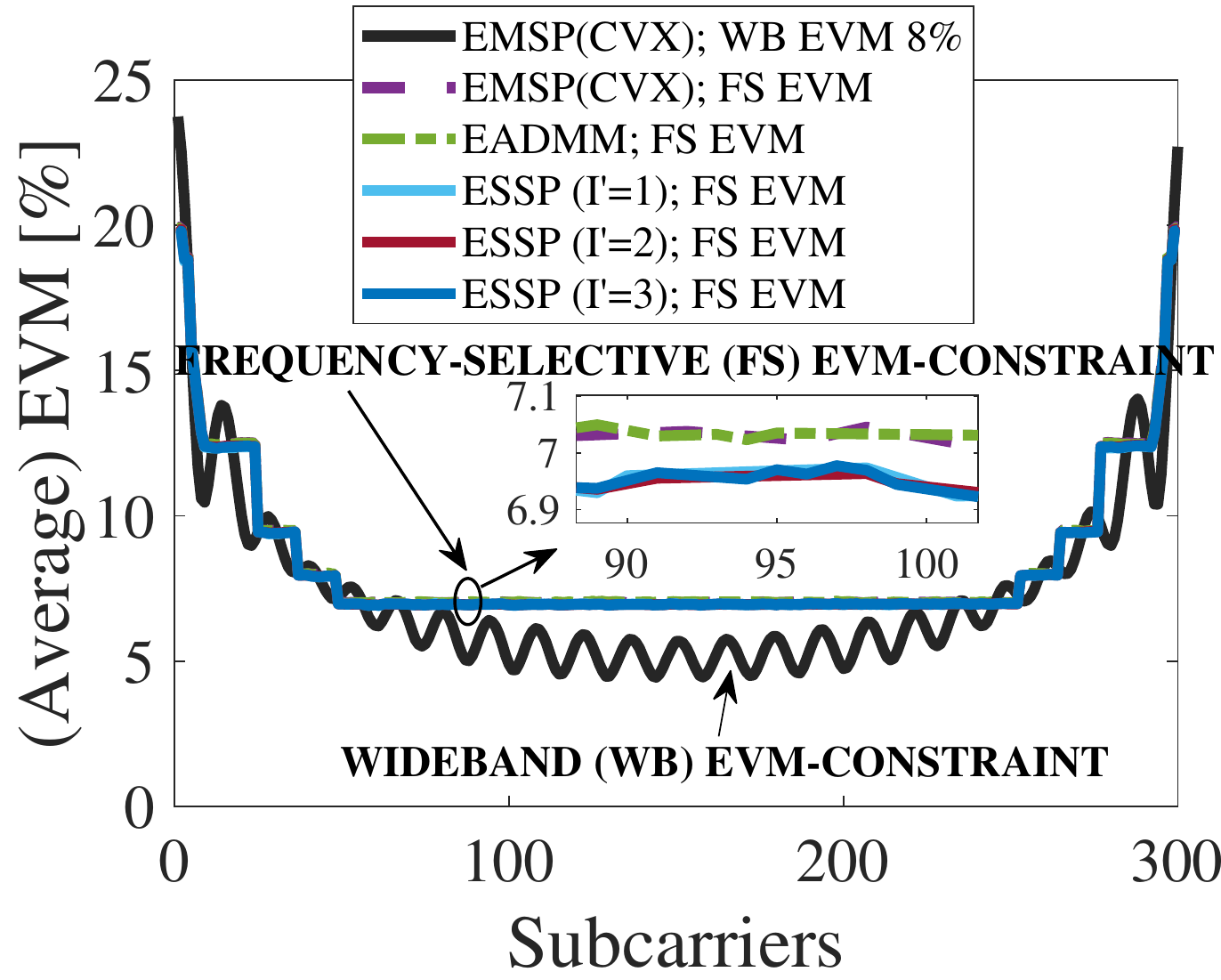}}
  \end{minipage}\quad
   \centering
   \begin{minipage}{.3125\linewidth}
    \centering
    \subcaptionbox{\footnotesize \ac{ACLR}-vs.-Iterations of \ac{ESSP}. \label{fig:fig10_aclr1_vs_iter_sem1_fs_evm_shows_divergence_final} }
    {\includegraphics[width=\textwidth,trim=0mm 0mm 0mm 0mm,clip]{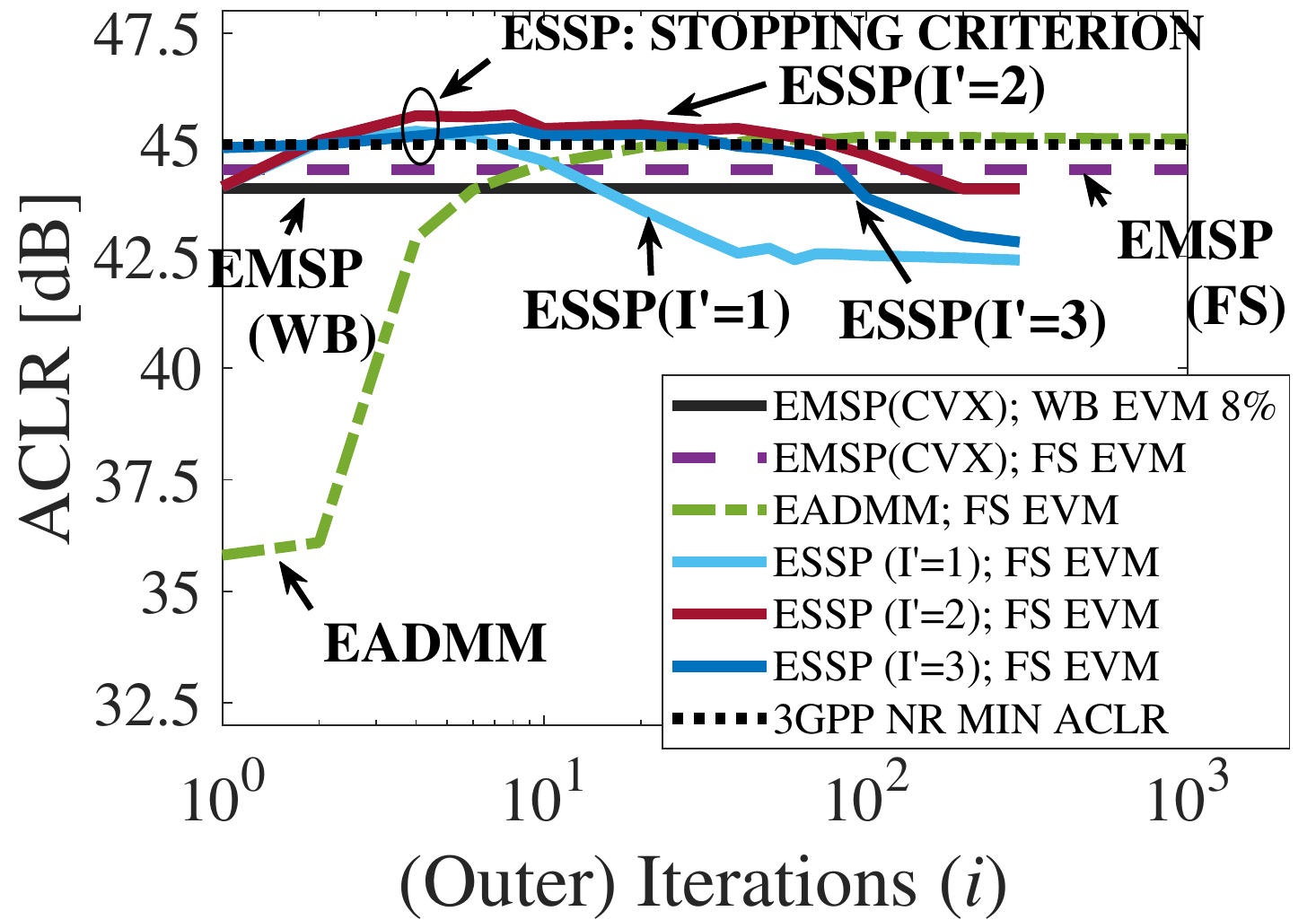}}
  \end{minipage}\quad
   \centering
  \begin{minipage}{.3125\linewidth}
    \centering
   \subcaptionbox{\footnotesize \ac{EVM}-vs.-iterations. \label{fig:fig10_wbevm_vs_iter_sem1_fs_evm_shows_divergence_final}}
      {\includegraphics[width=\textwidth,trim=0mm 0mm 0mm 0mm,clip]{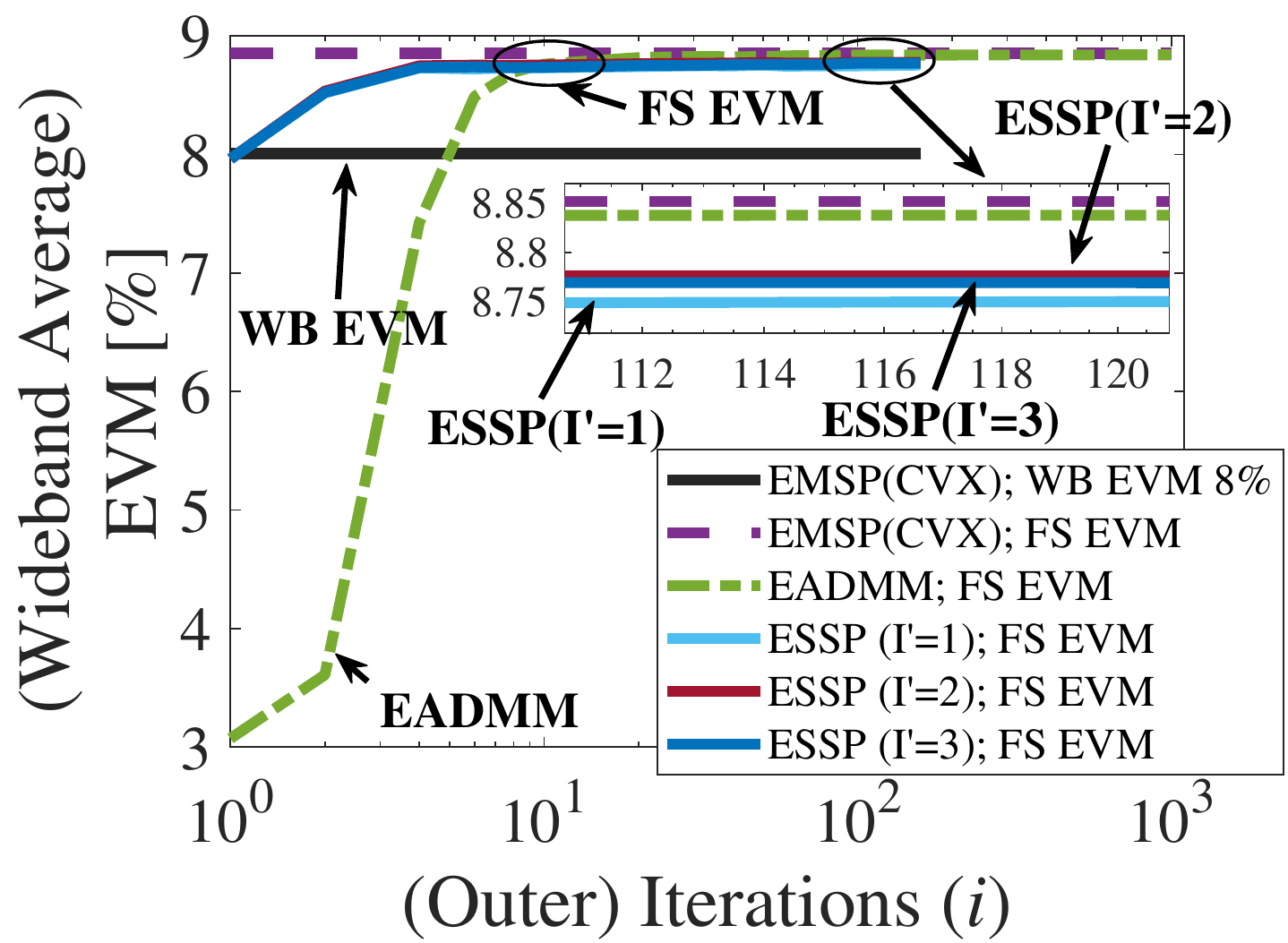}}
  \end{minipage}
  \vspace{-2.5mm}
  \caption{\footnotesize Convergence behaviour of frequency-selective \ac{EVM}-constrained spectral precoding.}
  \label{fig:fig10__convergence_behaviour__of_frequency_sel_evm_constrained} 
  \vspace{-2.5mm}
\end{figure*}
\fi

Figure \ref{fig:fig10__convergence_behaviour__of_frequency_sel_evm_constrained} demonstrates the convergence behaviour of frequency-selective \ac{EVM} algorithms. The frequency-selective \ac{EVM} constraints are arbitrarily chosen to exhibit the efficacy of the proposed frequency-selective algorithms. The first lower and upper edge \ac{PRB}s were set to have \ac{EVM} requirement $\big[ 20\%, 20\%, 19\%, 19\%, 16\%, 15\%, 14\%, 13\%, 12.5\%,  12.5\%, \allowbreak  12.5\%,  12.5\% \big]$ such that the edge most subcarrier of the edge \ac{PRB} corresponds to  $20\%$ \ac{EVM} and the inner most subcarrier corresponds to $12.5\%$ \ac{EVM}. Furthermore, all the subcarriers of the second, third, and fourth edge \ac{PRB}s are set to have the flat target of $12\%$, $9.5\%$, and $8\%$ \ac{EVM}, respectively. The remaining central subcarriers are set to $7\%$. This represents a wideband average \ac{EVM} of $\sim\!8.9\%$ across the subcarriers. The \ac{EVM} distribution over subcarriers are shown in Fig.~\ref{fig:fig10__convergence_behaviour__of_frequency_sel_evm_constrained}\subref{fig:fig10_evm_in_percent_vs_sc_sem1_final}, where all the proposed methods meet the frequency-selective \ac{EVM} requirement. In Fig.~\ref{fig:fig10__convergence_behaviour__of_frequency_sel_evm_constrained}\subref{fig:fig10_wbevm_vs_iter_sem1_fs_evm_shows_divergence_final}, all the frequency-selective \ac{EVM} algorithms have similar wideband \ac{EVM} performance. The imposed mask and frequency-selective \ac{EVM} constraints make the problem infeasible which is evident from the divergence of \ac{ACLR} metric rendered by \ac{ESSP} algorithms without any stopping criterion as depicted in Fig.~\ref{fig:fig10__convergence_behaviour__of_frequency_sel_evm_constrained}\subref{fig:fig10_aclr1_vs_iter_sem1_fs_evm_shows_divergence_final}. We observe that the \ac{ACLR} performance delivered by \ac{EADMM} are quite stable over iterations even though the problem is infeasible. Reiterating that the \ac{ENSP} can not be employed or extended to support frequency-selective \ac{EVM} requirement.

In Fig.~\ref{fig:fig10__oob_in_band_performance__of_wideband_evm_constrained}, we exhibit in-band and \ac{OOBE} performance of the proposed $8\%$ wideband \ac{EVM}-constrained algorithms, where \ac{ESSP} employs $2$ inner and $2$ outer iterations (with stopping criterion) for all the 3 test cases. Further, \ac{EADMM} employs $40$ iterations for the performance evaluation.  More specifically, we show the in-band performance, in terms of \ac{BLER} and (normalized) throughput against received \ac{SNR} at the user-equipment and \ac{EVM} distribution for the final chosen iterations of the respective algorithms. We also depict the \ac{OOBE} performance, in terms of \ac{ACLR} versus (outer) iterations and average \ac{PSD}s of the the final chosen iterations of the respective algorithms.

\iftrue
\begin{figure*}[tp!]
  \centering
  \begin{minipage}{.3125\linewidth}
    \centering
    \subcaptionbox{\footnotesize \ac{ACLR}-vs.-Iterations for Test 3 (8Tx2Rx). \label{fig:fig9_aclr1_vs_iter_sem1_wbevm_final}}
      [0.85\linewidth]{\includegraphics[width=\textwidth,trim=0mm 0mm 0mm 1mm,clip]{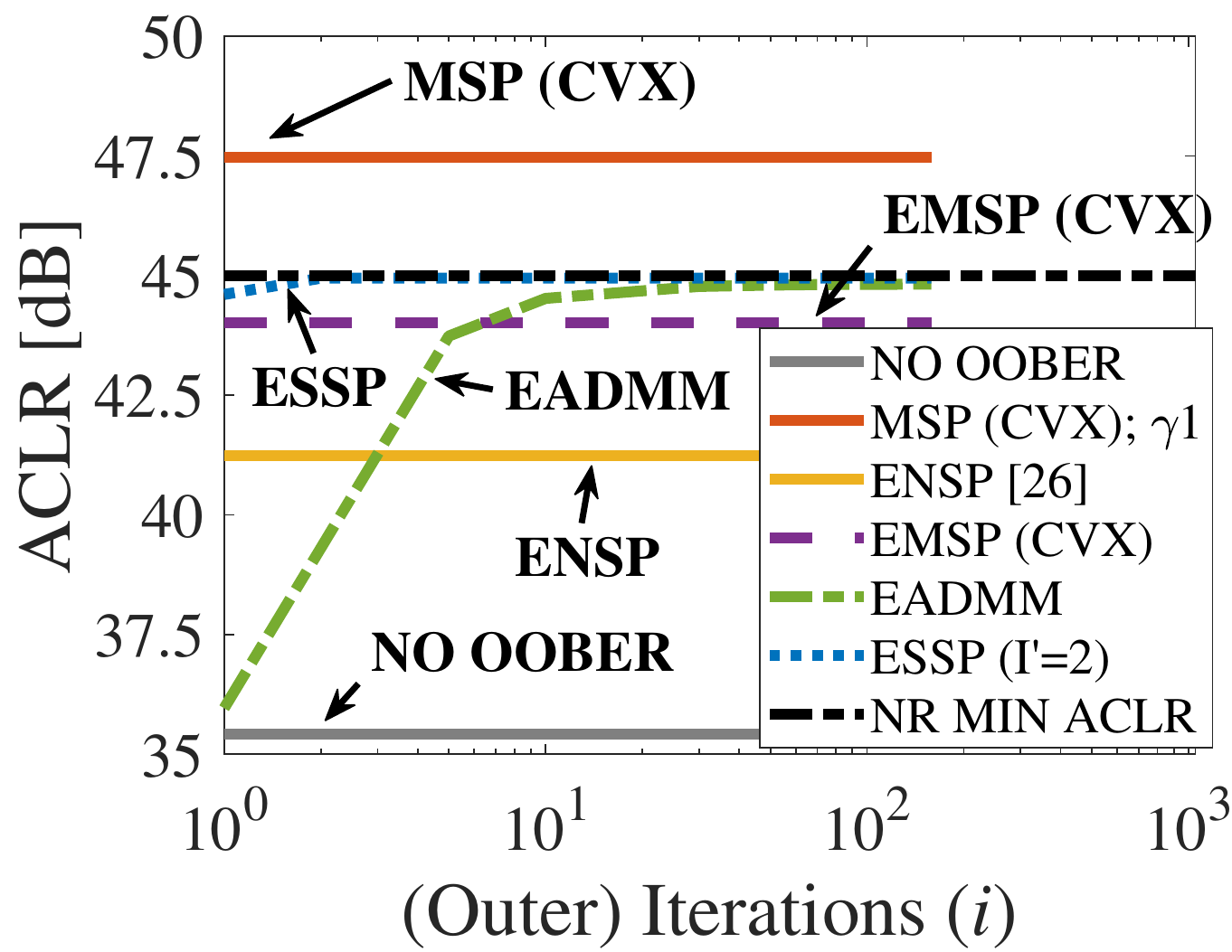}}
  \end{minipage}\quad
   \centering
  \begin{minipage}{.3125\linewidth}
    \centering
    \subcaptionbox{\footnotesize \ac{PSD} for Test 3 (8Tx2Rx). \label{fig:fig9_psd_sem1_final_iter} }
    [0.85\linewidth]{\includegraphics[width=\textwidth,trim=0mm 0mm 0mm 1mm,clip]{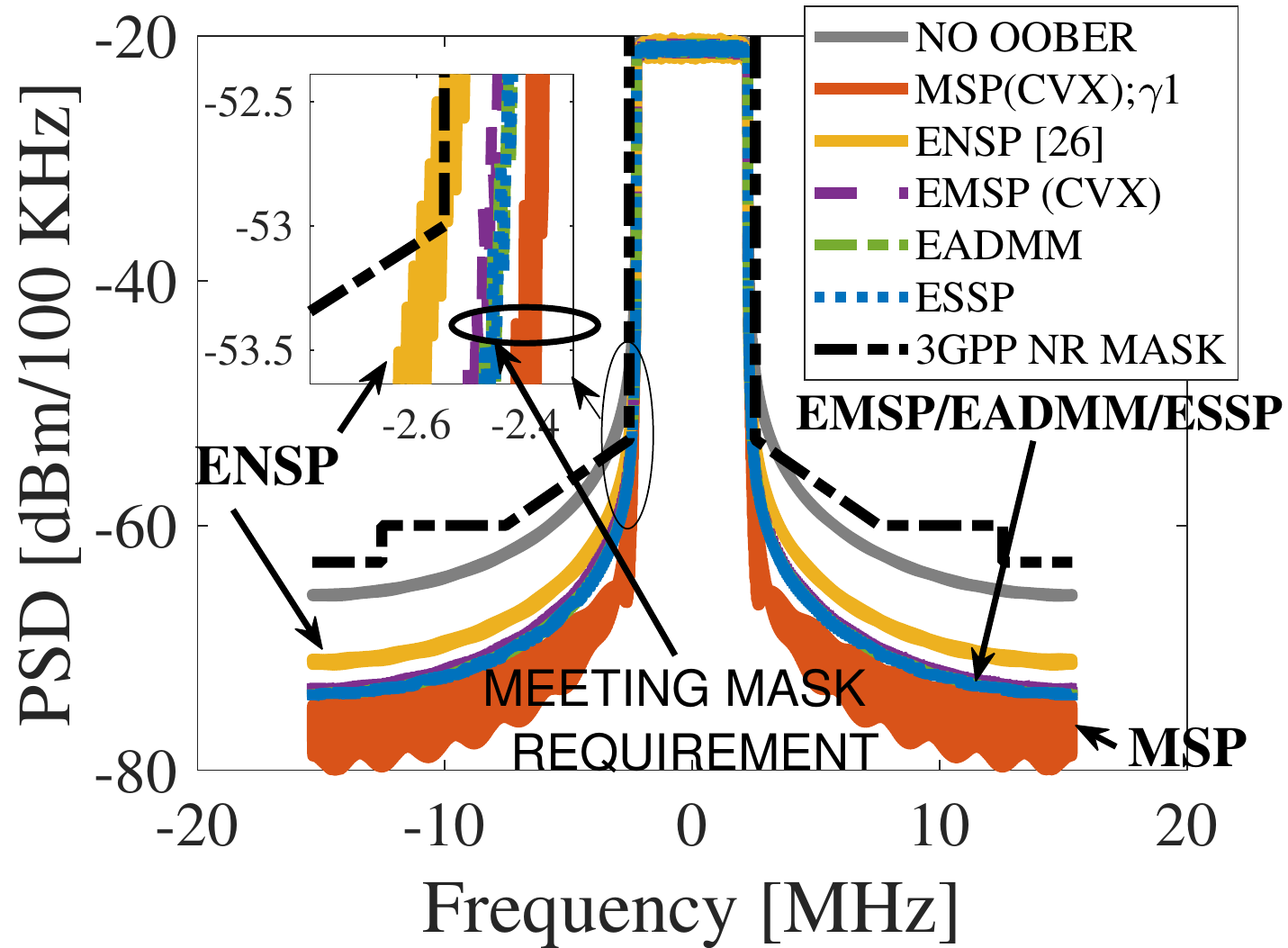}}
  \end{minipage}\quad
  \begin{minipage}{.3125\linewidth}
    \centering
    \subcaptionbox{\footnotesize \ac{EVM} distribution for Test 3 (8Tx2Rx). \label{fig:fig9_evm_in_percent_vs_sc_sem1_final}}
     [0.85\linewidth] {\includegraphics[width=\textwidth,trim=0mm 0mm 0mm 1mm,clip]{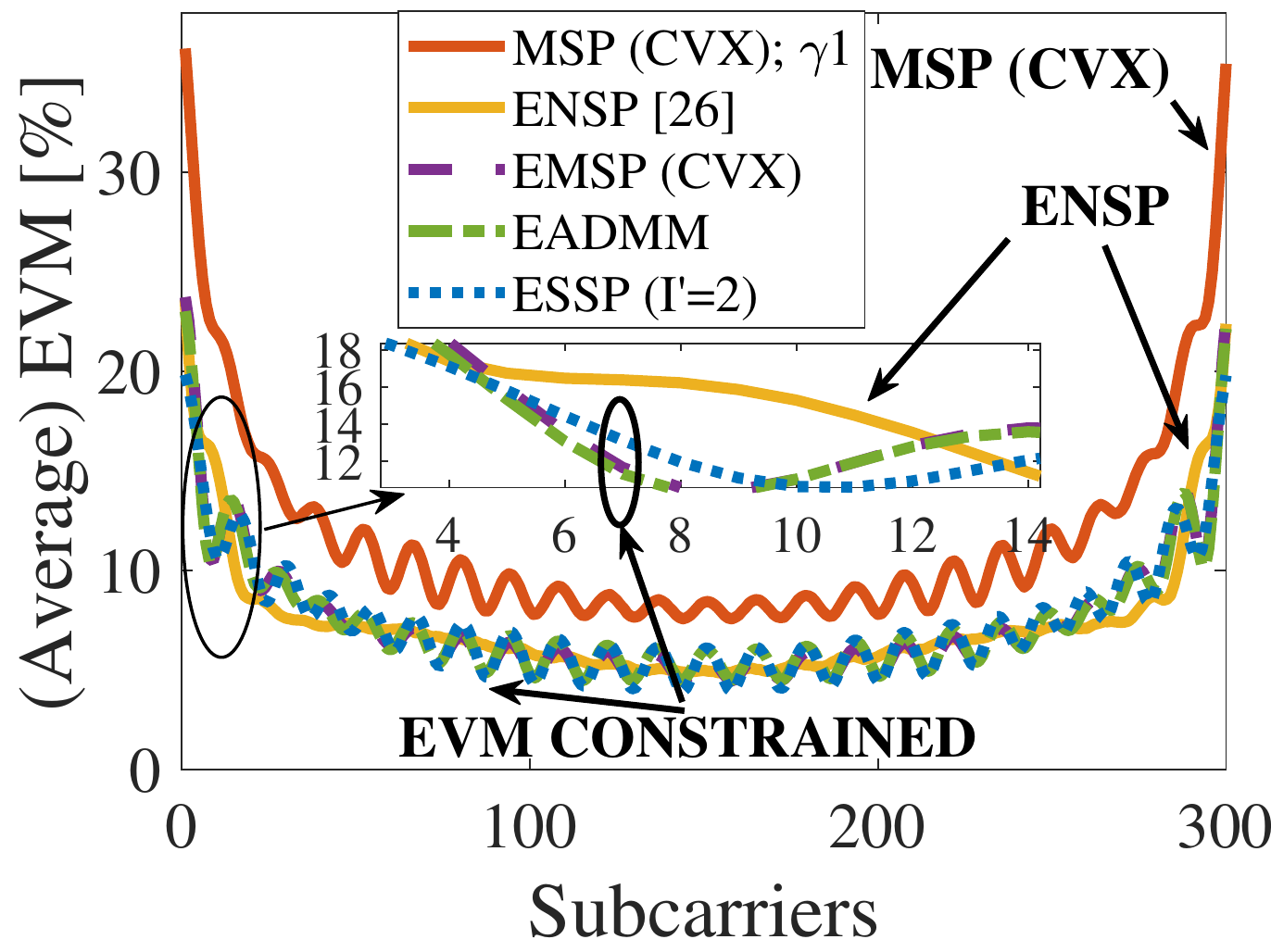}}
  \end{minipage}
  
  \vspace{1mm}
  
  \begin{minipage}{.3125\linewidth}
    \centering
    \subcaptionbox{\footnotesize \ac{BLER}-vs.-\ac{SNR} for \mbox{Test~1} (2Tx2Rx, Rank~1). \label{fig:fig10_test2_2Tx_bler_vs_snr__qam64} }
    [0.85\linewidth]  {\includegraphics[width=\textwidth,trim=0mm 0mm 0mm 0mm,clip]{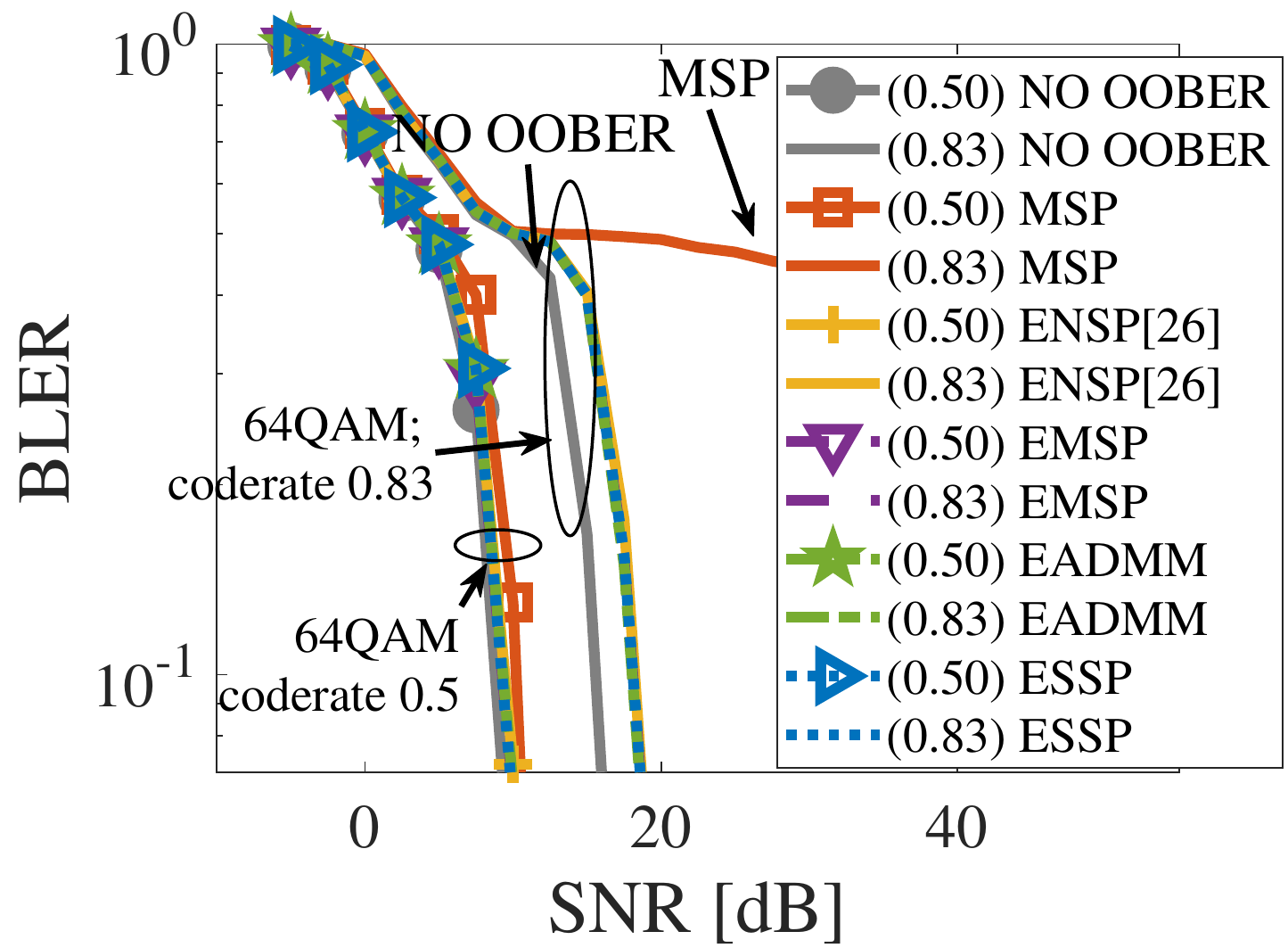}}
  \end{minipage}\quad
  \begin{minipage}{.3125\linewidth}
    \centering
    \subcaptionbox{\footnotesize \ac{BLER}-vs.-\ac{SNR} for \mbox{Test~2} (8Tx2Rx, Rank 1). \label{fig:fig10_test4_8Tx_bler_vs_snr__qam64} }
    [0.85\linewidth]  {\includegraphics[width=\textwidth,trim=0mm 0mm 0mm 0mm,clip]{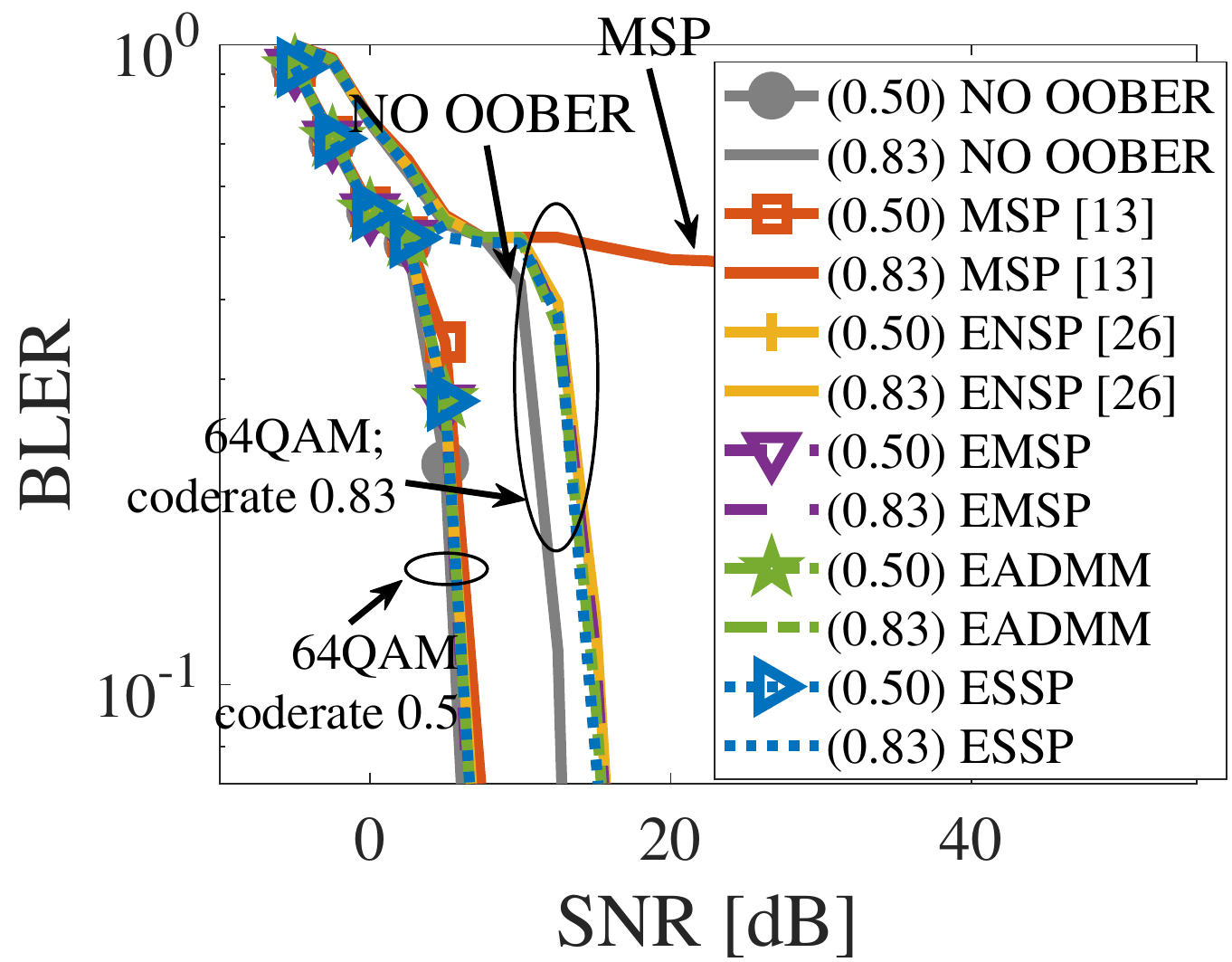}}
  \end{minipage}\quad
  \centering
  \begin{minipage}{.3125\linewidth}
    \centering
    \subcaptionbox{\footnotesize Throughput-vs.-\ac{SNR} for Test~3 (8Tx2Rx). \label{fig:fig10_test7_8Tx2Rx_la_tp_vs_snr}}
    [0.85\linewidth]  {\includegraphics[width=\textwidth,trim=0mm 0mm 0mm 0mm,clip]{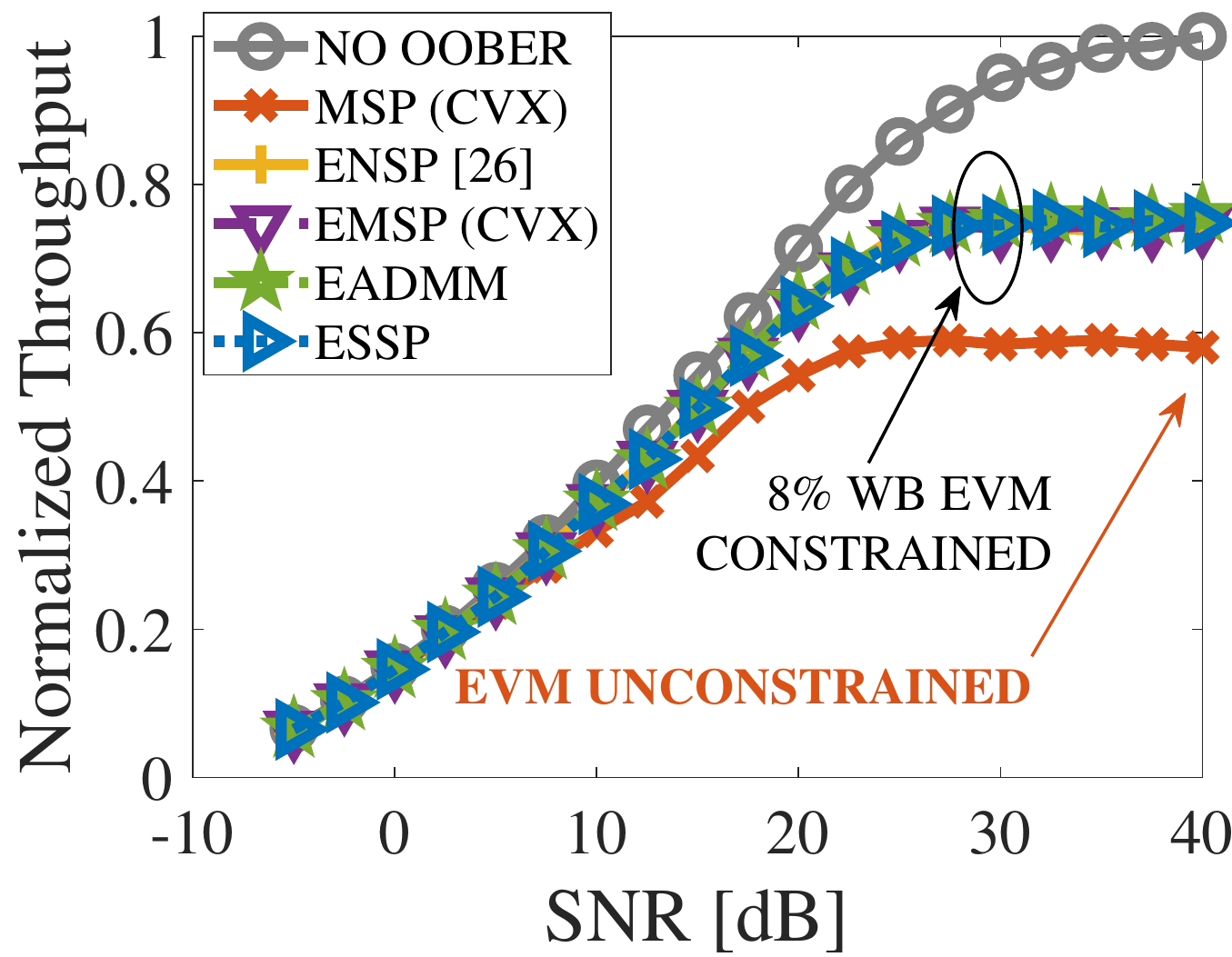}}
  \end{minipage}
  \vspace{-2mm}
  \caption{\footnotesize Performance of wideband $8\%$ \ac{EVM}-constrained spectral precoding including \ac{ESSP} with stopping criterion.}
  \label{fig:fig10__oob_in_band_performance__of_wideband_evm_constrained}
  \vspace{-1mm}
\end{figure*}
\fi

Particularly, Figure~\ref{fig:fig10__oob_in_band_performance__of_wideband_evm_constrained}\subref{fig:fig9_aclr1_vs_iter_sem1_wbevm_final} depicts \ac{ACLR} performance against iterations utilizing Test~$3$ (similar results are observed for other test scenarios). Since the problem was infeasible, the \ac{EVM} constraint penalizes the achievable \ac{ACLR} compared to the \ac{ACLR} achieved by unconstrained \ac{EVM} problem, which can be corroborated by the \ac{EMSP} performance comparing with the performance achieved by the \ac{MSP} (CVX).  Moreover, it can be noticed that all the algorithms have different \ac{ACLR} performance and unfortunately do not meet the minimum $45$~dB \ac{ACLR} requirement for the 1st channel. Still, all the proposed algorithms render the \ac{ACLR} between $44$~dB and $45$~dB. Conspicuously, \ac{ESSP} achieves $44.95$~dB \ac{ACLR}, which is $0.05$ dB below the minimum $45$~dB requirement. Moreover, it can be observed that all the proposed algorithms meet the 3GPP mask requirement as illustrated in the \ac{PSD}---see Fig.~\ref{fig:fig10__oob_in_band_performance__of_wideband_evm_constrained}\subref{fig:fig9_psd_sem1_final_iter}. Therefore, one could employ a relaxed additional spectrum shaping method, for instance, a transmit windowing or filtering that would use less than $9\%$ of the cyclic prefix length to meet the \ac{ACLR} requirement with some implementation margin. In other words, one can combine spectral precoding with other (time-domain) spectrum shaping methods, see, \eg, \cite{Mohamad_PhDThesis2019} and thereby improve \ac{OOBE} performance. For completeness, Fig.~\ref{fig:fig10__oob_in_band_performance__of_wideband_evm_constrained}\subref{fig:fig9_evm_in_percent_vs_sc_sem1_final} depicts the distribution of \ac{EVM} over subcarriers (in frequency-domain) after the final iterations of the respective algorithms.

In Fig.~\ref{fig:fig10__oob_in_band_performance__of_wideband_evm_constrained}\subref{fig:fig9_aclr1_vs_iter_sem1_wbevm_final}, the \ac{ACLR} performance rendered by the heuristic  \ac{ENSP} method is, inadequately low, $41.24$~dB compared to the proposed principled methods that are between $44$~dB and $45$~dB,~\ie, \ac{ENSP} has $3$-$4$~dB loss in \ac{ACLR} but similar wideband \ac{EVM}. Furthermore, \ac{ENSP} fails to meet the mask requirement, cf. Fig.~\ref{fig:fig10__oob_in_band_performance__of_wideband_evm_constrained}\subref{fig:fig9_psd_sem1_final_iter}. Thus, \ac{ENSP} would need an aggressive additional spectrum shaping method, for instance transmit windowing with nearly more than $30\%$-$40\%$ cyclic prefix length, to meet both mask and \ac{ACLR} requirement. Hence, the gain from such spectral precoding vanishes and may be futile to employ in a realistic system. 

In Fig. \ref{fig:fig10__oob_in_band_performance__of_wideband_evm_constrained}, we also manifest the in-band performance of the proposed algorithms considering wideband $8\%$ \ac{EVM} constraint. In particular, we show \ac{BLER} against received \ac{SNR} at the user-equipment for the Test $1$ and Test $2$ since there is no link adaptation in these tests except for the spatial precoding matrix indicator (PMI) adaptation, where all the proposed \ac{EVM}-constrained algorithms have similar performance. Furthermore, we present normalized throughput versus received \ac{SNR} for the Test $3$ with $8$ transmit antennas since this test employs fast link adaptation, \ie, spatial precoder (PMI), rank/layers, and modulation and code rate adaptation (referred to as channel quality indicator in the 3GPP standard) with $10\%$ target \ac{BLER} on long-term average. From these figures, firstly, we observe that the distribution of the \ac{EVM} in the frequency domain matters since meeting minimum $8\%$ wideband \ac{EVM} requirement for 64QAM may not be sufficient to meet the receiver demodulation performance. Secondly, we observe that increasing number of the transmit antennas from $2$ to $8$ for the low-rank scenario does not improve the in-band performance gain. We have observed even with 8 transmit antennas that the \ac{EVM} emanating from spectral precoding is beamformed in the same direction as the signal, similar observations for (large scale) \ac{MIMO} for the \ac{EVM} stemming from other sources, see, \eg,~\cite{Moghadam_distortion_corr_mimo:12,Sienkiewicz_spatially_dependent:14,Larsson:18,Moghadam:18}. The \ac{EVM}-constrained algorithms improve the receiver demodulation performance compared to the \ac{EVM}-unconstrained spectral precoding. However, there is a potential to improve receiver performance while meeting the \ac{OOBE} requirements when the transmitter can be aware of the full channel state information of the receiver(s), which will be addressed in future work.

\vspace{-1.25mm}
\section{Conclusion} \label{sec:conclusion_future_work}

{In this paper we {investigated the problem of spectral precoding with mask compliant properties. The problem is formally posed as some  optimizations} that offer an explicit trade-off between \ac{EVM} and \ac{OOBE} suppression for large-scale MIMO-OFDM systems, but {have the disadvantage of not having} closed-form solutions. To mitigate the complexity of such problems, we proposed a divide-and-conquer approach that decomposes the {spectral precoding problem} into smaller problems {having each} either a closed-form or an efficient solution. More specifically, in the first part of the paper, we developed two computationally efficient algorithms for the \ac{EVM}-unconstrained spectral precoding method, namely 1) \ac{ADMM} and 2) \ac{SSP}---which is derived by utilizing \ac{KKT} conditions, capitalizing on the closed-form solution of rank~1 quadratic form, and a coordinate descent scheme. In the second part---in stark contrast to the {unsystematic} prior art on wideband \ac{EVM}-constrained spectral precoding---we formulated the optimal wideband and frequency-selective constrained \ac{EVM} problems in conjunction with mask-compliant spectral precoding. Subsequently, employing the divide-and-conquer approach, we extended and developed ``hardware-friendly" algorithms for the \ac{EVM}-constrained spectral precoding method, referred to as 1) \ac{EADMM} and 2) \ac{ESSP}---which uses the Douglas-Rachford operator splitting technique to meet an \ac{EVM} constraint while internally utilizes \ac{SSP} for mask constraint. {The proposed algorithms should be treated as vendor-specific transmitter module like filtering, which implies that they are 3GPP NR standard transparent to the transmitter and the receiver.} Numerical results corroborate that the proposed low-complexity algorithms can meet the target EVM constraints and the 3GPP NR mask by suppressing \ac{OOBE}.} 

{This is arguably the first work that proposes computationally affordable \ac{EVM}-constrained and yet mask-compliant spectral precoding.}

\vspace{-2mm}
\begin{appendices}
\section{{Convergence Analysis: \ac{ADMM}---Algorithm~\ref{alg:consensus_admm_msp} and \ac{EADMM}---Algorithm~\ref{alg:consensus_eadmm_msp}}} \label{sec:proof_of_consensus_admm}
{We present the convergence of Algorithm~\ref{alg:consensus_admm_msp} and Algorithm~\ref{alg:consensus_eadmm_msp} in a consolidated manner.}
\begin{theorem}[{Global convergence of consensus \ac{ADMM} or \ac{EADMM}}]  \label{theorem:two_operator_general_consensus_admm}
{Consider an either \ac{EVM} unconstrained \eqref{eqn:msp_problem_with_rank1_constraints} or constrained problem \eqref{eqn:generic_min_problem_with_sum_of_indicator_functions_evm_and_sum_of_rank1_qc} that can be unified as, \ie, 
\begin{equation}  \label{eqn:two_operator_general_consensus_admm}
\begin{aligned} 
&\underset{ \left\{\overline{\mat{Y}}_m \in \Cm^{\NT \times N } \right\},  {\overline{\mat{X}}} \in \Cm^{\NT \times N }}{\text{minimize}} \ 
 \sum_{m=1}^M f_m \left( {\overline{\mat{Y}}_m} \right)  + g\left( {\overline{\mat{X}}} \right)   \\ 
&\hspace{1.1cm} \text{subject to}  \qquad \qquad
 {\overline{\mat{Y}}_m} = {\overline{\mat{X}}}, 
\end{aligned}
\vspace{-1mm}
\end{equation}
where non-differentiable indicator function $f_m \!\coloneqq\! \mathcal{X}_{\mathbfcal{C}_m}$ to the mask constraint set (see \eqref{eqn:ls_msp_constraint_set__mimo_X}) for all $m\!=\!1,\ldots,M$ is closed convex proper (common to both \ac{ADMM} and \ac{EADMM}). The closed convex proper function is either differentiable $g \! \coloneqq \! \left\|\overline{\mat{X}} \! - \!  {\mat{X}} \right\|_F$ (\ac{EVM} unconstrained) or non-differentiable indicator function $g \!\coloneqq\! \mathcal{X}_{\mathbfcal{E}}$ (for either wideband \eqref{eqn:wb_evm_constraint_set} or frequency-selective~\eqref{eqn:fs_evm_constraint_set} \ac{EVM} constraint). Suppose~\eqref{eqn:two_operator_general_consensus_admm} has at least one solution. Now, assume subproblems of \ac{ADMM} and \ac{EADMM} have solutions, and so-called dual residual $\lim_{i \rightarrow +\infty} \left( \overline{\mat{X}}^{\left( i+1 \right)} \! - \! \overline{\mat{X}}^{\left( i \right)} \right) \! = \! 0$ and primal residual 
$\lim_{i \rightarrow +\infty} \allowbreak \left( \overline{\mat{Y}}_m^{\left( i+1 \right)} \! - \! \overline{\mat{X}}^{\left( i+1 \right)} \right) \! = \! 0$, $\forall m \! = \! 1,\ldots,M$, and $\rho \! \in \! \Rm_{>0}$ with some arbitrary initial $\left\{\overline{\mat{X}}^{(0)}, \overline{\mat{Y}}_m^{(0)}, \overline{\mat{Z}}_m^{(0)}\right\}$. {Then,} Algorithm~\ref{alg:consensus_admm_msp} and Algorithm~\ref{alg:consensus_eadmm_msp},}
{at any limit point, $\left\{\overline{\mat{X}}^{\left(i\right)}\right\}$ converge to a \ac{KKT} point~of~\eqref{eqn:two_operator_general_consensus_admm}.}
\begin{proof}
{
Towards the convergence analysis goal of the proposed \ac{ADMM} and \ac{EADMM} algorithms, we follow the proof given in, \eg, \cite{Boyd2011}. Now, we form the (unaugmented) Lagrangian of the unified problem~\eqref{eqn:two_operator_general_consensus_admm} proposed to be solved by \ac{ADMM}/\ac{EADMM} such that
\vspace{-1mm}
\begin{align} \label{eqn:augmented_Lagrangian_admm__generic_form}
    & \hspace{-0.45cm}  \mathcal{L}\!\left(\!\left\{ \overline{\mat{Y}}_m \right\}_{m=1}^M,\!  \overline{\mat{X}},\! \left\{ \overline{\mat{Z}}_m \!\right\}_{m=1}^M\! \right) \nonumber \\
    \! \coloneqq& \! \sum_{m=1}^M f_m \left( {\overline{\mat{Y}}_m} \right) \! + \! g\left( {\overline{\mat{X}}} \right) \! + \! \sum_{m=1}^M 2\Re \left\{ {\trace}\left( \overline{\mat{Z}}_m^{\herm} \left( \overline{\mat{Y}}_m \! - \! \overline{\mat{X}} \right) \right) \right\} , 
    \vspace{-3mm}
\end{align}
where ${\trace}$ is a trace operator.
} {
Using \eqref{eqn:augmented_Lagrangian_admm__generic_form}, then according to \ac{KKT} optimality conditions---see, \eg, \cite{Boyd2004ConvexOptimization}, in particular, stationarity condition at the optimal primal values $\left\{ \overline{\mat{Y}}_m^\star \right\}_{m=1}^M$ and $\overline{\mat{X}}^\star$, and dual variable $\left\{ \overline{\mat{Z}}_m^\star \right\}_{m=1}^M$ satisfy
\begin{align}
\label{eqn:stationarity_condition__v_var_with_dual_var_in_admm_ver1}
    &\mat{0} \in \frac{\partial}{\partial {\left({\overline{\mat{X}}^\star}\right)}^*} \mathcal{L}\left(\left\{ \overline{\vec{Y}}_m^\star \right\}_{m=1}^M,  {\overline{\mat{X}}^\star}, \left\{ \overline{\vec{Z}}_m^\star \right\}_{m=1}^M \right) \nonumber \\ 
    \Longleftrightarrow&
    \mat{0}
    \in  \partial g\left( {\overline{\mat{X}}^\star} \right) -  \sum_{m=1}^M \overline{\mat{Z}}_m^\star \vspace{-3mm} \\
    \label{eqn:stationarity_condition__u_var_with_dual_var_in_admm_ver1}
    &\mat{0} \in \frac{\partial}{\partial {\left({\overline{\mat{Y}}_m^\star}\right)}^*} \mathcal{L}\left(\left\{ \overline{\vec{Y}}_m^\star \right\}_{m=1}^M,  {\overline{\mat{X}}^\star}, \left\{ \overline{\vec{Z}}_m^\star \right\}_{m=1}^M \right) \nonumber \\  
    \Longleftrightarrow&
    \mat{0} 
    \in  \partial f_m\left( {\overline{\mat{Y}}_m^\star} \right)   + \overline{\mat{Z}}_m^\star, 
\end{align}
where, for \ac{EVM} unconstrained, $\partial g\left( {\overline{\mat{X}}^\star} \right) \! = \! \left\{ \left( \overline{\mat{X}}^\star \! - \!  {\mat{X}} \right) \right\}$  and, for \ac{EVM} constrained, $\partial g\left( {\overline{\mat{X}}^\star} \right) \! = \! \partial \mathcal{X}_{\mathbfcal{E}} \! \equiv \! \mathcal{N}_{\mathbfcal{E}}$, where $ \mathcal{N}_{\mathbfcal{E}}$ corresponds to a normal cone (see, \eg, \cite[Chapter~3]{Beck2017}); and $\partial f_m\left( {\overline{\mat{Y}}_m^\star} \right) \! = \! \partial \mathcal{X}_{\mathbfcal{C}_m} \! \equiv \! \mathcal{N}_{\mathbfcal{C}_m}$.}
{
The primal feasibility satisfies
\begin{equation} \label{eqn:primal_feasibility_condition_in_admm}
    \overline{\mat{Y}}_m^\star - \overline{\mat{X}}^\star = \mat{0} \quad \forall m = 1,\ldots,M.
\end{equation}
}
{Towards this end, we analyze the proposed Algorithm~\ref{alg:consensus_admm_msp} and Algorithm~\ref{alg:consensus_eadmm_msp}, which for sufficiently large iterations satisfy the abovementioned optimality conditions using the stated assumptions.}
{
In the first step of (E)\ac{ADMM}, $\overline{\mat{X}}^{\left( i+1 \right)}$ minimizes the update step-1 (cf. \eqref{eqn:scaled_consensus_eadmmm_step1}), \ie,}
{
$\mat{0} \! \in \! \partial g\!\left( \overline{\mat{X}}^{\left( i+1 \right)} \!\right) \! - \! \sum_{m=1}^M \rho \!\left( \! \overline{\mat{Y}}_m^{\left( i  \right)} \! - \! \overline{\mat{X}}^{\left( i+1 \right)} \! + \! \overline{\mat{Z}}_m^{\left( i \right)} \right)$. Using dual variable update \eqref{eqn:scaled_consensus_eadmmm_step3}  and rearranging the terms yields $\mat{0} \! \in \! \partial g\left( \overline{\mat{X}}^{\left( i+1 \right)} \right) \! - \! \sum_{m=1}^M  \rho \overline{\mat{Z}}_m^{\left( i+1 \right)} \! + \! \sum_{m=1}^M  \rho \!\left( \overline{\mat{Y}}_m^{\left( i +1 \right)} \! - \! \overline{\mat{Y}}_m^{\left( i \right)} \right)$.}
{
Now, we state $\left(\!\overline{\mat{Y}}_m^{\left( i+1 \right)} \! - \! \overline{\mat{Y}}_m^{\left( i \right)}\right) \! \rightarrow \! 0$ when $i \! \rightarrow \! \infty$ because of the assumption that the dual residual $\left( \! \overline{\mat{X}}^{\left( i+1 \right)} \! - \! \overline{\mat{X}}^{\left( i \right)} \right) \! \rightarrow \! 0$ and the primal residual $\left( \! \overline{\mat{Y}}_m^{\left( i+1 \right)} \! - \! \overline{\mat{X}}^{\left( i+1 \right)} \right) \! \rightarrow \! 0$. Thus, asymptotically, $\overline{\mat{X}}^{\left( i+1 \right)}$ update satisfies the stationarity condition~\eqref{eqn:stationarity_condition__v_var_with_dual_var_in_admm_ver1}.} 
{Similarly, we have
$\mat{0} \! \in \!  \partial f_m\!\left( \!  \overline{\mat{Y}}_m^{\left( i + 1\right)} \right)  \! + \! \rho \! \left( \!  \overline{\mat{Y}}_m^{\left( i + 1\right)} \! - \! \overline{\mat{X}}^{\left( i + 1\right)} \! + \! \overline{\mat{Z}}_m^{\left( i \right)} \right) \!=\! \partial f_m\!\left( \!  \overline{\mat{Y}}_m^{\left( i + 1\right)} \right) \!+\! \rho \overline{\mat{Z}}_m^{\left( i+1 \right)} $ in the step-2 update (cf. \eqref{eqn:scaled_consensus_eadmmm_step2}), where in the last equality have used the dual variable update \eqref{eqn:scaled_consensus_eadmmm_step3}. Thus, step-2 always satisfies the stationarity condition \eqref{eqn:stationarity_condition__u_var_with_dual_var_in_admm_ver1} for~sufficiently~large~$i$.} 

{Finally, primal feasibility \eqref{eqn:primal_feasibility_condition_in_admm} is satisfied by the assumption $\left( \overline{\mat{Y}}_m^{\left( i+1 \right)} \! - \! \overline{\mat{X}}^{\left( i+1 \right)} \right) \! = \! 0$, when $\!{i \rightarrow +\infty} $.} 
\end{proof}
\end{theorem}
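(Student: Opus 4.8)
The plan is to follow the standard consensus \ac{ADMM} global-convergence argument (see, \eg,~\cite{Boyd2011}) and to show that the iterates asymptotically satisfy the \ac{KKT} optimality conditions of~\eqref{eqn:two_operator_general_consensus_admm}. First I would form the (unaugmented) Lagrangian of the consensus problem with dual variables $\left\{ \overline{\mat{Z}}_m \right\}_{m=1}^M$ associated with the constraints $\overline{\mat{Y}}_m = \overline{\mat{X}}$, and record the three \ac{KKT} conditions: stationarity in $\overline{\mat{X}}$, namely $\mat{0} \in \partial g\!\left(\overline{\mat{X}}^\star\right) - \sum_{m=1}^M \overline{\mat{Z}}_m^\star$; stationarity in each $\overline{\mat{Y}}_m$, namely $\mat{0} \in \partial f_m\!\left(\overline{\mat{Y}}_m^\star\right) + \overline{\mat{Z}}_m^\star$; and primal feasibility $\overline{\mat{Y}}_m^\star - \overline{\mat{X}}^\star = \mat{0}$. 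Here the subdifferential $\partial g$ specializes to the gradient $\overline{\mat{X}}^\star - \mat{X}$ in the \ac{EVM}-unconstrained case and to the normal cone $\mathcal{N}_{\mathbfcal{E}}$ in the constrained case, whereas $\partial f_m$ is the normal cone $\mathcal{N}_{\mathbfcal{C}_m}$ of the rank~1 mask set; handling both instances through the single object $\partial g$ is precisely what lets me treat \ac{ADMM} and \ac{EADMM} in one pass.

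Next I would exploit the first-order optimality of each \ac{ADMM} subproblem. The $\overline{\mat{X}}$-minimization~\eqref{eqn:scaled_consensus_eadmmm_step1} yields $\mat{0} \in \partial g\!\left(\overline{\mat{X}}^{(i+1)}\right) - \sum_{m=1}^M \rho\!\left(\overline{\mat{Y}}_m^{(i)} - \overline{\mat{X}}^{(i+1)} + \overline{\mat{Z}}_m^{(i)}\right)$; substituting the dual update~\eqref{eqn:scaled_consensus_eadmmm_step3} and regrouping gives $\mat{0} \in \partial g\!\left(\overline{\mat{X}}^{(i+1)}\right) - \sum_{m=1}^M \rho\,\overline{\mat{Z}}_m^{(i+1)} + \sum_{m=1}^M \rho\!\left(\overline{\mat{Y}}_m^{(i+1)} - \overline{\mat{Y}}_m^{(i)}\right)$. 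Analogously, the $\overline{\mat{Y}}_m$-minimization~\eqref{eqn:scaled_consensus_eadmmm_step2} combined with~\eqref{eqn:scaled_consensus_eadmmm_step3} gives, with no leftover term, $\mat{0} \in \partial f_m\!\left(\overline{\mat{Y}}_m^{(i+1)}\right) + \rho\,\overline{\mat{Z}}_m^{(i+1)}$, so the $\overline{\mat{Y}}_m$-stationarity already holds exactly at every iteration. Primal feasibility then follows immediately from the standing hypothesis that the primal residual $\overline{\mat{Y}}_m^{(i+1)} - \overline{\mat{X}}^{(i+1)} \to \mat{0}$.

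The crux is the residual term $\sum_{m=1}^M \rho\!\left(\overline{\mat{Y}}_m^{(i+1)} - \overline{\mat{Y}}_m^{(i)}\right)$ appearing in the $\overline{\mat{X}}$-stationarity relation, which must be shown to vanish in the limit. I would establish this by writing $\overline{\mat{Y}}_m^{(i+1)} - \overline{\mat{Y}}_m^{(i)} = \left(\overline{\mat{Y}}_m^{(i+1)} - \overline{\mat{X}}^{(i+1)}\right) + \left(\overline{\mat{X}}^{(i+1)} - \overline{\mat{X}}^{(i)}\right) + \left(\overline{\mat{X}}^{(i)} - \overline{\mat{Y}}_m^{(i)}\right)$ and invoking the assumed vanishing of both the primal residual and the dual residual $\overline{\mat{X}}^{(i+1)} - \overline{\mat{X}}^{(i)} \to \mat{0}$, so that the difference of consecutive $\overline{\mat{Y}}_m$ iterates tends to $\mat{0}$. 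Combining the three items, any limit point of $\left\{\overline{\mat{X}}^{(i)}\right\}$ satisfies all \ac{KKT} conditions simultaneously, which is the claimed convergence to a \ac{KKT} point. I expect the main obstacle to be bookkeeping rather than conceptual: correctly expressing each subproblem's optimality as a subgradient inclusion in the complex matrix variables via Wirtinger-type stationarity $\partial/\partial(\cdot)^{*}$, and applying the normal-cone characterizations of $\partial g$ and $\partial f_m$ consistently across the \ac{EVM}-unconstrained and \ac{EVM}-constrained instantiations.
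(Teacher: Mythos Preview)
Your proposal is correct and follows essentially the same approach as the paper: form the unaugmented Lagrangian, record the \ac{KKT} conditions (with the same subdifferential/normal-cone specializations for the unconstrained and constrained cases), read off the subgradient inclusions from the optimality of each \ac{ADMM} subproblem, substitute the dual update to rewrite them in terms of $\rho\,\overline{\mat{Z}}_m^{(i+1)}$, and use the assumed vanishing of the primal and dual residuals to kill the leftover $\overline{\mat{Y}}_m^{(i+1)}-\overline{\mat{Y}}_m^{(i)}$ term. Your explicit three-term telescoping decomposition of $\overline{\mat{Y}}_m^{(i+1)}-\overline{\mat{Y}}_m^{(i)}$ is a slight sharpening of what the paper merely asserts, but otherwise the arguments coincide.
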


\section{Proof of \eqref{eqn:projection_operator_rank1_quadratic_constraint}: projection onto rank~1 ellipsoid} \label{sec:proof_of_proj_onto_rank1_matrix}
The projection problem reads $\underset{\vec{z}}{\text{minimize}}  \left\| \vec{x} - \vec{z} \right\|_2^2$ $\text{subject to} \
\vec{z}^\herm {\widetilde{\mat{A}}} \vec{z} - b \leq  0$
where ${\widetilde{\mat{A}}} = \vec{u} \vec{u}^\herm$. So, the Lagrangian can be formed as $
L\left(\vec{z}, \mu \right)  = \left\| \vec{x} - \vec{z} \right\|_2^2  + \mu \left( \left| \vec{u}^\herm \vec{z} \right|^2 - b \right)$.
According to the \ac{KKT} conditions \cite{Boyd2004ConvexOptimization}, in particular due to complementary slackness condition, if the given $\vec{x}$ is feasible, that is, fulfils the constraint then $\vec{z} = \vec{x}$ and the Lagrange multiplier would correspond to $\mu = 0$. However, if the given $\vec{x}$ does not fulfil the constraint, then $\mu > 0$ and the inequality constraint can be converted to the equality constraint such that constraint $\left| \vec{u}^\herm \vec{z} \right|^2 = b \Longleftrightarrow \vec{u}^\herm \vec{z} = \sqrt{b} \exp\left( \iota \theta \right)$ for some unknown angle $\theta$. Now, we assume that the angle $\theta$ is known, then the Lagrangian can be expressed as $
L\left(\vec{z}, \mu^\prime \right)  = \left\| \vec{x} - \vec{z} \right\|_2^2  + \mu^\prime \left(  \vec{u}^\herm \vec{z}  - \sqrt{b} \exp\left( \iota \theta \right) \right).
$

Now, according to the \ac{KKT} conditions, we set gradient of $L\left(\vec{z}, \mu^\prime \right)$ with respect to $\vec{z}$ to~$0$, such that $\vec{z} = \vec{x} - \mu^\prime \vec{u}$. Plugging this in the constraint yields $\mu^\prime = \frac{\vec{u}^\herm \vec{x} - \sqrt{b} \exp\left( \iota \theta \right)}{\left\| \vec{u} \right\|_2^2}$. Thus,
$\vec{z} = \vec{x} - \mu^\prime \vec{u} = \vec{x} - \left( \frac{\vec{u}^\herm \vec{x} - \sqrt{b} \exp\left( \iota \theta \right)}{\left\| \vec{u} \right\|_2^2} \right) \vec{u}$. Following \cite{Huang2016Consensus-ADMMProgramming}, the optimal $\theta$ that minimizes the objective $\left\| \vec{x} - \vec{z} \right\|_2^2 = \left\| \left( \frac{\vec{u}^\herm \vec{x} - \sqrt{b} \exp\left( \iota \theta \right)}{\left\| \vec{u} \right\|_2^2} \right) \vec{u} \right\|_2^2$ is an angle of $\vec{x}^\herm \vec{u}$. Thus, $\exp\left( \iota \theta \right) = \frac{\vec{u}^\herm \vec{x}}{\left| \vec{u}^\herm \vec{x} \right|}$. Hence, the projection result follows and given in \eqref{eqn:projection_operator_rank1_quadratic_constraint}.

\section{Derivation of \ac{SSP} Algorithm} \label{sec:derivation_of_ssp_algorithm}

We present the derivations of the \ac{SSP} algorithm, whose pseudo-code is outlined in Algorithm~\ref{alg:solution_ssp_ver1}. In the \ac{SSP} algorithm, the set of Lagrange multipliers $\left\{ \mu_m^{(0)} \right\}_{m=1}^M$ are initialized assuming that the considered $m$-th multiplier is present while others are absent, \ie, boiling down to rank~1 case where the multiplier is computed in closed-form by following \cite{Huang2016Consensus-ADMMProgramming}. 
\begin{lemma} \label{lemma:mu_value_for_rank1_closed_form}
Let $M = 1$, then a closed-form solution to the Lagrange multiplier $\mu_m$ is 
\be \label{eqn:mu_value_for_rank1_closed_form}
    \mu_m = \frac{1}{\lambda_1^m} \left(\left| \vec{a}\left(\nu_m\right)^{\tran} \vec{d}_j \right| \sqrt{\left( \frac{\lambda_1^m}{\gamma_m} \right)} - 1 \right),
\ee
where $\mu \geq 0$ and $\lambda_1^m = \|\vec{a}\left(\nu_m\right)\|_2^2$. 
\end{lemma}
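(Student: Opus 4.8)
The plan is to specialize the \ac{KKT} system of the rank~1 constrained least-squares problem~\eqref{eqn:msp_problem_with_rank1_constraints} to the single-constraint case $M=1$ and to solve for the unique active Lagrange multiplier in closed form. With only one inequality constraint $\overline{\vec{d}}_j^\herm \overline{\mat{A}}_m \overline{\vec{d}}_j \leq \gamma_m$ present, complementary slackness separates two regimes: if the unconstrained minimizer $\vec{d}_j$ already satisfies the constraint then $\mu_m = 0$, and otherwise the constraint is active with $\mu_m > 0$. First I would treat the active regime, in which the constraint holds with equality; there the stationarity condition of the Lagrangian~\eqref{eqn:lagrangian_for_ssp_algo} furnishes the primal point~\eqref{eqn:primal_solution_ssp__d_bar}, which for $M=1$ reduces to $\overline{\vec{d}}_j = ( \mat{I}_N + \mu_m \overline{\mat{A}}_m )^{-1} \vec{d}_j$.

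Next I would exploit the rank~1 factorization $\overline{\mat{A}}_m = \vec{a}(\nu_m)^* \vec{a}(\nu_m)^\trans$ from~\eqref{eqn:definition_rank1_Am_matrix}. Applying the Sherman--Morrison identity to invert this rank~1 perturbation of the identity, and using $\vec{a}(\nu_m)^\trans \vec{a}(\nu_m)^* = \| \vec{a}(\nu_m) \|_2^2 = \lambda_1^m$, the relevant linear functional collapses to the scalar relation $\vec{a}(\nu_m)^\trans \overline{\vec{d}}_j = \vec{a}(\nu_m)^\trans \vec{d}_j / ( 1 + \mu_m \lambda_1^m )$. Since $\overline{\vec{d}}_j$ enters the active constraint only through $| \vec{a}(\nu_m)^\trans \overline{\vec{d}}_j |^2 = \gamma_m$, this substitution turns the entire system into a single real equation in the lone unknown $\mu_m$; isolating $\mu_m$ and selecting the nonnegative root---which exists exactly when $| \vec{a}(\nu_m)^\trans \vec{d}_j |^2 > \gamma_m$, that is, when the constraint is violated---produces the closed form~\eqref{eqn:mu_value_for_rank1_closed_form} claimed in the lemma.

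As an independent verification I would specialize Theorem~\ref{theorem:proj_rank1_quadratic_constraint}: for $M=1$ the minimizer of~\eqref{eqn:msp_problem_with_rank1_constraints} coincides with the projection of $\vec{d}_j$ onto the rank~1 ellipsoid, so matching the rank~1 correction direction $\vec{u}(\vec{u}^\herm \vec{d}_j)$ appearing in~\eqref{eqn:projection_operator_rank1_quadratic_constraint} against the Sherman--Morrison form of $\overline{\vec{d}}_j$ recovers the same multiplier. This cross-check both validates the algebra and pins down the correct branch of the square root.

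The main obstacle, albeit a mild one, is the complex-phase bookkeeping: the constraint controls only the modulus $| \vec{a}(\nu_m)^\trans \overline{\vec{d}}_j |$, so I must confirm that the stationary $\overline{\vec{d}}_j$ keeps $\vec{a}(\nu_m)^\trans \overline{\vec{d}}_j$ phase-aligned with $\vec{a}(\nu_m)^\trans \vec{d}_j$---which it does, because the Sherman--Morrison correction is a \emph{real} multiple of $\vec{a}(\nu_m)^*( \vec{a}(\nu_m)^\trans \vec{d}_j )$---and that the root is chosen so that $1 + \mu_m \lambda_1^m > 0$, guaranteeing $\mu_m \geq 0$ as required for dual feasibility. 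Note that the matrix inversion, which would ordinarily dominate the cost, is sidestepped entirely by Sherman--Morrison, which is precisely what renders the per-frequency initialization in Algorithm~\ref{alg:solution_ssp_ver1} inexpensive.
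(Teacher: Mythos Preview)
Your approach is correct and coincides with the paper's: the paper simply cites \cite[Section~IIIB]{Huang2016Consensus-ADMMProgramming} and ``algebraic manipulations,'' which amounts precisely to writing the \ac{KKT} stationarity condition for the single rank~1 quadratic constraint, applying Sherman--Morrison to $(\mat{I}_N+\mu_m\overline{\mat{A}}_m)^{-1}$, and solving the resulting scalar equation from the active constraint---exactly the route you outline, and the same machinery used in Appendix~\ref{sec:proof_of_proj_onto_rank1_matrix} for Theorem~\ref{theorem:proj_rank1_quadratic_constraint}. Your cross-check against Theorem~\ref{theorem:proj_rank1_quadratic_constraint} is a nice addition not present in the paper; when you carry out the final algebra, note that it actually yields $\mu_m=\tfrac{1}{\lambda_1^m}\bigl(|\vec{a}(\nu_m)^\trans\vec{d}_j|/\sqrt{\gamma_m}-1\bigr)$, so be prepared to reconcile the extra $\sqrt{\lambda_1^m}$ factor in the stated formula.
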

\begin{proof}
Following \cite[Section~IIIB]{Huang2016Consensus-ADMMProgramming} and performing algebraic manipulations, we derive the closed-form solution.
\end{proof}

After initialization, at every given iteration, we apply coordinate descent iterative scheme for the \ac{SSP} algorithm that essentially utilizes the closed-form solution for rank~1 scenario.

Using Lagrangian $L\left(\overline{\vec{d}}_j, \left\{\mu_m\right\}\right)$ \eqref{eqn:lagrangian_for_ssp_algo}, the stationarity condition of the \ac{KKT} conditions, \ie, setting the gradient of $L\left(\cdot\right)$ with respect to $\overline{\vec{d}}_j$ to $0$, yields
\be \label{eqn:closed_form_rank1_quadratic_problem_but_unknown_phasor}
\overline{\vec{d}}_j = \left( \mat{I}_N + \sum \limits_{m=1}^{M} \mu_m {\overline{\mat{A}}}_m \right)^{-1} \vec{d}_j \coloneqq \mat{G}^{-1} \vec{d}_j.
\ee
For brevity, we define 
\begin{align} \label{eqn:definition_of_M_matrix__1}
 \mat{G} \! \equiv \! \mat{G}\left(\mu_m \right) \!
 \coloneqq \! \left( \! \mat{I}_N \! + \! \sum \limits_{m=1}^{M} \mu_m \overline{\mat{A}}_m \! \right) 
 \! = \! \left( \mat{G}_{\backslash m} \! + \! \mu_m {\overline{\mat{A}}}_m \right) ,
\end{align} 
where $\mat{G}_{\backslash m} \coloneqq \mat{I}_N + \sum \limits_{n=1 \backslash m}^{M} \mu_n {\overline{\mat{A}}}_n$.

The matrix inversion of $\mat{G}(\mu_m)$ utilizing Sherman-Morrison formula \cite{Meyer2000MatrixAlgebra} is
\begin{align} \label{eqn:rank_one_matrix_inv}
    \mat{G}(\mu_m)^{-1} \!= \!\left( \mat{G}_{\backslash m}^{-1}  \! -  \! \frac{\mu_m \mat{G}_{\backslash m}^{-1} \overline{\mat{A}}_m \mat{G}_{\backslash m}^{-1}}{1 \! + \! \mu_m \vec{a}\left(\nu_m\right)^{\trans} \mat{G}_{\backslash m}^{-1} \vec{a}\left(\nu_m\right)^{*} } \right),
\end{align}
and noting the fact that the matrix $\mat{G}(\mu_m)$ is a sum of rank one matrices then the matrix inversion can be performed iteratively, cf. Lemma \ref{lemma:matrix_inversion_sum_of_rank_one_matrices}. In order to obtain the set of $\left\{\mu_m\right\}$ Lagrange multipliers, we employ coordinate descent scheme \cite{Tseng2001Convergence1}. Let us say we compute $\mu_m$ in a given cycle, then we fix other multipliers $\mu_n \ \forall n$ but excluding $\mu_m$. We propose to compute and update all the $M$ Lagrange multipliers cyclically. In a given iteration cycle, we compute $\mu_m$ Lagrange multiplier and fixing other multipliers by utilizing the complementary slackness condition of the \ac{KKT} conditions and the dual feasibility condition: 1) If $\mu_m = 0$, then $\overline{\vec{d}}_j =  \mat{G}_{\backslash m} \vec{d}_j$ (cf. \eqref{eqn:definition_of_M_matrix__1}), and 2) If $\mu_m > 0$, then the inequality constraint should be an equality constraint. By plugging~\eqref{eqn:closed_form_rank1_quadratic_problem_but_unknown_phasor} and \eqref{eqn:definition_of_M_matrix__1} in the constraint such that
\begin{align}
\label{eqn:primal_solution__kkt__complementary_slackness__3}
&\overline{\vec{d}}_j^{\rm H} \overline{\mat{A}}_m  \overline{\vec{d}}_j \!-\! \gamma_m  \! = \! \overline{\vec{d}}_j^{\rm H} \vec{a}\left(\nu_m\right)^{\rm *}\vec{a}\left(\nu_m\right)^{\trans} \overline{\vec{d}}_j - \gamma_m  \! = \! 0 \nonumber \\ \Longleftrightarrow&  \vec{a}\left(\nu_m\right)^{\trans} \mat{G}\left(\mu_m\right)^{-1} \vec{d}_j \!=\! \sqrt{\gamma_m} \exp\left(\iota \phi\right) ,
\end{align}
where $\phi$ is a free parameter. We now obtain the Lagrange multiplier $\mu_m$ from \eqref{eqn:primal_solution__kkt__complementary_slackness__3} utilizing \eqref{eqn:rank_one_matrix_inv} after some algebraic manipulation 
\begin{align}
&\vec{a}\left(\nu_m\right)^{\trans} \left( \mat{G}_{\backslash m} + \mu_m {\overline{\mat{A}}}_m \right)^{-1} \vec{d}_j = \sqrt{\gamma_m} \exp\left(\iota \phi\right)\\
 \label{eqn:primal_solution__kkt__complementary_slackness__4}
\Longleftrightarrow&\left( \vec{a}\left(\nu_m\right)^{\trans} \! \mat{G}_{\backslash m}^{-1} \vec{d}_j \!-\! \mu_m \frac{\vec{a}\left(\nu_m\right)^{\trans} \!\mat{G}_{\backslash m}^{-1} {\overline{\mat{A}}}_m \mat{G}_{\backslash m}^{-1} \vec{d}_j}{1 \! + \! \mu_m \vec{a}\left(\nu_m\right)^{\trans} \mat{G}_{\backslash m}^{-1} \vec{a}\left(\nu_m\right)^{*} } \right) \nonumber \\  
 &= \sqrt{\gamma_m} \exp\left(\iota \phi\right) .
\end{align}

For brevity, let $	\alpha_1 \! \coloneqq \! \vec{a}\left(\nu_m\right)^{\trans} \mat{G}_{\backslash m}^{-1} \vec{d}_j$ and $\alpha_2 \! \coloneqq \! \vec{a}\left(\nu_m\right)^{\trans} \mat{G}_{\backslash m}^{-1} \vec{a}\left(\nu_m\right)^{*}$ such that \eqref{eqn:primal_solution__kkt__complementary_slackness__4} can be rewritten as
\begin{align*}
&\alpha_1 - \frac{\mu_m \alpha_1 \alpha_2}{1 + \mu_m \alpha_2} = \frac{\alpha_1 }{1 + \mu_m \alpha_2} = \sqrt{\gamma_m} \exp\left(\iota \phi\right)  \\
\Rightarrow& \mu_m = \Re \left\{ \frac{\alpha_1 \exp\left(-\iota \phi\right) - \sqrt{\gamma_m}}{\sqrt{\gamma_m} \ \alpha_2} \right\}\ .
\end{align*}

\iftrue
\section{Useful Lemma}
\begin{lemma}[Matrix inversion with sum of rank one matrices] \label{lemma:matrix_inversion_sum_of_rank_one_matrices}
The matrix inversion of $\mat{A}_{r+1} = \mat{G} + \sum_{i=1}^{r} \mat{H}_i$ with sum of $r$ rank one matrices $\mat{H}_i$ can be obtained iteratively
$\mat{A}_{k+1}^{-1} = \mat{A}_{k}^{-1} - g_k \mat{A}_{k}^{-1}\mat{H}_{k} \mat{A}_{k}^{-1} \ \forall k=1,\ldots,r$,
where $\mat{A}_{1} = \mat{G}$ and $g_k = \nicefrac{1}{\left(1 + \trace\left(\mat{A}_{k}^{-1} \mat{H}_{k}\right)\right)}$.
\end{lemma}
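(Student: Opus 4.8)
The plan is to recognize the stated iteration as a repeated application of the Sherman--Morrison rank-one update formula. First I would write each rank one matrix in outer-product form, $\mat{H}_i = \vec{u}_i \vec{v}_i^\herm$ for suitable $\vec{u}_i, \vec{v}_i \in \Cm^{N \times 1}$, and introduce the partial sums $\mat{A}_{k+1} \coloneqq \mat{A}_k + \mat{H}_k$ with $\mat{A}_1 = \mat{G}$, so that $\mat{A}_{r+1} = \mat{G} + \sum_{i=1}^r \mat{H}_i$ exactly as in the statement. The claim then reduces to showing that each single-step passage from $\mat{A}_k^{-1}$ to $\mat{A}_{k+1}^{-1}$ is a rank-one correction of the asserted form.

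For the single step, I would invoke Sherman--Morrison,
\begin{align*}
\left( \mat{A}_k + \vec{u}_k \vec{v}_k^\herm \right)^{-1} = \mat{A}_k^{-1} - \frac{ \mat{A}_k^{-1} \vec{u}_k \vec{v}_k^\herm \mat{A}_k^{-1} }{ 1 + \vec{v}_k^\herm \mat{A}_k^{-1} \vec{u}_k }.
\end{align*}
The numerator matches the claimed $\mat{A}_k^{-1} \mat{H}_k \mat{A}_k^{-1}$ directly, since $\mat{H}_k = \vec{u}_k \vec{v}_k^\herm$. For the scalar $g_k$, the key observation is that $\trace\left( \mat{A}_k^{-1} \mat{H}_k \right) = \trace\left( \mat{A}_k^{-1} \vec{u}_k \vec{v}_k^\herm \right) = \vec{v}_k^\herm \mat{A}_k^{-1} \vec{u}_k$ by the cyclic invariance of the trace, so the Sherman--Morrison denominator equals $1 + \trace\left( \mat{A}_k^{-1} \mat{H}_k \right)$ and hence $g_k = \nicefrac{1}{\left(1 + \trace\left(\mat{A}_k^{-1}\mat{H}_k\right)\right)}$. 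This is precisely the stated one-step update, written in a form that is manifestly independent of the particular factorization $\mat{H}_k = \vec{u}_k \vec{v}_k^\herm$.

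Finally I would close the argument by induction on $k$: assuming $\mat{A}_k^{-1}$ is available, the single-step identity produces $\mat{A}_{k+1}^{-1}$, and iterating from $\mat{A}_1 = \mat{G}$ up to $k = r$ yields $\mat{A}_{r+1}^{-1}$. The one point that genuinely needs care is invertibility, which I expect to be the main obstacle: Sherman--Morrison requires $\mat{A}_k$ nonsingular and the denominator $1 + \trace\left(\mat{A}_k^{-1}\mat{H}_k\right) \neq 0$ at every step. In the use made of the lemma in Algorithm~\ref{alg:solution_ssp_ver1} this is automatic, because there $\mat{G} = \mat{I}_N$ and each $\mat{H}_i = \mu_i \overline{\mat{A}}_i$ with $\mu_i \geq 0$ and $\overline{\mat{A}}_i = \vec{a}\left(\nu_i\right)^* \vec{a}\left(\nu_i\right)^\trans \succeq \mat{0}$ Hermitian positive semidefinite; every partial sum therefore dominates $\mat{I}_N$ and is positive definite, so each $\mat{A}_k$ is invertible and each denominator is real and at least one. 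Securing this positive-definiteness observation is what guarantees the iteration is well defined; the remaining algebra is a direct consequence of Sherman--Morrison and the trace identity.
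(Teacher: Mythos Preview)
Your proposal is correct and follows essentially the same route as the paper: the paper simply cites Miller's rank-one update result, which is exactly the Sherman--Morrison formula rewritten with the trace in place of $\vec{v}_k^\herm \mat{A}_k^{-1}\vec{u}_k$, and then iterates. Your derivation is more self-contained, and your added remark that in Algorithm~\ref{alg:solution_ssp_ver1} each partial sum dominates $\mat{I}_N$ (so all intermediate inverses exist and all denominators are at least one) is a useful observation that the paper leaves implicit.
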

\begin{proof}
Let $\mat{G}$ and $\mat{G} + \mat{H}$ be invertible matrices, and $\mat{H} = \sum_{i=1}^{r} \mat{H}_i$ with $\rank{\left(\mat{H}_i\right)}= 1$. 
Let $\mat{A}_{k+1} = \mat{G} + \sum_{i=1}^{k} \mat{H}_i$ be invertible. By initializing $\mat{A}_{1} = \mat{G}$, then utilizing \cite{Miller1981OnMatrices} result, we achieve
$\mat{A}_{k+1}^{-1} = \mat{A}_{k}^{-1} - g_k \mat{A}_{k}^{-1}\mat{H}_{k} \mat{A}_{k}^{-1} \ , \ {\mathrm{ for }} \ k = 1,\ldots,r$ where $g_k = \nicefrac{1}{\left(1 + \trace\left(\mat{A}_{k}^{-1} \mat{H}_{k}\right)\right)}$. 
\end{proof}
\fi

\end{appendices}

\section*{Acknowledgement}
This paper is dedicated to the memory of Prof. Peter H\"{a}ndel who prematurely passed away.

\bibliographystyle{IEEEtran}
\bibliography{IEEEabrv,ref}


\end{document}